\documentclass[sigconf,nonacm]{aamas}

\usepackage{balance} %
\usepackage{multirow}
\usepackage{tikz}
\usetikzlibrary{arrows.meta, decorations.pathmorphing,bending,calc, decorations.pathreplacing}
\usepackage{enumitem}
\usepackage{thmtools, thm-restate}
\usepackage{subcaption}
\usepackage[title]{appendix}
\usepackage{etoolbox}
\usepackage{xcolor}
\usepackage{algorithm}
\usepackage[noend]{algpseudocode}
\usepackage{pifont}
\usepackage{pict2e}
\usepackage{orcidlink}

\makeatletter
\gdef\@copyrightpermission{
  \begin{minipage}{0.2\columnwidth}
   \href{https://creativecommons.org/licenses/by/4.0/}{\includegraphics[width=0.90\textwidth]{by}}
  \end{minipage}\hfill
  \begin{minipage}{0.8\columnwidth}
   \href{https://creativecommons.org/licenses/by/4.0/}{This work is licensed under a Creative Commons Attribution International 4.0 License.}
  \end{minipage}
  \vspace{5pt}
}
\makeatother

\setcopyright{ifaamas}
\acmConference[AAMAS '25]{Proc.\@ of the 24th International Conference
on Autonomous Agents and Multiagent Systems (AAMAS 2025)}{May 19 -- 23, 2025}
{Detroit, Michigan, USA}{Y.~Vorobeychik, S.~Das, A.~Nowé  (eds.)}
\copyrightyear{2025}
\acmYear{2025}
\acmDOI{}
\acmPrice{}
\acmISBN{}

\acmSubmissionID{27}

\title{The Strong Core of Housing Markets with Partial Order Preferences}

\author{Ildik\'o Schlotter%
}
\affiliation{
  \institution{HUN-REN Centre for Economic and Regional Studies}
  \city{Budapest}
  \country{Hungary}
  }
\email{schlotter.ildiko@krtk.hun-ren.hu}

\author{Lydia Mirabel Mendoza-Cadena%
}
\affiliation{
  \institution{MTA-ELTE Matroid Optimization Research Group, Department of Operations Research, \\
  E\"otv\"os Lor\'and University}
  \city{Budapest}
  \country{Hungary}}
\email{lyd21@student.elte.hu}

\begin{abstract}
We study the strong core of housing markets when agents' preferences over houses are expressed as partial orders. We provide a structural characterization of the strong core, and propose an efficient algorithm that finds an allocation in the strong core or decides that it is empty, even in the presence of forced and forbidden arcs. 
The algorithm satisfies the property of group-strategyproofness. 
Additionally, we show that certain results known for the strong core in the case when agents' preferences are weak orders can be extended to the setting with partial order preferences; among others, we show that the strong core in such housing markets satisfies the property of respecting improvements. 
\end{abstract}

\keywords{housing market; strong core; partial orders; forced and forbidden arcs; strategyproofness; property of respecting improvement}

\newif\ifshort
\shortfalse
         
\colorlet{undomcolor}{black}
\definecolor{domcolor}{RGB}{68,118,170}%
\definecolor{Xcolor}{RGB}{34,136,51} %
\definecolor{Xprimecolor}{RGB}{170,51,119}%

\algrenewcommand\algorithmicrequire{\textbf{Input:}}
\algrenewcommand\algorithmicensure{\textbf{Output:}}

\newenvironment{varalgorithm}[1]
  {\algorithm}
  {\endalgorithm}

\setlist{itemsep=0pt}

\def\StrCoreFA{\textsc{Strong Core with Forbidden Arcs}}
\def\StrCoreFFA{\textsc{Strong Core with Forbidden and Forced Arcs}}
\def\S{\mathcal{S}}
\def\T{\mathcal{T}}
\def\R{\mathcal{R}}
\def\SC{\mathcal{SC}}
\def\SCFA{\mathsf{SCFA}}
\def\NN{T^\star}
\def\allNN{\widetilde{T}}

\def\P{$\mathsf{P}$}
\def\NP{$\mathsf{NP}$}
\newcommand{\EE}{\makebox{$\raisebox{.05em}{\makebox[0pt][l]{%
   $\exists\hspace{-.517em}\exists\hspace{-.517em}\exists$}}%
   \exists\hspace{-.517em}\exists\hspace{-.517em}\exists\,$}}
\def\strongcheckmark{{\footnotesize\ding{52}$^{\phantom{i}\text{\bf \EE}}$}}
\def\checkmark{{\footnotesize\ding{52}}}
\def\xmark{{\footnotesize\ding{55}}}

\newcommand{\opentriangle}{%
  \raisebox{0.2pt}{\makebox[0.77778em]{%
    \setlength{\unitlength}{0.6em}%
    \linethickness{0.4pt}\roundjoin
    \begin{picture}(1,1)
    \polygon(0,0)(1,0)(1,1)
    \end{picture}%
  }}%
}

\newenvironment{proofsketch}{\par
  \pushQED{\hfill \opentriangle}%
  \normalfont \topsep 12pt\relax
  \trivlist
  \item[\indent\hskip\labelsep
        {\scshape Proof sketch.}]\ignorespaces
}{%
  \popQED\endtrivlist
}

\newenvironment{claimproof}{\par
  \pushQED{$\hfill \lhd$}%
  \normalfont \topsep 6pt\relax
  \trivlist
  \item[\indent\hskip\labelsep
        {\scshape Claim proof.}]\ignorespaces
}{%
  \popQED\endtrivlist
}

\makeatletter
\newcommand{\leqnomode}{\tagsleft@true\let\veqno\@@leqno}
\makeatother

\theoremstyle{acmplain}
\newtheorem{claim}{Claim}

\theoremstyle{acmdefinition}
\newtheorem{remark}{Remark}

\begin{document}

\pagestyle{fancy}
\fancyhead{}

\maketitle 

\section{Introduction}

A Shapley-Scarf housing market involves a set of agents, each endowed with exactly one unit of some indivisible good -- a \emph{house} -- and having preferences over the houses owned by other agents in the market. Agents are assumed to trade among themselves without monetary transfers; hence, the outcome in such a market is an \emph{allocation} that assigns to each agent exactly one house, possibly its own. Housing markets have been extensively studied in the fields of economics and computer science since the seminal work of Shapley and Scarf in 1974~\cite{shapley-scarf-1974}. Their most prominent motivation comes from kidney-exchange programs~\cite{roth-sonmez-unver-2005,biro-kidney-exchange-survey,biroetal2021}, but applications include various exchange markets\footnote{For an online exchange market for board games, see  \url{https://boardgamegeek.com/wiki/page/Math_Trades}.} 
ranging from time-sharing markets~\cite{WangKrishna} and time banks~\cite{ACEE21} to on-campus housing~\cite{adbulkadiroglu-sonmez}.

Two prominent solution concepts widely investigated in connection to housing markets are the \emph{core} and the \emph{strong core}:
the \emph{core} of a housing market contains allocations from which no coalition of agents can deviate so that each of them strictly improves their situation by trading among themselves, 
while
the \emph{strong core} (or \emph{strict core}) contains allocations that do not admit a coalition of agents who are able to weakly improve their situation, with at least one agent strictly improving as a result of the deviation. 
Shapley and Scarf described the Top Trading Cycle (TTC) mechanism, attributed to David Gale, that always finds an allocation in the core~\cite{shapley-scarf-1974}. 
Roth and Postlewaite showed that if agents' preferences are strict, then the TTC mechanism always returns the unique allocation in the strong core~\cite{roth-postlewaite}. 
Roth proved that in such a model the TTC is \emph{strategyproof}, meaning that no agent can improve its outcome under TTC by misreporting its preferences~\cite{Roth82}. 
Later, Bird~\cite{Bird84} showed that TTC under strict preferences is even \emph{group-strategyproof}, that is, no set of agents can improve their situation by misreporting their preferences in a coordinated fashion. 

If agents can be indifferent between houses, then both the core and the strict core loose many of their appealing properties.
Although the TTC mechanism always returns an allocation in the core even when agents' preferences over the  houses are \emph{weak orders} (i.e., linear orders containing ties)~\cite{shapley-scarf-1974},
the core in such markets may contain multiple allocations. 
As opposed to the core, the strong core can become empty if agents' preferences are weak orders. Quint and Wako characterized the strong core in such housing markets, and gave a polynomial-time algorithm that either finds an allocation in the strong core or concludes that the strong core is empty~\cite{WakoQuint}. 
Several researchers have introduced generalizations of the TTC in search of a mechanism that maintains the desirable properties of TTC even under the presence of indifference in the market; this line of research has yielded algorithms that efficiently find allocations in the core with additional properties such as Pareto-optimality, while ensuring also strategyproofness~\cite{Alcalde-Unzu-Mollis,Jaramillo-Manjunath,Aziz-deKeijzer,Plaxton-2013,Saban-Sethuraman-2013}.

Despite the significant effort to deal with weakly ordered preferences in housing markets, 
there is still %
 limited knowledge about the case when agents' preferences can be  \emph{partial orders}, a generalization of weak orders. Partial orders arise in contexts where agents may regard two alternatives as \emph{incomparable}; 
as opposed to indifference in weak orders, incomparability is not necessarily transitive. Despite the clear motivations for studying partial orders in the context of kidney exchange, as we explain it below, there is scant literature on housing markets that allow agents to express their preferences using partial orders.
A recent study~\cite{SBF2024core} addresses questions about the core in such a model, but we are not aware of any research concerning the strong core in a setting where 
preferences can be partial orders.
This paper aims 
to address this gap in the literature.

\smallskip
\noindent
{\bf Motivation.}
Partial orders naturally arise in real-life applications where agents compare alternatives based on multiple unrelated criteria. An important example of this occurs in kidney exchange, where patients exchange the incompatible organs donated by their own donors; such a situation can be modelled as a housing market where agents are incompatible patient--donor pairs. The preferences of patients over possible kidney transplants are primarily based on two factors: 
the age of the donor and the HLA compatibility between donor and recipient. In fact, in the UK kidney exchange program, these two factors are the only ones for which patients can set a threshold of acceptance~\cite{biroetal2021}. The unrelated nature of these two factors gives rise to a partial ordering: we can only assert that a certain graft is better than another if it is more advantageous regarding both of these factors,
so a graft from a younger donor but with a worst HLA-matching is incomparable with a graft from an older donor but with a better HLA-matching. Such incomparability is not necessarily transitive; hence, patients' preferences over possible transplants is only a partial, and not a weak, ordering.

Another source of partial order preferences in kidney exchange (or, in fact, in many other applications) is the fact that small differences between two alternatives may not be significant enough to influence the preferences, but slight differences may add up to a more substantial contrast. For example, when the age difference between two possible donors is, say, less than a year, then a recipient might regard the corresponding grafts as equivalent; however, such incomparability is not transitive:  donors $A$ and~$B$, as well as $B$ and~$C$, might have an age difference less than a year, but $A$ might be almost two years older than~$C$. Such preferences can be described as so-called \emph{semi-orders}, a special case of partial orders.

Another application where partial orders 
arise naturally is 
a project aimed at promoting social inclusion of individuals with intellectual disabilities, %
where participants are intellectually disabled people who have to be assigned to various job positions, namely, to computer-numerical-controlled (CNC) machines in a facility~\cite{wiesner2014preconditions}.\footnote{Strictly speaking, this application does not directly concern housing markets; however, assuming a dynamic setting where  certain participants are already assigned to job positions while new participants may also enter the market, we quickly arrive at a situation that can be modelled as a housing market.} The preferences of the participants over the available positions are determined by multiple criteria, based on their abilities and the requirements of the job, and thus induce a partial ordering~\cite{bruggemann2017matchingCopeland}.
See~\cite{fattore2017BookPartial} for further applications where partial orders appear.

\smallskip
\noindent
{\bf Our contribution and organization.} We consider the strong core of housing markets where agents' preferences are expressed as partial orders, which 
is a setting that
has not been studied before. 

After providing the preliminaries in Section~\ref{sec:prelim}, 
we explain why the Quint--Wako characterization fails when agents' preferences are partial---and not weak---orders (Section~\ref{sec:QW-fails}), and present an alternative characterization of the strong core in such a setting (Section~\ref{sec:mychar}). 
Even though our characterization does not immediately yield an algorithm for finding an allocation in the strong core (if exists), in Section~\ref{sec:finding-strongcore} we propose a polynomial-time algorithm for this problem. Recall that the Quint--Wako algorithm~\cite{WakoQuint} finds an allocation in the strong core when agents' preferences are weak orders, or detects that the strong core is empty; hence, our algorithm can be thought of as a generalization of this result to the setting with partial order preferences.
In fact, our algorithm can handle forbidden and forced arcs in the graph underlying the market: it can find an allocation in the strong core that contains all forced arcs but no forbidden arcs, if such an allocation exists; see Sections~\ref{sec:algo}--\ref{sec:extension-to-forced-arcs}.
We prove that our algorithm is group-strategyproof (Section~\ref{sec:strategyproof}), and we also examine 
its ability to enumerate all allocations in the strong core (Section~\ref{sec:algo-prop}). 

In Section~\ref{sec:properties}, we investigate certain properties of the strong core. %
The main result of Section~\ref{sec:properties} is that the strong core satisfies the property of \emph{respecting improvement}: assuming that the outcome of a market is an allocation in the strong core, we show that an agent can only (weakly) benefit from an increase in the ``value'' of its house, i.e., from its house becoming more preferred according to the remaining agents; this observation is a generalization of the analogous result for weakly ordered preferences by Bir\'o et al.~\cite{BKKV-mor}. 

Section~\ref{sec:conclusion} contains our conclusions.

We defer some of our results and proofs to
\ifshort the full version of our paper~\cite{schlotter-mendoza-cadena_arxiv_fullversion};
the symbol~$\star$ marks statements with deferred proofs.
\else
the appendix; statements with deferred proofs are marked with the symbol~$\star$.
\fi

\smallskip
\noindent
{\bf Related work.} 
We have already mentioned in the introduction the series of papers containing prior work on the core and the strong core of housing markets, a line of research into which our paper fits smoothly. The closest works to our study are the papers by Quint and Waco~\cite{WakoQuint} and by Schlotter et al.~\cite{SBF2024core}. 
See Table~\ref{tab:related_work} for a comparison of the results in these works.

\begin{table}[t]
\caption{Summary of results on the problem of finding allocations in the core and strong of housing markets.
The phrase ``f/f arcs'' refers to problem variants with forced or forbidden arcs. ``\NP-h'' stands for \NP-hardness. If the problem is polynomial-time solvable, we write either ``in $\mathsf{P}$'' or the name of the corresponding algorithm. 
}
\label{tab:related_work}
    \centering
    \begin{tabular}{@{}l@{\hspace{2pt}}@{\hspace{2pt}}c@{\hspace{2pt}}c@{\hspace{2pt}}c@{}}
        Problem& 
        Strict orders &
        Weak orders & 
        Partial orders \\ \toprule
        Core        
        & TTC \cite{shapley-scarf-1974} 
        & TTC \cite{shapley-scarf-1974}
        & TTC \cite{shapley-scarf-1974,SBF2024core} \\
        Core, f/f arcs
        &$\mathsf{NP}$-h~\cite{SBF2024core} 
        &$\mathsf{NP}$-h~\cite{SBF2024core} 
        &$\mathsf{NP}$-h~\cite{SBF2024core} \\
        Strong core
        & TTC~\cite{roth-postlewaite} 
        & Quint--Waco~\cite{WakoQuint}
        & {\bf Thm~\ref{thm:alg-SCFA-correct}} \\
        Strong core, f/f arcs 
        & TTC~\cite{roth-postlewaite} 
        & in \P~\cite{SBF2024core}   
        & {\bf Thm~\ref{thm:alg-SCFA-correct}}   
    \end{tabular}
\end{table}

Besides the literature on housing markets, our paper also builds on various ideas that have been explored in the broader context of matching under preferences.

    First, we emphasize that even though we are not aware of any previous work on housing markets with partial order preferences with the exception of the work by Schlotter et al. on the core~\cite{SBF2024core}, partial order preferences have been widely studied in the area of matching under preferences, appearing already in early works on stable matchings~\cite{Fleiner-Irving-Manlove,Manlove-DAM02,IMS-stacs2003,FleinerIM-TCS,RastegariCILB-EC13}.

    Second, problems involving edge restrictions have also been extensively studied for stable matching problems, see e.g.~\cite{Knuth1976,FleinerIM-TCS,Cseh-Manlove-SR-2016,Dias-2003,Cseh-Heeger-2020}. In the context of housing markets, the question whether a given agent can obtain a certain house (or can avoid ending up with a certain house) has been addressed in~\cite{SBF2024core}. The analogous problems have also been studied in the house allocation model where a set of objects needs to be allocated to a set of agents (without initial endowments) using serial dictatorship~\cite{SabanSethuraman-MOR15,AzizBB-WINE13,AzizMestre-14,AzizWX-ijcai15,CFS-ADT2017}.
    
    Third, questions about the property of respecting improvement have been already studied in the context of housing markets~\cite{roth-sonmez-unver-2005,SBF2024core}, but this line of research has its roots in a study by Balinski and S\"onmez~\cite{balinski-sonmez} on college admission. The authors investigated how an improvement in a student's test scores may have undesirable effects. Their results inspired further studies of the same flavor about school choice and other allocation problems~\cite{hatfield-kojima-narita,sonmez-switzer,klaus-klijn-RI}.
    
    Fourth, the topic of (group-)strategyproofness has a vast literature in the area of social choice, and has been explored in connection to the TTC mechanism and its generalizations~\cite{Roth82,Bird84,Ma94,Alcalde-Unzu-Mollis,Jaramillo-Manjunath,Plaxton-2013,Saban-Sethuraman-2013}; more recent results investigate strategyproofness in relation to different extensions of housing markets~\cite{HongPark-22,YDTLY-AAMAS22}, and the  interesting topic of \emph{obvious strategyproofness} in connection to TTC~\cite{Troyan19,MandalRoy22}.

\section{Preliminaries
\label{sec:prelim}}
For a positive integer~$\ell \in \mathbb{N}$, we use the notation $[\ell]=\{1,2,\dots,\ell\}$.

\smallskip 
\noindent
{\bf Directed graphs.}
We assume the reader is familiar with basic concepts from graph theory; see 
\ifshort the full version 
\else Appendix~\ref{app:prelim}
\fi 
for a detailed description of all the definitions we rely on.
Given a vertex $v$ in a directed graph~$D=(V,E)$, we let $\delta^{\text{in}}(v)=\{(u,v):(u,v) \in E\}$ and $\delta^{\text{out}}(v)=\{(v,u):(v,u) \in E\}$ denote the set of \emph{incoming} and \emph{outgoing} arcs of~$v$, respectively. 
Given a set~$F \subseteq E$ of arcs, we extend this notation by letting $\delta^{\text{in}}_F(v)=\delta^{\text{in}}(v) \cap F$
and $\delta^{\text{out}}_F(v)=\delta^{\text{out}}(v) \cap F$.
Given a vertex set $V' \subseteq V$ and an arc set $E' \subseteq E$, we let $D[V']$ denote the subgraph \emph{induced} by~$V'$ and $D[E']$ for the subgraph \emph{spanned} by~$E'$.
A notion of central importance for our purposes is the following: we call a set~$S \subseteq V$ an \emph{absorbing set} in~$D$, if $S$ is a strongly connected component of~$D$ %
that is not left by any arc.
In other words, when contracting~$S$, the newly introduced vertex corresponding to~$S$ has no outgoing arcs.

\smallskip 
\noindent
{\bf Partial orders.}
A partial ordering over a set~$S$ is a binary relation~$\succ$ over~$S$ that is irreflexive (i.e., $s \not\succ s$ for all $s \in S$), antisymmetric (i.e., if $s \succ r$, then $r \not\succ s$), and transitive (i.e., $s \succ r$ and $r \succ q$ implies $s \succ q$).
Given a partial ordering~$\succ$ over~$S$, two elements $s$ and~$r$ of~$S$ are  \emph{comparable} if $s \succ  r$ or $r \succ s$; otherwise, they are \emph{incomparable}, denoted by $s \sim r$. We write $s \succeq r$ for $r  \not\succ s$.

Partial orders are a generalization of weak orders. A partial ordering over~$S$ is a \emph{weak order} if and only if the incomparability relation~$\sim$ is an equivalence relation: it is reflexive, symmetric, %
and transitive. Note that the first two conditions always hold, hence the transitivity of the incomparability relation is the key property for a partial order to be a weak order. For example, $S = \{ s,r,q \}$ with $s \succ q$ but $s \sim r $ and $r \sim q$ is not a weak order, because $s \not\sim q$.

\smallskip 
\noindent
{\bf Housing markets.}
A \emph{housing market} is a pair $(N,\{\succ_a:a \in N\})$ where $N$ is the set of \emph{agents} %
and $\succ_a$ is a partial order over $N$ for each $a \in N$, representing the preferences of agent~$a$. 
Note that we identify each house with its initial owner and thus express agents' preferences over  houses as partial orders over~$N$.
Recall that 
we write $b \sim_a c$ if $b \not\succ_a c$ and  $c \not\succ_a b$ both hold, and $b \succeq_a c$ means that $c \not\succ_a b$.
We say that $a$ \emph{prefers}~$c$ to~$b$ if $c \succ_a b$, and $a$ \emph{weakly prefers}~$c$ to~$b$, if $c \succeq_a b$. Agent~$b$ is \emph{acceptable} to agent~$a$, if $b \succeq_a a$, and we let $A(a)=\{b \in N: b \succeq_a a\}$ denote the \emph{acceptability set} of~$a$. 

The underlying graph of a housing market~$H={(N,\{\succ_a:a \in N\})}$ is defined as the directed graph $D^H=(N,E)$ where vertices correspond to agents, and arcs correspond to acceptability, that is, $E=\{(a,b): a \in N, b \in A(a)\}$. Note that $(a,a) \in E$ for each~$a \in N$, so $D^H$ contains a loop at each agent.
We say that an arc~$(a,b) \in E$ is \emph{dominated} by an arc~$(a',b')$ if $a=a'$ and $b' \succ_a b$. An arc~$(a,b) \in E$ is \emph{undominated}, if no arc in~$E$ dominates it.

Given a subset~$N' \subseteq N$ of agents, the \emph{submarket} of~$H$ restricted to the agent set~$N'$ is the housing market~$H_{|N'}=(N',\{ \succ'_a: a \in N'\})$ where $\succ'_a$ is the restriction of~$\succ_a$ to the set~$N'$ of agents.

An \emph{allocation} in~$H$ is defined as a set~$X \subseteq E$ of arcs in~$D^H$ such that each agent has exactly one outgoing and exactly one incoming arc in~$X$, i.e., $|\delta^{\text{in}}_X(a)|=|\delta^{\text{out}}_X(a)|=1$ for each $a \in N$. 

\begin{remark}
\label{rem:find-alloc}
Given an arbitrary subset~$E'$ of arcs in~$D^H$, it is possible to check in polynomial time whether $H$ admits an allocation containing only arcs of~$E'$, by reducing this problem via a well-known, simple reduction to finding a perfect matching in a bipartite graph. 
This classic reduction constructs an auxiliary graph~$G=(N \cup N',F)$ where $N'=\{a':a \in N\}$ is a copy of the set~$N$, and the edge set is $F=\{ ab': (a,b)  \in E'\}$; then allocations in~$H$ contained in~$E'$ correspond bijectively to perfect matchings in~$G$.
\end{remark}

For an allocation~$X$, or more generally, a set~$X$ of arcs in~$D^H$ such that $|\delta^{\text{out}}_X(a)| \leq 1$ for each agent~$a \in N$, we let $X(a)$ denote the agent~$b$ if $\delta^{\text{out}}_X(a)=\{(a,b)\}$; if $\delta^{\text{out}}_X(a)=\emptyset$, then we set ${X(a)=\varnothing}$.
We extend the agents' preferences over~$N$ to preferences over allocations in the straightforward way: agent~$a$ (weakly) prefers an allocation~$X$ to an allocation~$X'$ 
if and only if $a$ (weakly) prefers $X(a)$ to~$X'(a)$. Extending the notation, we write $X \succ_a X'$ for $a$ preferring~$X$ to~$X'$; we use $X \succeq_a X'$ and $X \sim_a X'$ analogously. 

An allocation~$X$ is in the \emph{strong core} of~$H$, if there does not exist a \emph{blocking cycle} for~$X$, i.e., a cycle~$C$ in $D^H$ such that all agents~$a$ on~$C$ weakly prefer~$C(a)$ to~$X(a)$, and at least one agent~$a^\star$ on~$C$ (strictly) prefers~$C(a^\star)$ to~$X(a^\star)$. We remark that such cycles are often called \emph{weakly} blocking cycles. 
We adopt the notation by Bir\'o et al.~\cite{BKKV-mor} and write $\SC(H)$ for the strong core of~$H$.

An allocation~$X$ is in the \emph{core} of~$H$, if there does not exist a \emph{strictly blocking cycle} for~$X$, i.e., a cycle~$C$ in $D^H$ such that all agents~$a$ on~$C$ strictly prefer~$C(a)$ to~$X(a)$.

\section{Characterizing the strong core}
\label{sec:characterization}
If agents' preferences are strict orders, then the strong core always contains a unique allocation~\cite{roth-postlewaite}, which can be found in linear time using the Top Trading Cycle algorithm, attributed to David Gale in the seminal paper by Shapley and Scarf~\cite{shapley-scarf-1974}. If preferences are weak orders, then the strong core can be empty, but Quint and Wako provided a characterization of allocations in the strong core in such a setting, and showed how to use this characterization to find an allocation in the strong core if there exists one~\cite{WakoQuint}.

In Section~\ref{sec:QW-fails} we show that the characterization of the strong core by Quint and Wako fails to hold in the setting when agents' preferences can be partial orders. In Section~\ref{sec:mychar} we generalize the results of Quint and Wako to the setting with partial order preferences by giving a characterization of the strong core.

\subsection{The Quint--Wako Characterization Fails for Partial Orders}
\label{sec:QW-fails}
For housing markets where agents' preferences are weak orders, Quint and Wako proved the following characterization of allocations in the strong core~\cite{WakoQuint}. 
Recall that an absorbing set is a strongly connected component from which no arc may leave.

\begin{theorem}[{\bf Quint--Wako characterization~\cite{WakoQuint}}]
\label{thm:QW-char}
Suppose that $H=(N,\{\succ_a:a \in N\})$ is a housing market where each $\succ_a$ is a weak order. 
Let $U$ be the set of undominated arcs in~$D^H$, 
and $S$ an absorbing set in~$D^H[U]$. 
An allocation~$X$ for~$H$ is in the strong core of~$H$ if and only if it can be partitioned into sets~$X_S$ and~$X_{N \setminus S}$ where 
\begin{itemize}
\item[(a)] 
$X_S=X \cap (S \times S)$ is an allocation in $H_{|S}$ contained in~$U$, and
\item[(b)] 
$X_{N \setminus S}=X \cap (N \setminus S) \times (N \setminus S)$ is an allocation in the strong core of~$H_{|(N\setminus S)}$.
\end{itemize}
\end{theorem}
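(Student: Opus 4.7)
The plan is to prove the two directions separately. The weak-order hypothesis will play a decisive role in exactly one case, and that is precisely where the characterization is expected to break down for partial orders.

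For the direction $(\Leftarrow)$, I would assume $X = X_S \cup X_{N \setminus S}$ as described and suppose, toward a contradiction, that $X$ admits a blocking cycle $C$ in $D^H$. I would split into three cases according to how $V(C)$ meets $S$. If $V(C) \subseteq S$, every agent $a \in S$ already receives a top choice via $X_S$ (a $U$-arc), so no agent on $C$ can strictly improve, contradicting that $C$ blocks. If $V(C) \subseteq N \setminus S$, then $C$ is also a blocking cycle for $X_{N \setminus S}$ in $H_{|N \setminus S}$, contradicting~(b). The critical case is when $C$ has vertices on both sides; then $C$ contains an arc $(a,b)$ with $a \in S$, $b \notin S$, and weak blocking forces $b \succeq_a X_S(a)$. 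Since $X_S(a)$ is undominated, $b \not\succ_a X_S(a)$ and hence $b \sim_a X_S(a)$. The weak-order assumption now implies that $b$ lies in the same top-tier equivalence class as $X_S(a)$, whence $(a,b) \in U$, contradicting that $S$ is absorbing in $D^H[U]$.

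For the direction $(\Rightarrow)$, I would take $X \in \SC(H)$ and prove the three structural conditions in turn. The central step is to show $(a, X(a)) \in U$ for every $a \in S$. Assume some $a \in S$ has $(a, X(a)) \notin U$. When $|S| \ge 2$, the strong connectedness of $D^H[U][S]$ guarantees that $a$ lies on a simple directed cycle $C^\star$ in $D^H[U][S]$: take any outgoing $U$-arc $(a,b)$ with $b \in S \setminus \{a\}$ (which exists, or else $\{a\}$ would form its own SCC in $D^H[U]$) and concatenate with a simple $b$-to-$a$ path in $D^H[U][S]$. Every arc of $C^\star$ is undominated, so each agent on $C^\star$ weakly prefers $C^\star(\cdot)$ to $X(\cdot)$, and $a$ itself strictly improves (under a weak order, a non-top-tier element is strictly worse than any top-tier one). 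Hence $C^\star$ blocks $X$, a contradiction. The case $|S| = 1$ is trivial, since absorbing forces $A(a) = \{a\}$ and hence $X(a) = a$. Once $(a,X(a)) \in U$ is established for every $a \in S$, the absorbing property gives $X(a) \in S$, so $X_S$ is an allocation of $H_{|S}$ contained in $U$; a simple house-counting argument then places the remaining arcs of $X$ inside $H_{|N \setminus S}$; and any blocking cycle for $X_{N \setminus S}$ in the submarket would lift to a blocking cycle for $X$ in $H$, yielding $X_{N \setminus S} \in \SC(H_{|N \setminus S})$.

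The main obstacle is the cycle construction in the forward direction: the natural impulse to greedily walk from the ``unhappy'' agent $a$ along top-choice arcs need not return to $a$, so one has to appeal instead to the global fact that every vertex of a strongly connected digraph on at least two vertices sits on a simple cycle. A second subtle point is case (iii) of the backward direction, which rests crucially on the transitivity of the indifference relation $\sim$; it is precisely this step that cannot be rescued when preferences are partial rather than weak orders, foreshadowing the failure of the Quint--Wako characterization discussed in the next subsection.
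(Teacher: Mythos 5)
Your proof is correct. Be aware, however, that the paper does not prove Theorem~\ref{thm:QW-char} at all: it is quoted from Quint and Wako, and the paper only establishes two related fragments --- the necessity direction in the more general partial-order setting (Proposition~\ref{prop:QW-necessary}) and a sufficiency argument for the partial-order analogue (Theorem~\ref{thm:peakset-char}). Measured against those, your argument is essentially the weak-order specialization of what the paper does. Your forward direction mirrors the proof of Proposition~\ref{prop:QW-necessary}, with one notable difference: the paper closes a cycle through the undominated arc that \emph{dominates} $(s,X(s))$, which makes the strict improvement immediate for arbitrary partial orders, whereas you take an arbitrary outgoing $U$-arc inside $S$ and then invoke the weak-order structure (``a non-top-tier element is strictly worse than any top-tier one'') to get strictness. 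Both are valid under the theorem's hypothesis, but the paper's choice reveals that this direction needs no weak-order assumption, a fact you do not quite surface. Your backward direction's crossing-arc case is exactly the observation that, under weak orders, condition~(c) of Theorem~\ref{thm:peakset-char} --- every arc leaving $S$ is dominated by an arc of $X_S$ --- comes for free from the transitivity of $\sim_a$; this is the same point the paper makes implicitly via Example~\ref{ex1} and in the proof of Proposition~\ref{prop:alg-on-weakorders}. You have correctly localized the unique step where the weak-order hypothesis is essential, which is the substantive content of Section~\ref{sec:QW-fails}.
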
 

It is not hard to show that conditions~(a) and~(b) stated in Theorem~\ref{thm:QW-char} are \emph{necessary} for a given allocation to be in the strong core also in the case when agents' preferences are partial orders (see 
\ifshort
    the full version 
\else
    Appendix~\ref{app:necessary}
\fi    
for details).
Example~\ref{ex1} shows that conditions~(a) and~(b) of Theorem~\ref{thm:QW-char} are \emph{not sufficient} to ensure that an allocation belongs to the strong core. 
Thus, the characterization established by Theorem~\ref{thm:QW-char} for weakly ordered preferences fails to hold in the case when agents' preferences are partial orders. 
\begin{example}
\label{ex1}
Consider the following market~$H^1$ over agent set $N=\{a,b,c,d\}$; see its underlying graph in Figure~\ref{fig:ex1}. Let the acceptability sets be defined as $A(a)=A(b)=N$, $A(c)=\{a,b,c\}$, and $A(d)=\{c,d\}$. 
For each agent~$x$ and $y \in A(x) \setminus \{x\}$, we let $y \succ_x x$; additionally, $c \succ_a d$ and $c \succ_b d$. 
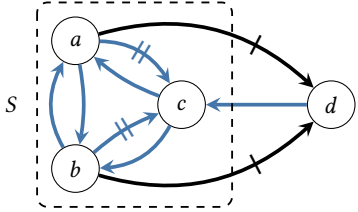
\begin{figure}[th]
\centering
\begin{tikzpicture}[scale=.85]
	\node[draw,circle,inner sep=1.5pt, minimum size=18pt] (a) at (0,1) {$a$};
  	\node[draw,circle,inner sep=1.5pt, minimum size=18pt] (b) at (0,-1) {$b$};
  	\node[draw,circle,inner sep=1.5pt, minimum size=18pt] (c) at (1.66,0) {$c$};
	\node[draw,circle,inner sep=1.5pt, minimum size=18pt] (d) at (4,0) {$d$};  	
  	\draw[line width=1.4pt,arrows={-stealth},color=domcolor] (a) to [bend left=10] (b);
  	\draw[line width=1.4pt,arrows={-stealth},color=domcolor] (b) to [bend left=10] node[midway,sloped] {$\boldsymbol{||}$} (c);
  	\draw[line width=1.4pt,arrows={-stealth},color=domcolor] (c) to [bend left=10] (a);
  	\draw[line width=1.4pt,arrows={-stealth},color=domcolor] (a) to [bend left=30] node[midway,sloped] {$\boldsymbol{||}$} (c);
  	\draw[line width=1.4pt,arrows={-stealth},color=domcolor] (b) to [bend left=30] (a);
  	\draw[line width=1.4pt,arrows={-stealth},color=domcolor] (c) to [bend left=30] (b);
  	\draw[line width=1.4pt,arrows={-stealth},color=undomcolor] (a) to [bend left=30] node[pos=0.7,sloped] {$\boldsymbol{|}$} (d);
  	\draw[line width=1.4pt,arrows={-stealth},color=undomcolor] (b) to [bend right=30] node[pos=0.7,sloped] {$\boldsymbol{|}$} (d);
  	\draw[line width=1.4pt,arrows={-stealth},color=domcolor] (d) to (c);
  	\draw[rectangle,line width=0.7pt,dashed,rounded corners] (-0.6,1.6) rectangle (2.45,-1.6) {};
  	\node[] at (-1,0) {$S$};
\end{tikzpicture}
\Description{The underlying graph of housing market~$H^1$ defined in Example~\ref{ex1}.}
\caption{The underlying graph of housing market~$H^1$ defined in Example~\ref{ex1}. Henceforth, loops are omitted, and undominated arcs are shown in blue, the strongly connected components they form are shown as dashed polygons. 
Single and double line markings $\boldsymbol{|}$ and $\boldsymbol{||}$ convey domination: an arc marked with~$\boldsymbol{||}$ dominates an arc marked with~$\boldsymbol{|}$ and leaving the same agent;
e.g., $(a,c)$ dominates $(a,d)$ but not $(a,b)$, and neither does $(a,b)$ dominate $(a,d)$. 
}
\label{fig:ex1}
\end{figure}

Note that $U=\{(a,b),(a,c),(b,c),(b,a),(c,a),(c,b),(d,c)\}$ is the set of undominated arcs, and the unique absorbing set in $D^{H^1}[U]$ is $S=\{a,b,c\}$. The submarket $H^1_{|S}$ admits two allocations in~$U$: 
$X_1=\{(a,b),(b,c),(c,a)\}$  and $X_2=\{(a,c),(c,b),(b,a) \}$. The only allocation for the submarket~$H^1_{|(N\setminus S)}=H_{|\{d\}}$ is ~$X_d=\{(d,d)\}$.

Although both $X_1$ and~$X_2$ are in the strong core of~$H^1_{|S}$, and $X_d$  is in the strong core of~$H_{|N \setminus S}$, neither $X_1 \cup X_d$ nor $X_2 \cup X_d$ is in the strong core of~$H^1$: 
$\{(a,d),(d,c),(c,a)\}$ is a blocking cycle for $X_1 \cup X_d$, and similarly, $\{(b,d),(d,c),(c,b)\}$ is a blocking cycle for $X_2 \cup X_d$.
This shows that Theorem~\ref{thm:QW-char} does not extend to partial order preferences.
\end{example}

\subsection{A Characterization Through Peak Sets}
\label{sec:mychar}
Given an allocation~$X$ in~$H$, let us define the following arc sets:
\begin{align*}
E_{[\sim X]} & = \{(a,b) \in E: X(a) \sim_a b\}, \\
E_{[\succ X]} & = \{(a,b) \in E: b \succ_a X(a)\}, \qquad \textrm{ and}\\
E_{[\succeq X]} & = E_{[\sim X]} \cup E_{[\succ X]}.
\end{align*}
Let $D_{[\sim X]}=(N,E_{[\sim X]})$, and let the digraphs $D_{[\succ X]}$ and $D_{[\succeq X]}$ be defined analogously.
The following definition captures the central notion we need for our characterization of the strong core.

\begin{definition} 
\label{def:peakset}
Given an allocation~$X$ in a housing market~$H$, 
a \emph{peak set} of~$X$ is an absorbing set in~$D_{[\succeq X]}$. 
\end{definition}

In our characterization of the strong core, peak sets will play the role that absorbing sets in the subgraph of undominated arcs have in Theorem~\ref{thm:QW-char}.
The following lemma states their key property.

\begin{restatable}[$\star$]{lemma}{lemstrcorepeakset}
\label{lem:strcore-peakset}
Suppose that $X$ is an allocation in the strong core of~$H$.
If $S$ a peak set of~$X$, then $X \cap (S \times S)$ is an allocation in~$H_{|S}$ 
consisting only of undominated arcs.
Moreover, 
each arc leaving~$S$ is dominated by an arc of~$X$.
\end{restatable}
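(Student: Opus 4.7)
The plan is to exploit two facts in parallel: that $X$ itself is contained in $E_{[\succeq X]}$, and that $S$ is absorbing in $D_{[\succeq X]}$. For the allocation part, I observe that every $X$-arc $(a, X(a))$ lies in $E_{[\sim X]}$, because $X(a) \sim_a X(a)$ holds trivially by the irreflexivity of $\succ_a$; hence $X \subseteq E_{[\succeq X]}$. For $a \in S$, the arc $(a, X(a))$ then sits in $D_{[\succeq X]}$, and absorption of $S$ forces $X(a) \in S$. Since $a \mapsto X(a)$ is a bijection on $N$ (from the allocation property of $X$), its restriction to the finite set $S$ is an injection into $S$, hence a bijection, so $X \cap (S \times S)$ is indeed an allocation in $H_{|S}$.

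For the undomination claim, I would argue by contradiction. Suppose some $(a, X(a)) \in X \cap (S \times S)$ is dominated in $D^H$ by an arc $(a, b)$ with $b \succ_a X(a)$. Then $(a, b) \in E_{[\succ X]} \subseteq E_{[\succeq X]}$, so absorption places $b$ inside $S$; moreover $b \neq a$, since $a \succ_a X(a)$ combined with the acceptability $X(a) \succeq_a a$ would contradict the transitivity of $\succ_a$. Strong connectivity of $S$ in $D_{[\succeq X]}$ then lets me extract a simple $b$-to-$a$ path inside $D_{[\succeq X]}[S]$; prepending the arc $(a, b)$ yields a simple cycle $C$ in $D^H$ on which every agent $a'$ weakly prefers $C(a')$ to $X(a')$, and on which $a$ strictly prefers $C(a) = b$ to $X(a)$. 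This $C$ blocks $X$, contradicting $X \in \SC(H)$.

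The final claim about arcs leaving $S$ follows by the same machinery used in contrapositive form. Let $(a, c) \in E$ with $a \in S$ and $c \notin S$. If $(a, c)$ were not dominated by $(a, X(a))$, then $X(a) \not\succ_a c$, i.e., $c \succeq_a X(a)$, which would place $(a, c)$ in $E_{[\succeq X]}$; absorption would then force $c \in S$, a contradiction. Hence $X(a) \succ_a c$, so $(a, c)$ is dominated by the $X$-arc leaving $a$, as required.

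The only slightly delicate step is the cycle construction in the second part: one must ensure that the cycle produced is simple, which reduces to the observation $b \neq a$ together with the standard thinning of a walk between two distinct vertices of a strongly connected digraph into a simple path. Beyond this, both the first and third parts are direct consequences of the absorbing property of $S$ in $D_{[\succeq X]}$, so once the undomination argument is handled carefully the rest of the lemma falls into place.
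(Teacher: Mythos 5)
Your proof is correct and follows essentially the same route as the paper's: use the absorbing property of $S$ in $D_{[\succeq X]}$ to keep the relevant arcs inside $S$, and turn any dominating arc into a blocking cycle via strong connectivity. The only (harmless) differences are cosmetic — you establish $X(a)\in S$ directly from $X\subseteq E_{[\sim X]}$ rather than via undominatedness, and you use the dominating arc $(a,b)\in E_{[\succ X]}$ itself instead of first passing to an undominated dominator as the paper does.
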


Based on Lemma~\ref{lem:strcore-peakset}, we can now characterize the strong core.

\begin{theorem}
\label{thm:peakset-char}
Suppose that $H=(N,\{\succ_a:a \in N\})$ is a housing market where each~$\succ_a$ is a partial order, and let $U$ be the set of undominated arcs in~$D^H$. Then allocation~$X$ for~$H$ is in the strong core of~$H$ if and only if it can be partitioned into two sets~$X_1$ and~$X_2$ such that there exists a set~$S \subseteq N$ of agents for which
\begin{itemize}
\item[(a)] $X_1$ is an allocation in~$H_{|S}$ contained in~$U$,
\item[(b)] $X_2$ is an allocation in the strong core of $H_{|(N\setminus S)}$, and
\item[(c)] each arc of~$D^H$ leaving~$S$ is dominated by an arc of~$X_1$.
\end{itemize}
Moreover, $S$ can be chosen as a peak set for~$X$.
\end{theorem}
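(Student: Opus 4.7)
My plan is to prove both implications directly, relying on Lemma~\ref{lem:strcore-peakset} for the ``only if'' direction and a case analysis on a hypothetical blocking cycle for the ``if'' direction.

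For the forward implication, I would first note that peak sets of $X$ always exist: by irreflexivity $X(a) \not\succ_a X(a)$, so each arc $(a, X(a))$ lies in $E_{[\sim X]} \subseteq E_{[\succeq X]}$, meaning $D_{[\succeq X]}$ has vertex set $N$ and thus at least one absorbing set (e.g.\ any sink SCC in its condensation). Picking any peak set $S$, I would set $X_1 = X \cap (S \times S)$ and $X_2 = X \setminus X_1$. Lemma~\ref{lem:strcore-peakset} immediately yields (a), and it gives (c) once I observe that the dominating arc at $a \in S$ guaranteed by the lemma must be $(a, X(a))$; since $X \cap (S \times S)$ is an allocation in $H_{|S}$, we have $X(a) \in S$ for every $a \in S$, so $(a, X(a)) \in X_1$. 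For (b), any blocking cycle for $X_2$ in $H_{|(N \setminus S)}$ would lift verbatim to a blocking cycle for $X$ in $H$, contradicting $X \in \SC(H)$.

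For the backward implication, I would assume a partition $X = X_1 \cup X_2$ satisfying (a)--(c) together with a blocking cycle $C$, and split into three cases. Case~1: $C$ lies entirely in $S$. Since $X_1 \subseteq U$, each arc $(a, X(a))$ with $a \in S$ is undominated, so $C(a) \succeq_a X(a)$ forces $C(a) \sim_a X(a)$ for every $a$ on $C$, contradicting the existence of a strictly improving agent $a^\star$ on $C$. Case~2: $C$ lies entirely in $N \setminus S$. Then $C$ is a blocking cycle for $X_2$ in $H_{|(N \setminus S)}$, contradicting (b). Case~3: $C$ contains an arc $(a, b)$ with $a \in S$ and $b \notin S$. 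By (c), this arc is dominated by an arc of $X_1$, which must be $(a, X(a))$ since $X_1$ has a unique arc out of $a$; hence $X(a) \succ_a b = C(a)$, contradicting $C(a) \succeq_a X(a)$. Each case produces a contradiction, so $X \in \SC(H)$.

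The last sentence of the theorem (that $S$ can be chosen as a peak set) is delivered for free by the forward construction. I expect the only delicate point to be the forward direction, where one must argue that the dominating arc supplied by Lemma~\ref{lem:strcore-peakset} actually lies in $X_1$ rather than merely in $X$; but this is forced by $X(a) \in S$ for $a \in S$, so no extra work is required. Beyond that, the proof is a clean case analysis driven by the peak-set definition, which is precisely what replaces the absorbing set of undominated arcs used in the Quint--Wako characterization.
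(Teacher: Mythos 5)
Your proof is correct and follows essentially the same route as the paper's: the forward direction takes $S$ to be a peak set and invokes Lemma~\ref{lem:strcore-peakset} for conditions~(a) and~(c), and the backward direction shows a blocking cycle would have to contain an arc leaving~$S$, contradicting~(c). Your added remarks---that peak sets always exist because $(a,X(a)) \in E_{[\succeq X]}$ for every agent, and that the dominating arc guaranteed by the lemma is forced to be $(a,X(a)) \in X_1$---are correct elaborations of details the paper leaves implicit.
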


\begin{proof}
First suppose that $X$ is in the strong core of~$H$. Define~$S$ as a peak set of~$X$, and set $X_1:=X \cap (S \times S)$
and $X_2=X \setminus X_1$. By Lemma~\ref{lem:strcore-peakset} we know that conditions~(a) and~(c) are fulfilled. Since~$X_1$ is an allocation on~$H_{|S}$, we necessarily have that $X_2=X \setminus X_1$ is an allocation in~$H_{|(N \setminus S)}$. To see that $X_2$ is in the strong core of this submarket, it suffices to note that any blocking cycle for~$X_2$ in~$H_{|(N \setminus S)}$ would also block~$X$ in~$H$. 

Second, suppose that some allocation~$X$ can be partitioned into $X_1$ and $X_2$ satisfying conditions~(a)--(c) for some set~$S$ of agents; we show that $X$ is in the strong core of~$H$. Assume for the sake of contradiction that there is a blocking cycle~$C$ for~$X$. First, since $X_1$ is an allocation for~$H_{|S}$ and $X_1 \subseteq U$, we know that $C$ must involve at least one agent not in~$S$. Second, since $X_2$ is an allocation in the strong core of~$H_{|(N \setminus S)}$, we know that $C$ must also contain at least one agent not in~$N \setminus S$. Thus, $C$ involves agents from both~$S$ and $N \setminus S$, and consequently has to contain at least one arc leaving~$S$. However, all arcs leaving~$S$ are dominated by an arc of~$X_1$, which contradicts the assumption that $C$ blocks~$X$. This proves that no blocking cycle for~$X$ exists, so~$X$ is indeed in the strong core of~$H$.
\end{proof}

\section{Finding an allocation in the strong core}
\label{sec:finding-strongcore}
In this section, we show how Theorem~\ref{thm:peakset-char} can be used to construct an algorithm that either finds an allocation in the strong core of a housing market where agents' preferences are partial orders, or concludes that such an allocation does not exist. 

Before explaining our ideas, let us briefly describe the algorithm by Quint and Wako~\cite{WakoQuint} that solves this problem in the setting where agents' preferences are weak orders, based on  Theorem~\ref{thm:QW-char}.
Given a housing market $H=(N,\{\succ_a:a \in N\})$, 
the Quint--Wako algorithm first computes an absorbing set~$S$ in the subgraph~$D^H[U]$ of undominated arcs in the underlying digraph.
Then it checks whether $H_{|S}$ admits an allocation consisting only of undominated arcs, and if so, stores such an allocation~$X_S$ and proceeds with the submarket~$H_{|(N \setminus S)}$ of the remaining agents recursively.
If the recursive call returns an allocation~$X_{N \setminus S}$ in the strong core of~$H_{|(N \setminus S)}$, then the algorithm outputs $X_S \cup X_{N \setminus S}$; otherwise it concludes that the strong core of~$H$ is empty.

Unfortunately, the characterization of the strong core given in Theorem~\ref{thm:peakset-char} does not readily offer an algorithm for finding an allocation in the strong core of a given housing market, or to decide whether such an allocation exists. The reason for this is that it is not immediately clear how to find a set~$S$ of agents for which conditions~(a)--(c) of Theorem~\ref{thm:peakset-char} hold for the desired allocation~$X$ (or, more precisely, for the arcs sets~$X_1$ and $X_2$ that together yield~$X$). By Theorem~\ref{thm:peakset-char} we know that the set~$S$ can be chosen to be a peak set of the desired allocation; however, this insight does not seem to offer a direct way to find such a set without knowing~$X$ itself. 

In this section, we propose a method that resolves this issue and finds an allocation in the strong core of the given housing market, if such an allocation exists.
In fact, we are going to solve the following more general computational problem:

\ifshort
    \begin{center}
    \fbox{ 
    \parbox{0.94\columnwidth}{
    \StrCoreFA{} (SCFA): 
    
    \textbf{Input:} A housing market~$H$ with partial orders and a set~$F$ of \emph{forbidden} arcs in the underlying directed graph~$D^H$. 
    
    \textbf{Task:}  Decide if there exists an allocation~$X$ in the strong core of~$H$ such that $X \cap F=\emptyset$, and if so, find such an allocation.
    }}
    \end{center}
\else
    \begin{center}
    \fbox{ 
    \parbox{0.94\columnwidth}{
    \begin{tabular}{l}\StrCoreFA{} (SCFA):  \end{tabular} \\
    \begin{tabular}{p{2em}p{.78\columnwidth}}
    Input: & A housing market~$H$ with partial orders and a set~$F$ of \emph{forbidden} arcs in the underlying directed graph~$D^H$. \\
    Task: & Decide if there exists an allocation~$X$ in the strong core of~$H$ such that $X \cap F=\emptyset$, and if so, find such an allocation.
    \end{tabular}
    }}
    \end{center}
\fi

In Section~\ref{sec:algo} we describe a polynomial-time algorithm to solve the \StrCoreFA{} problem, and we sketch its proof of correctness in Section~\ref{sec:algo-correctness}. We extend our algorithm to a setting with forced arcs in Section~\ref{sec:extension-to-forced-arcs}, and explain its group-strategyproofness in Section~\ref{sec:strategyproof}.
In Section~\ref{sec:algo-prop} we briefly compare it with the Quint--Wako algorithm~\cite{WakoQuint}.

\subsection{Algorithm for SCFA}
\label{sec:algo}

We now present Algorithm~\ref{alg:SCFA} to solve our instance~$(H,F)$ of SCFA. An allocation in $\SC(H)$ that is disjoint from~$F$ is a \emph{solution}.

\begin{varalgorithm}{SCFA}
\caption{Solving SCFA.}
\label{alg:SCFA}
\begin{algorithmic}[1]
\Require{An instance~$(H,F)$ of SCFA where $D^H=(N,E)$.}
\Ensure{An allocation in the strong core of~$H$ disjoint from~$F$, or~$\varnothing$ if no such allocation exists.}
\If{$|N|=1$} 
	\If{$(a,a) \in E \setminus F$ where $N=\{a\}$} {\bf return} $\{(a,a)\}$.\label{line:one-agent}
	\Else { {\bf return} $\varnothing$.}
	\EndIf
\EndIf
\State Let $U$ denote the set of undominated arcs in~$D^H$.
\State Let $\mathcal{S}$ be the set of strongly connected components in $(N,U)$.\label{line:compute-SCCs}
\State Let $\T=\emptyset$.
\ForAll{$S \in \S$} \label{line:T-start}
	\State Let $E_{S}=\{(a,b) \in U \setminus F: a,b \in S\}$.\label{line:def-A_S}
	\If{$\exists$ an allocation for~$H_{|S}$ in $E_S$} Put $S$ into~$\T$. \label{line:add-S-to-T}
	\EndIf
\EndFor
\While{$|\T|>0$} \label{line:iter-begin}
	\State Let $\R=\emptyset$ and $T^\star=\bigcup_{S \in \T} S$.\label{line:def-R-Tstar}
	\ForAll{$S \in \T$}
		\State Let $E_{S,\T}=\!\{(a,b) \in E_S \colon b \!\succ_a \! b' 
    \,\, \forall
        (a,b') \in E,b' \notin T^\star\}$.\label{line:def-A_ST}
		\If{$\exists$ an allocation for~$H_{|S}$ in $E_{S,\T}$}
  \State Let $X_S$ be such an allocation. \label{line:Tvalid-def}
		\Else { Add $S$ as an element to~$\R$.}\label{line:add-S-to-R}
		\EndIf
	\EndFor
	\If{$\R=\emptyset$} 
         \If{$N=T^\star$} {{\bf return} $\bigcup_{S \in \T} X_S$. } \label{line:return-without-recursion}
         \EndIf
		\State Let $N'=N \setminus T^\star$ and $F'=F \cap (N' \times N'$).\label{line:remove-agents}
		\State Call $X'=\mathbf{SCFA}(H_{|N'},F')$. \label{line:recurse}
		\If{$X'=\varnothing$} {\bf return} $\varnothing$  \label{line:reject-recursively}
		\Else { {\bf return} $\bigcup_{S \in \T} X_S \cup X'$.} \label{line:outputX}
		\EndIf
	\Else { Delete each $S \in \R$ from~$\T$.} \label{line:iter-end}
	\EndIf
\EndWhile 
\State {\bf return} $\varnothing$. \label{line:reject}
\end{algorithmic}
\end{varalgorithm}

\smallskip
\noindent {\bf Description of Algorithm~\ref{alg:SCFA}.}
The main idea of %
our algorithm
is to find a set~$T^\star$ that can serve as the peak set of our desired allocation in the strong core, find an appropriate allocation on~$H_{|T^\star}$, and solve the remainder recursively after deleting all agents in~$T^\star$.

The algorithm starts by dealing with the case when the market only contains a single agent.
Next, it computes the family~$\S$ of strongly connected components in the digraph~${D^H[U]=(N,U)}$ of undominated arcs and,
using the method of
 Remark~\ref{rem:find-alloc},
 checks for each component~$S \in \S$ whether there is an allocation for the submarket $H_{|S}$ consisting only of undominated non-forbidden arcs. %
Those sets that pass this test are collected in a family~$\T \subseteq \S$. 

Given this collection~$\T \subseteq \S$, an iterative process is started. 
At each step in this iteration, Algorithm~\ref{alg:SCFA} checks whether each   submarket~$H_{|S}$ for $S \in \T$ admits an allocation
that contains only arcs that (i) are undominated and not forbidden, and (ii) dominate all arcs leaving~$T^\star=\bigcup_{T \in \T} T$; the set of such arcs in denoted by~$E_{S,\T}$ on line~\ref{line:def-A_ST}. More formally, let us say that an allocation~$X$ in~$H_{|S}$ is \emph{$T^\star$-valid} for some $T^\star \subseteq N$, if \begin{itemize}
\item[(i)]  $X \subseteq U \setminus F$
and 
\item[(ii)] $X(a) \succ_a b'$ for each $(a,b') \in E$ where $a \in T^\star$ and $b' \notin T^\star$. 
\end{itemize}
Thus, an allocation~$X$ in~$H_{|S}$ is $T^\star$-valid if and only if 
\[ X \subseteq \left\{(a,b) \in U \setminus F: b \succ_a b' \textrm{ for each } (a,b') \in E, b' \notin T^\star\right\}.\]
If Algorithm~\ref{alg:SCFA} finds that all submarkets~$H_{|S}$, $S \in \T$, admit a $T^\star=\bigcup_{T \in \T} T$-valid allocation, then it stores such an allocation for each~$S \in \T$, deletes the agents in~$T^\star$, and proceeds with the remaining market recursively---unless the remaining market is empty, in which case no recursion is necessary. If, on the contrary, $H_{|S}$ does not admit a $T^\star$-valid allocation for certain sets~$S \in \T$, then each such set~$S$ is removed from the collection~$\T$. 

This iterative process stops either if all sets in~$\T$ pass the test of admitting a $\bigcup_{T \in \T} T$-valid allocation, or if $\T$ becomes empty, in which case the algorithm concludes that there is no solution.%

\begin{example}
\label{ex2}
Consider again the housing market~$H^1$ defined in Example~\ref{ex1} and depicted in Figure~\ref{fig:ex1}, to see how Algorithm~\ref{alg:SCFA} runs with $H^1$ as its input. For simplicity, we define the set of forbidden arcs as empty. 
The set of strongly connected components in the subgraph spanned by all undominated arcs is $\S=\{S_1,S_2\}$ where $S_1=\{a,b,c\}$ and  $S_2=\{d\}$. 
 Algorithm~\ref{alg:SCFA} computes the arc sets~$E_{S_1}=\{(a,b),(b,c),(c,a),(a,c),(c,b),(b,a)\}$ and $E_{S_2}=\emptyset$, and finds that $H^1_{|S_1}$ admits an allocation in~$E_{S_1}$, e.g., the allocation $\{(a,b),(b,c),(c,a)\}$, but no allocation for~$H^1_{|S_2}$ (containing only the vertex~$d$) can be constructed from the empty set. Thus, 
Algorithm~\ref{alg:SCFA} initializes the family~$\T$ by setting $\T=\{S_1\}$.

Starting the iteration on lines~\ref{line:iter-begin}--\ref{line:iter-end}, the algorithm first computes the arc set~$E_{S_1,\T}$, obtaining  ${E_{S_1,\T}=\{(a,c),(c,a),(b,c),(c,b)\}}$; note that neither~$(a,b)$ nor~$(b,a)$ is contained in~$E_{S_1,\T}$, since they do not dominate the arc~$(a,d)$ or~$(b,d)$, respectively (observe that both~$(a,d)$ and~$(b,d)$ leave the set of vertices contained in~$\T$, that is, $T^\star=S_1=\{a,b,c\}$).
The algorithm next observes that $E_{S_1,\T}$ does not contain an allocation for~$H_{|S_1}$; thus, it removes~$S_1$ from~$\T$, leaving $\T$ empty. Hence, the iteration stops, and Algorithm~\ref{alg:SCFA} outputs~$\varnothing$ on line~\ref{line:reject}, concluding that the strong core is empty.
\end{example}
\begin{figure}[th]
\centering
\begin{tikzpicture}[scale=.85]
	\node[draw,circle,inner sep=1.5pt, minimum size=18pt] (a) at (0,1) {$a$};
  	\node[draw,circle,inner sep=1.5pt, minimum size=18pt] (b) at (0,-1) {$b$};
  	\node[draw,circle,inner sep=1.5pt, minimum size=18pt] (c) at (1.66,0) {$c$};
	\node[draw,circle,inner sep=1.5pt, minimum size=18pt] (d1) at (4,0.6) {$d_1$};  	
    \node[draw,circle,inner sep=1.5pt, minimum size=18pt] (d2) at (4,-0.6) {$d_2$};  	
  	\draw[line width=1.4pt,arrows={-stealth},color=domcolor] (a) to [bend left=10] (b);
  	\draw[line width=1.4pt,arrows={-stealth},color=domcolor] (b) to [bend left=10] node[midway,sloped] {$\boldsymbol{|}\boldsymbol{|}$} (c);
  	\draw[line width=1.4pt,arrows={-stealth},color=domcolor] (c) to [bend left=10] (a);
  	\draw[line width=1.4pt,arrows={-stealth},color=domcolor] (a) to [bend left=30] node[midway,sloped] {$\boldsymbol{|}\boldsymbol{|}$} (c);
  	\draw[line width=1.4pt,arrows={-stealth},color=domcolor] (b) to [bend left=30] (a);
  	\draw[line width=1.4pt,arrows={-stealth},color=domcolor] (c) to [bend left=30] (b);
  	\draw[line width=1.4pt,arrows={-stealth},color=undomcolor] (a) to [bend left=30] node[pos=0.7,sloped] {$\boldsymbol{|}$} (d1);
  	\draw[line width=1.4pt,arrows={-stealth},color=undomcolor] (b) to [bend right=30] node[pos=0.7,sloped] {$\boldsymbol{|}$} (d2);
  	\draw[line width=1.4pt,arrows={-stealth},color=domcolor] (d1) to (c);
  	\draw[line width=1.4pt,arrows={-stealth},color=domcolor] (d1) to [bend left=30] (d2);
  	\draw[line width=1.4pt,arrows={-stealth},color=domcolor] (d2) to [bend left=30] (d1);
    
  	\draw[rectangle,line width=0.7pt,dashed,rounded corners,above=2pt] (-0.6,1.6) rectangle (2.26,-1.7) {};
  	\node[] at (-0.9,0) {$S_1$};
   
    \draw[rectangle,line width=0.7pt,dashed,rounded corners,] (3.4,1.2) rectangle (4.6,-1.2) {};
   \node[] at (4.9,0) {$S_2$};

\end{tikzpicture}
\Description{The underlying graph of housing market~$H^2$ defined in Example~\ref{ex3}.}
\caption{The housing market~$H^2$ defined in Example~\ref{ex3}.  
}
\label{fig:ex3}
\end{figure}
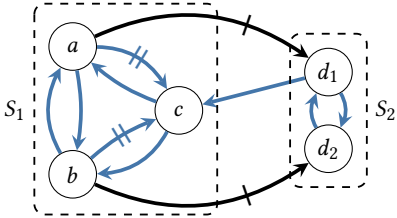

\begin{example}
\label{ex3}
Consider the following market~$H^2$ over agent set $N=\{a,b,c,d_1,d_2\}$; see its underlying graph (without loops) on Figure~\ref{fig:ex3}. Let the acceptability sets be defined as $A(a)=N \setminus\{d_2\}$, $A(b)=N \setminus\{d_1\}$, $A(c)=\{a,b,c\}$, $A(d_1)=\{c,d_1,d_2\}$, and $A(d_2)=\{d_1,d_2\}$. 
For each agent~$x \in N$ and $y \in A(x) \setminus \{x\}$, we let $y \succ_x x$; 
additionally, $c \succ_a d_1$ and $c \succ_b d_2$. 

Consider how Algorithm~\ref{alg:SCFA} runs with $H^2$ as its input, without forbidden arcs. 
All arcs are undominated except for~$(a,d_1)$ and $(b,d_2)$.
The strongly connected components in the subgraph spanned by all undominated arcs are 
$S_1=\{a,b,c\}$ and  $S_2=\{d_1,d_2\}$. 
 Algorithm~\ref{alg:SCFA} computes the arc sets $E_{S_1}=\{(a,b),(b,c),(c,a),\allowbreak(a,c),\allowbreak(c,b),(b,a)\}$ and $E_{S_2}=\{(d_1,d_2),(d_2,d_1)\}$, and finds that both~$H^2_{|S_1}$ and~$H^2_{|S_2}$ admit an allocation in~$E_{S_1}$ and in~$E_{S_2}$, respectively. Thus, 
Algorithm~\ref{alg:SCFA} initializes $\T$ by setting $\T=\{S_1,S_2\}$.

Starting the iteration on lines~\ref{line:iter-begin}--\ref{line:iter-end}, the algorithm first computes the arc sets~$E_{S,\T}$ for both $S \in \T$, obtaining  ${E_{S_1,\T}=E_{S_1}}$ and ${E_{S_2,\T}=E_{S_2}}$ and storing allocations~$X_{S_1}$ and~$X_{S_2}$ for~$H_{|S_1}$ and for~$H_{|S_2}$, respectively. Suppose that the algorithm finds $X_{S_1}\allowbreak=\{(a,b),\allowbreak(b,c),(c,a)\}$ and~$X_{S_2}=\{(d_1,d_2),(d_2,d_1)\}$; the case when $X_{S_1}=\allowbreak\{(a,c),\allowbreak(c,b),(b,a)\}$ is similar. The algorithm finds on line~\ref{line:return-without-recursion} that $N=T^\star=S_1 \cup S_2$, and hence returns the allocation $X_{S_1} \cup X_{S_2}$.
\end{example}

\subsection{Correctness of Algorithm~\ref{alg:SCFA}}
\label{sec:algo-correctness}
The correctness of Algorithm~\ref{alg:SCFA} relies on the following key fact.%
\begin{lemma}
\label{lem:peak-invariant}
If $X$ is an allocation in the strong core of~$H$ disjoint from~$F$, and $P$ is a peak set of~$X$, then $P \subseteq \bigcup_{T \in \T} T$ always holds on lines~\ref{line:iter-begin}--\ref{line:iter-end} during the execution of Algorithm~\ref{alg:SCFA} on~$(H,F)$.
\end{lemma}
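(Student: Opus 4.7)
The plan is to prove the invariant by induction on the iterations of the while loop, after first establishing a key structural fact relating the peak set $P$ to the subgraph of undominated arcs.

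\textbf{Key structural claim: $P$ is a union of strongly connected components of $D^H[U]$.} The crucial observation is that $U \subseteq E_{[\succeq X]}$: if $(a,b)$ is undominated, then $(a, X(a))$ cannot dominate it, so $X(a) \not\succ_a b$, which means $b \succeq_a X(a)$. Hence $D^H[U]$ is a subgraph of $D_{[\succeq X]}$. Since $P$ is absorbing in $D_{[\succeq X]}$, no arc of $U$ leaves $P$. Starting from any $a \in P$ and following an undominated path, we never leave $P$; so the SCC of $a$ in $D^H[U]$ is contained in $P$.

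\textbf{Base case: invariant holds after initialization of $\T$.} I need to show that every SCC $S$ of $D^H[U]$ with $S \subseteq P$ passes the test on line~\ref{line:add-S-to-T} and is added to $\T$. By Lemma~\ref{lem:strcore-peakset}, $X \cap (P \times P)$ is an allocation on $H_{|P}$ using only undominated arcs; since $X \cap F = \emptyset$, all these arcs lie in $U \setminus F$. This allocation decomposes into a disjoint union of cycles, and any cycle in $U$ must lie within a single SCC of $D^H[U]$. Consequently, $X \cap (S \times S)$ is an allocation on $H_{|S}$ contained in $E_S$, witnessing that $S$ is placed into $\T$.

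\textbf{Inductive step: the invariant is preserved by an iteration.} Suppose that at the start of an iteration, $P \subseteq T^\star = \bigcup_{T \in \T} T$. By the base case and induction, each SCC $S \subseteq P$ of $D^H[U]$ currently lies in $\T$; I claim none of them gets moved into $\R$. The candidate allocation is again $X_S = X \cap (S \times S)$, which we already know lies in $U \setminus F$. To verify $X_S \subseteq E_{S,\T}$, pick any $(a,b) \in X_S$ and any $(a, b') \in E$ with $b' \notin T^\star$. Since $P \subseteq T^\star$, we have $b' \notin P$; since $a \in S \subseteq P$ and $P$ is absorbing in $D_{[\succeq X]}$, the arc $(a,b')$ does not belong to $E_{[\succeq X]}$, so $X(a) \succ_a b'$, i.e., $b \succ_a b'$. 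Hence $X_S$ is a $T^\star$-valid allocation for $H_{|S}$, and $S$ is not removed from $\T$ on line~\ref{line:add-S-to-R}.

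\textbf{Expected difficulty.} The main conceptual step is the structural claim, because it is the bridge between the definition of a peak set (in terms of $E_{[\succeq X]}$) and the algorithm's decomposition into undominated-SCCs. Once that bridge is in place, both the base case and the inductive step are short verifications relying on Lemma~\ref{lem:strcore-peakset} and the absorbing property of $P$. No other subtlety arises, since the recursion on line~\ref{line:recurse} lies outside the loop under consideration.
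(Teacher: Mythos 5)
Your proposal is correct and follows essentially the same route as the paper's proof: establish that $P$ is a union of SCCs of $D^H[U]$ (via $U\subseteq E_{[\succeq X]}$), use Lemma~\ref{lem:strcore-peakset} to show each such SCC enters~$\T$ on line~\ref{line:add-S-to-T}, and then show that $X\cap(S\times S)$ is $T^\star$-valid whenever $P\subseteq T^\star$, so no such SCC is ever removed. The only cosmetic difference is that you derive $T^\star$-validity directly from the absorbing property of~$P$ rather than from the ``each arc leaving~$S$ is dominated by an arc of~$X$'' clause of Lemma~\ref{lem:strcore-peakset}; the argument is the same.
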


\begin{proof}
First, recall that if $P$ is a peak set for~$X$, then no arc of~$E_{[\succeq X]}$ leaves~$P$ by definition. This implies that no undominated arc leaves~$P$, and therefore,
each strongly connected component of~$D^H[U]$ is either entirely contained in~$P$, or is disjoint from it. 
Hence, if $\S$ denotes the family of strongly connected components in~$D^H[U]$, then $P=\bigcup_{S \in \S_P} S$ for some $\S_P \subseteq \S$. 

By Lemma~\ref{lem:strcore-peakset}, the arc~$(p,X(p))$ is undominated for each $p \in P$. Since no undominated arc leaves~$P$, this implies that $X$ contains a collection of pairwise vertex-disjoint cycles in~$U$ covering all agents in~$P$. By the definition of strongly connected components, each such cycle must be entirely contained in some strongly connected component of~$D^H[U]$. Summarizing this, we obtain that for each $S \in \S_P$ the arc set~$X_S=\{(s,X(s)):s \in S\}$ is an allocation in~$H_{|S}$ that is contained in~$U$ and disjoint from~$F$.
Consequently, each set~$S \in \S_P$ is added to~$\T$ on line~\ref{line:add-S-to-T} of Algorithm~\ref{alg:SCFA}.
Hence, at the beginning of the iteration on lines~\ref{line:iter-begin}--\ref{line:iter-end}, $P \subseteq \bigcup_{T \in \T} T$ holds. 

We prove that this remains true throughout the run of the algorithm. For this, it suffices to see that whenever some set~$S \in \S$ is removed from~$T$, then $S \notin \S_P$. 
So assume that $S \in \S_P$ is removed from~$\T$ on line~\ref{line:iter-end}, and before the removal, $P \subseteq T^\star$ for $T^\star=\bigcup_{T \in \T}T$. 
Then $H_{|S}$ does not admit an allocation in~$E_{S,\T}$, that is, $H_{|S}$ does not admit a $ T^\star$-valid allocation. 

Recall now that allocation~$X_S$ for~$H_{|S}$ is $P$-valid due to Lemma~\ref{lem:strcore-peakset}. However, since $P \subseteq T^\star$, all arcs leaving~$T^\star$ necessarily leave~$P$ as well, and hence must be dominated by an arc of~$X_S$. It follows that $X_S$ is $T^\star$-valid. This contradicts our assumption that $H_{|S}$ does not admit a $T^\star$-valid allocation, proving that no set in~$\S_P$ is ever removed by the algorithm from~$\T$.
\end{proof}

The correctness of Algorithm~\ref{alg:SCFA} is stated below.

\begin{restatable}[$\star$]{theorem}{thmalgSCFAcorrect}
\label{thm:alg-SCFA-correct}
Algorithm~\ref{alg:SCFA} correctly solves each instance of SCFA, and runs in~$O(|N|^2 \cdot |E|^{1+o(1)})$ time where $(N,E)$ is the underlying graph of the housing market~$H$ of the instance.
\end{restatable}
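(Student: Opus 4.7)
The plan is to prove correctness by induction on $|N|$, then bound the running time separately.

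For \textbf{soundness}, I would verify at each of the three return points of the algorithm that the output is in $\SC(H)$ and disjoint from~$F$. The base case (line~\ref{line:one-agent}) is immediate. At line~\ref{line:return-without-recursion} we have $X = \bigcup_{S \in \T} X_S$ with $N = T^\star$; Theorem~\ref{thm:peakset-char} applied with $S := T^\star$ settles this case, because each $X_S \subseteq E_S \subseteq U \setminus F$ gives condition~(a) and disjointness from~$F$, while (b) is trivial and (c) holds vacuously. At line~\ref{line:outputX} the output is $(\bigcup_{S \in \T} X_S) \cup X'$ where $X'$ comes from the recursive call on $(H_{|N'},F')$; by the inductive hypothesis $X' \in \SC(H_{|N'})$ with $X' \cap F' = \emptyset$, and Theorem~\ref{thm:peakset-char} with $S := T^\star$ applies, where~(c) now follows from the $T^\star$-validity of each $X_S$.

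For \textbf{completeness}, assume $(H,F)$ admits a solution~$X$ and suppose for contradiction that the algorithm returns~$\varnothing$. The exit at line~\ref{line:reject} is ruled out by Lemma~\ref{lem:peak-invariant}: any peak set~$P$ of~$X$ is nonempty (the condensation of the finite digraph $D_{[\succeq X]}$ has a sink), and yet $P \subseteq \bigcup_{T \in \T} T = \emptyset$. The \textbf{main obstacle} is the exit at line~\ref{line:reject-recursively}: here I must show $(H_{|N'},F')$ admits a solution, contradicting the inductive hypothesis. Note that $X_1 := \bigcup_{S \in \T} X_S$ satisfies conditions~(a) and~(c) of Theorem~\ref{thm:peakset-char} with $S := T^\star$, so that theorem reduces the task to producing an allocation~$Z$ on~$N'$, disjoint from~$F'$, with $X_1 \cup Z \in \SC(H)$. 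To construct such a~$Z$ from the generic solution~$X$, I would use an exchange argument: starting from~$X$ and modifying its assignments so that $X_1$ is used on~$T^\star$, one rearranges the cycles of~$X$ that cross between $T^\star$ and~$N'$ to repair the allocation on~$N'$, and then rules out blocking cycles by exploiting that $X_1$ consists of undominated arcs and dominates every arc leaving~$T^\star$. Formalizing this repair—possibly by iteratively producing solutions whose peak sets grow one strongly connected component at a time toward~$T^\star$, using Lemma~\ref{lem:strcore-peakset} to maintain the strong-core property at each step—is the delicate heart of the argument.

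For the \textbf{running time}, the strongly connected components on line~\ref{line:compute-SCCs} are computed in $O(|N|+|E|)$ time. Each allocation-existence check on lines~\ref{line:add-S-to-T} and~\ref{line:Tvalid-def} reduces via Remark~\ref{rem:find-alloc} to bipartite perfect matching, solvable in $O(|E|^{1+o(1)})$ time by recent max-flow algorithms. The while loop performs $O(|N|)$ iterations, since each non-exiting iteration strictly shrinks~$\T$, and the recursion depth is $O(|N|)$, yielding the overall bound $O(|N|^2 \cdot |E|^{1+o(1)})$.
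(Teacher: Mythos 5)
Your soundness argument and running-time analysis match the paper's proof, and your identification of the two rejection exits is correct. The genuine gap is exactly where you flag it: the exit at line~\ref{line:reject-recursively}. Your proposed repair---taking the hypothetical solution~$X$, swapping in~$X_1$ on~$T^\star$, and ``rearranging the cycles of~$X$ that cross between $T^\star$ and~$N'$''---is left entirely unformalized, and it attacks the problem from the wrong end: it is not at all clear how to reroute crossing cycles while preserving membership in the strong core, nor that the iterative ``peak-set-growing'' scheme you sketch terminates with the right object. The missing observation, which makes the whole step short, is that \emph{no arc of~$X$ crosses the boundary of~$T^\star$ in the first place}, so there is nothing to repair. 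Indeed, at the moment the recursive call is made, every $S \in \T$ admits a $T^\star$-valid allocation~$X_S$, so every arc~$(a,b)$ leaving~$T^\star$ with $a \in S$ is dominated by $f=(a,X_S(a))$. Since $S$ is a strongly connected component of~$D^H[U]$ and $X_S \subseteq U$ stays inside~$S$, there is a path $Q \subseteq U$ from $X_S(a)$ back to~$a$; because undominated arcs always lie in~$E_{[\succeq X]}$, the cycle $Q \cup \{f\}$ would be a blocking cycle for~$X$ (agent~$a$ strictly prefers $X_S(a)$ to~$b$). Hence no arc of~$X$ leaves~$T^\star$; as $X$ is a disjoint union of cycles, it splits into an allocation on~$T^\star$ and an allocation $X \cap (N' \times N')$ on~$N'$, and the latter is immediately a solution for $(H_{|N'},F')$ (a blocking cycle there would block~$X$ in~$H$, and disjointness from~$F'$ is inherited). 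This contradicts the inductive hypothesis applied to the failed recursive call, completing the argument without any exchange step.

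One further small note: in this completeness step you should apply the crossing-arc argument to the \emph{given} solution~$X$, not try to certify a newly built allocation via Theorem~\ref{thm:peakset-char}; the theorem is the right tool for soundness (as you use it), but for completeness the direct blocking-cycle contradiction above is what closes the case.
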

\begin{proofsketch}
The proof applies induction on the number of agents; the correctness is clear for $|N|=1$.

We need to show that whenever Algorithm~\ref{alg:SCFA} concludes 
that the instance does not admit a solution, then this indeed holds.
First, if a solution exists, then the algorithm never reaches line~\ref{line:reject}  as the peak set of the solution remains in~$\bigcup_{T \in \T}T$ by Lemma~\ref{lem:peak-invariant}. Second, Algorithm~\ref{alg:SCFA} cannot return~$\varnothing$ on line~\ref{line:reject-recursively} either if a solution exists, due to the induction hypothesis.

For the other direction, we need to prove that whenever the algorithm returns a set~$X$ of arcs, then $X \in \SC(H)$. This can be shown using Theorem~\ref{thm:peakset-char} in combination with our induction hypothesis.

By $|\T| \leq |N|$, lines~\ref{line:iter-begin}--\ref{line:iter-end} are performed at most~$|N|$ times.
The bottleneck in each iteration is to compute a perfect matching in a bipartite graph, which takes $O(|E|^{1+o(1)})$ time~\cite{ChenKLPPS-CACM23-linear-maxflow}. 
The recursion yields an additional factor of~$|N|$ in the running time. 
\ifshort 
\else For the details of the proof, see 
 Appendix~\ref{app:proof-of-correctness}.
\fi
\end{proofsketch}

\begin{corollary}
\label{cor:SCFA}
SCFA with $n$ agents and $m$ arcs in the underlying graph can be solved in $O(n^2 \cdot m^{1+o(1)})$ time.
\end{corollary}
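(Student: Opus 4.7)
The plan is to proceed by strong induction on $|N|$. The base case $|N|=1$ is immediate from line~\ref{line:one-agent}: the unique allocation $\{(a,a)\}$ is vacuously in $\SC(H)$, so it is returned exactly when it is not forbidden. For the inductive step I would handle soundness and completeness separately, and then analyze the running time.

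For soundness, suppose the algorithm returns a non-empty $X$. The return happens either on line~\ref{line:return-without-recursion} with $N = T^\star$ and $X = \bigcup_{S\in\T}X_S$, or on line~\ref{line:outputX} with $X = \bigcup_{S\in\T}X_S \cup X'$ where $X' \in \SC(H_{|N'})$ is disjoint from $F'$ by the induction hypothesis. In both cases I would apply Theorem~\ref{thm:peakset-char} with $S := T^\star$, $X_1 := \bigcup_{S\in\T}X_S$, and $X_2 := X'$ (taking $X_2 := \emptyset$ in the first case). Condition~(a) holds because each $X_S \subseteq E_{S,\T} \subseteq U \setminus F$ and the $X_S$'s collectively form an allocation on $T^\star$; condition~(b) holds by the induction hypothesis (trivially when $X_2=\emptyset$); condition~(c) is precisely the $T^\star$-validity of every $X_S$ on line~\ref{line:def-A_ST}. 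Disjointness from $F$ follows by construction.

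For completeness, suppose $X^\circ \in \SC(H)$ disjoint from $F$ exists; I would rule out the two rejection points. Rejection on line~\ref{line:reject} is ruled out by Lemma~\ref{lem:peak-invariant} together with the observation that $D_{[\succeq X^\circ]}$, being a finite digraph containing all self-loops, admits at least one absorbing strongly connected component, so some peak set of $X^\circ$ is non-empty and must remain inside $\bigcup_{T\in\T}T$ throughout the iteration. Rejection on line~\ref{line:reject-recursively} would, by the induction hypothesis, mean $(H_{|N'},F')$ has no solution, so I need to derive a contradiction by constructing such a solution from $X^\circ$. The hard part will be precisely this construction: a direct restriction $X^\circ \cap (N')^2$ need not be an allocation on $N'$, because $X^\circ$ may contain arcs crossing between $T^\star$ and $N'$. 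Instead I plan to peel off peak sets iteratively: let $P^{(0)}$ be any peak set of $X^\circ$ (contained in $T^\star$ by Lemma~\ref{lem:peak-invariant}), invoke Theorem~\ref{thm:peakset-char} to obtain $X^\circ \cap (N \setminus P^{(0)})^2 \in \SC(H_{|N \setminus P^{(0)}})$, then extract a peak set $P^{(1)}$ of this residual, and continue. The key technical step is to show that the successive peak sets can be chosen inside $T^\star \setminus \bigcup_{j<i}P^{(j)}$; this rests on a submarket analog of Lemma~\ref{lem:peak-invariant}, whose proof hinges on the facts that undominated arcs in a submarket remain undominated, and that the algorithm's $T^\star$-valid certificates $X_S$ stay $(T^\star \setminus \bigcup_{j<i}P^{(j)})$-valid in the residual submarket. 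Once this peeling exhausts $T^\star$, the residual of $X^\circ$ on $N'$ is the desired solution.

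For the running time, the recursion depth is at most $|N|$ since each recursive call strictly shrinks the agent set. Within a single invocation, the while loop of lines~\ref{line:iter-begin}--\ref{line:iter-end} executes at most $|\T| \le |N|$ times, as each non-terminating iteration strictly decreases $|\T|$. Each iteration's bottleneck is, via Remark~\ref{rem:find-alloc}, a bipartite perfect matching computation that, by the almost-linear max-flow algorithm of Chen et al.~\cite{ChenKLPPS-CACM23-linear-maxflow}, runs in $O(|E|^{1+o(1)})$ time. Multiplying these three factors yields the claimed $O(|N|^2 \cdot |E|^{1+o(1)})$ bound.
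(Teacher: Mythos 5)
Your base case, your soundness argument via Theorem~\ref{thm:peakset-char}, your use of Lemma~\ref{lem:peak-invariant} to rule out rejection on line~\ref{line:reject}, and your running-time analysis all match the paper. The divergence---and the gap---is in how you rule out rejection on line~\ref{line:reject-recursively}. You correctly observe that $X^\circ \cap (N' \times N')$ is not a priori an allocation because $X^\circ$ might contain arcs crossing between $T^\star$ and $N'$, but your proposed remedy (iteratively peeling off peak sets of successive residual markets until $T^\star$ is exhausted) leaves its crux unproven: you would need that at every stage, as long as $T^\star \setminus \bigcup_{j<i} P^{(j)} \neq \emptyset$, the residual market admits a peak set contained in that remainder---otherwise the peeling terminates early without exhausting $T^\star$, and you never reach the restriction to $N'$. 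Your ``submarket analog of Lemma~\ref{lem:peak-invariant}'' is exactly this claim, and it is not routine: the lemma as stated concerns the run of the algorithm on $H$, whereas the residual markets $H_{|N\setminus\bigcup_{j<i}P^{(j)}}$ are never given to the algorithm, and their sets of undominated arcs (hence their strongly connected components) can differ from the restriction of $U$, since an arc dominated in $H$ only by arcs pointing into $\bigcup_{j<i}P^{(j)}$ becomes undominated in the residual.

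The paper sidesteps the peeling entirely with a short direct argument: if some arc $(a,b) \in X^\circ$ left $T^\star$, it would be dominated by the arc $f=(a,X_S(a))$ of the stored $T^\star$-valid allocation $X_S$ (by the definition of $E_{S,\T}$ on line~\ref{line:def-A_ST}); since $S$ is strongly connected in $D^H[U]$, there is a path in $U$ from $X_S(a)$ back to $a$, and closing it with $f$ yields a cycle contained in $E_{[\succeq X^\circ]}$ that $a$ strictly prefers to $X^\circ$---a blocking cycle, contradiction. Hence no arc of $X^\circ$ leaves $T^\star$, so every cycle of $X^\circ$ lies entirely inside $T^\star$ or entirely inside $N'$, and $X^\circ \cap (N' \times N')$ is immediately an allocation in $\SC(H_{|N'})$ disjoint from $F'$, contradicting the induction hypothesis. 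You should replace the peeling with this one-step blocking-cycle argument, or else actually prove the exhaustion property your construction relies on.
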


\subsection{Extension to Forbidden and Forced Arcs}
\label{sec:extension-to-forced-arcs}
We remark that Algorithm~\ref{alg:SCFA} can also be used to solve the extension of SCFA with \emph{forced} arcs. 

\ifshort
    \begin{center}
    \fbox{ 
    \parbox{0.94\columnwidth}{
   \StrCoreFFA{} (SCFFA): 
    
    \textbf{Input:} A housing market~$H$ with partial orders, a set~$F^+$ of \emph{forced} arcs and a set~$F^-$ of \emph{forbidden} arcs in the underlying directed graph~$D^H$. 
    
    \textbf{Task:}  Decide if there exists an allocation~$X \in \SC(H)$ %
    such that $F^+ \subseteq X$ and $X \cap F^-=\emptyset$, and if so, find such an allocation.
    }}
    \end{center}
\else
    \begin{center}
    \fbox{ 
    \parbox{0.94\columnwidth}{
    \begin{tabular}{l}\StrCoreFFA{} (SCFFA):  \end{tabular} \\
    \begin{tabular}{p{2em}p{.78\columnwidth}}
    Input: & A housing market~$H$ with partial orders, a set~$F^+$ of \emph{forced} arcs and a set~$F^-$ of \emph{forbidden} arcs in the underlying directed graph~$D^H$. \\
    Task: & Decide if there exists an allocation~$X$ in the strong core of~$H$ such that $F^+ \subseteq X$ and $X \cap F^-=\emptyset$, and if so, find such an allocation.
    \end{tabular}
    }}
    \end{center}
\fi

To solve SCFFA in a housing market~$H$, it is sufficient to solve the variant that only involves forbidden arcs.
Indeed, since an allocation contains \emph{exactly one} outgoing arc for each agent, we know that an allocation in~$H$ contains an arc~$(a,b)$ of the underlying graph~$D^H$ if and only if it does not contain any of the arcs~$\delta^{\text{out}}(a) \setminus \{(a,b)\}$ in~$D^H$.
Therefore, we can reduce 
an instance of~$(H,F^+,F^-)$  of SCFFA
to an equivalent instance $(H,F)$ of SCFA
where 
$F=F^- \cup \bigcup_{(a,b) \in F^+}   \left( \delta^{\text{out}}_A(a) \setminus \{(a,b)\} \right). 
$
Thus, we obtain the following consequence of Corollary~\ref{cor:SCFA}.

\begin{corollary}
\label{cor:SCFFA}
SCFFA with $n$ agents and $m$ arcs in the underlying graph can be solved in $O(n^2 \cdot m^{1+o(1)})$ time.
\end{corollary}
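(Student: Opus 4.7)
The plan is to reduce SCFFA to SCFA in linear time and then invoke Corollary~\ref{cor:SCFA} as a black box, exactly along the lines sketched in the paragraph preceding the corollary. Given an instance $(H,F^+,F^-)$ of SCFFA with underlying graph $D^H=(N,E)$, the SCFA instance $(H,F)$ is built with
$$F \;=\; F^- \;\cup\; \bigcup_{(a,b) \in F^+} \bigl( \delta^{\text{out}}(a) \setminus \{(a,b)\} \bigr).$$
This set has size at most $|F^-|+|E|$ and can be assembled in $O(|E|)$ time by scanning once through $F^+$ and, for each $(a,b) \in F^+$, marking every arc in $\delta^{\text{out}}(a)$ other than $(a,b)$ as forbidden.

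The correctness step is to verify that, for every allocation $X$ in $H$, the two conditions ``$F^+ \subseteq X$ and $X \cap F^- = \emptyset$'' and ``$X \cap F = \emptyset$'' are equivalent. The plan is to use the defining property of an allocation, namely $|\delta^{\text{out}}_X(a)| = 1$ for each $a \in N$, which makes $(a,b) \in X$ equivalent to $X \cap (\delta^{\text{out}}(a) \setminus \{(a,b)\}) = \emptyset$. Combining this equivalence over all $(a,b) \in F^+$ with the explicit incorporation of $F^-$ into $F$ gives the required biconditional, and since $H$ is unchanged by the reduction, strong-core membership is preserved on the nose.

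One edge case to flag is the degenerate situation where $F^+$ contains two distinct arcs $(a,b)$ and $(a,c)$ with the same tail: then no allocation can respect $F^+$, and correspondingly $F$ will cover all of $\delta^{\text{out}}(a)$, so the SCFA solver will also (correctly) report infeasibility. No genuine obstacle arises here; the only thing worth double-checking is that the reduction does not inflate the parameters, and indeed both $n$ and $m$ are identical for $(H,F)$ and $(H,F^+,F^-)$, so Corollary~\ref{cor:SCFA} yields the stated $O(n^2 \cdot m^{1+o(1)})$ bound.
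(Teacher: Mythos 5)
Your reduction is exactly the one the paper uses: forbid $F^-$ together with all out-arcs $\delta^{\text{out}}(a)\setminus\{(a,b)\}$ for each forced arc $(a,b)\in F^+$, justify the equivalence via $|\delta^{\text{out}}_X(a)|=1$, and invoke Corollary~\ref{cor:SCFA}. The argument is correct and matches the paper's proof; the edge-case remark about conflicting forced arcs is a harmless bonus.
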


\subsection{Strategyproofness}
\label{sec:strategyproof}
A desirable property of Algorithm~\ref{alg:SCFA} besides its efficiency is that it is
\emph{group-strategyproof}, i.e., it is not possible for a coalition of agents to improve their situation via misrepresenting their preferences. 

To formalize this definition, let $f_{\SCFA}(H,F)$ denote the set of allocations that can be obtained via Algorithm~\ref{alg:SCFA} on input~$(H,F)$; note that there can be several such allocations due to line~\ref{line:Tvalid-def}, where the algorithm may pick an arbitrary one among a set of possible allocations for a given submarket~$H_{|S}$. 
Given a housing market ${H=(N,\{\succ_a:a \in N\})}$ and a coalition~$C \subseteq N$ of agents, 
a \emph{$C$-deviation} from~$H$  is a housing market~$H'=(N,\{\succ'_a:a \in N\})$ where $\succ_a=\succ'_a$ for all $a \in N \setminus C$. 
\begin{definition}
    \label{def:groupSP}
    Let $f$ be a mechanism that assigns to each instance~$(H,F)$ of the \StrCoreFA{} problem a set~$f(H,F) \subseteq \SC(H)$ of solutions for $(H,F)$. We say that $f$ is \emph{group-strategyproof} (with respect to  SCFA) if there does not exist an instance~$(H,F)$ with $H=(N,\{\succ_a:a \in N\})$, a coalition~$C \subseteq N$ and a $C$-deviation~$H'$ from~$H$ such that 
    there are allocations $X \in f(H,F)$ and $X' \in f(H',F)$ for which $X'(c) \succ_c X(c)$ for each $c \in C$.
\end{definition}

We remark that Definition~\ref{def:groupSP} takes into account models deviations where agents may only misreport their preferences but have no possibility to change the set of forbidden arcs.

\begin{restatable}[$\star$]{theorem}{thmgroupSP}
    \label{thm:groupSP}
    Algorithm~\ref{alg:SCFA} is group-strategyproof.
\end{restatable}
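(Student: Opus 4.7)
The plan is to prove Theorem~\ref{thm:groupSP} by contradiction and induction on $|N|$. I would assume that on some instance $(H, F)$ of SCFA a coalition $C \subseteq N$ manipulates via a $C$-deviation $H'$ to obtain allocations $X \in f_{\SCFA}(H, F)$ and $X' \in f_{\SCFA}(H', F)$ with $X'(c) \succ_c X(c)$ for every $c \in C$; the base case $|N|=1$ is immediate since any allocation has a single arc $(a,a)$.

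The first step, which is the clean part of the argument, is to show that the coalition is disjoint from the first peak layer $T^\star_1$ computed by Algorithm~\ref{alg:SCFA} on $(H, F)$. By Lemma~\ref{lem:peak-invariant} and the algorithm's construction, every $c \in T^\star_1 \cap C$ has $(c, X(c))$ undominated in $D^H$, so no arc $(c, y) \in D^H$ satisfies $y \succ_c X(c)$. Since $X'(c) \succ_c X(c) \succeq_c c$ gives $X'(c) \succeq_c c$, the arc $(c, X'(c))$ is in $D^H$, which directly contradicts the undominatedness of $(c, X(c))$. Hence $T^\star_1 \cap C = \emptyset$.

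The main step is to descend to a strictly smaller instance and apply the inductive hypothesis; doing so correctly is the main obstacle. Because $T^\star_1 \cap C = \emptyset$, the coalition $C$ lies entirely in $N \setminus T^\star_1$, and Theorem~\ref{thm:peakset-char} ensures that $X|_{N \setminus T^\star_1}$ and $X'|_{N \setminus T^\star_1}$ remain in the strong cores of $H_{|N \setminus T^\star_1}$ and $H'_{|N \setminus T^\star_1}$, respectively. To invoke the inductive hypothesis, however, I need both restrictions to be possible outputs of Algorithm~\ref{alg:SCFA} on the smaller instances: for the true run this is immediate from the algorithm's recursion, but for the deviation run it requires showing that Algorithm~\ref{alg:SCFA} on $(H', F)$ also removes exactly $T^\star_1$ in its initial iterations without touching any coalition member. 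The key leverage is that internal arcs within $T^\star_1$ and the dominance relations of their tails depend only on the preferences of agents in $T^\star_1 \subseteq N \setminus C$, which are identical in $H$ and $H'$; using this, I plan to argue that the strongly connected components of the undominated subgraph of $D^{H'}$ that intersect $T^\star_1$ lie entirely inside $T^\star_1$, and that the $T^\star$-validity tests on lines~\ref{line:def-A_ST}--\ref{line:add-S-to-R} of the deviation run accept exactly the same family of components that covers $T^\star_1$ in the true run, with $X|_{T^\star_1}$ witnessing validity in both runs.

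Once this invariance is established, the recursive calls of Algorithm~\ref{alg:SCFA} on the submarkets induced by $N \setminus T^\star_1$, with forbidden-arc set $F \cap ((N \setminus T^\star_1) \times (N \setminus T^\star_1))$, can produce $X|_{N \setminus T^\star_1}$ and $X'|_{N \setminus T^\star_1}$ (up to the choice made on line~\ref{line:Tvalid-def}). These restrictions differ only on the preferences of agents in $C \subseteq N \setminus T^\star_1$ and constitute a successful manipulation by the same coalition $C$ on an instance with strictly fewer agents, contradicting the inductive hypothesis and completing the proof.
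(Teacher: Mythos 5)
Your opening step is sound: for a coalition member $c$ that the true run removes in its first layer $T^\star_1$, the arc $(c,X(c))$ is undominated in $D^H$, and since $X'(c)\succ_c X(c)\succeq_c c$ forces $(c,X'(c))$ to be an arc of $D^H$ dominating $(c,X(c))$, you correctly conclude $C\cap T^\star_1=\emptyset$. This matches the paper's observation that the first coalition member ever removed must receive an undominated arc and hence can only envy agents removed earlier.

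The genuine gap is in your main step. To invoke the inductive hypothesis you need $X'|_{N\setminus T^\star_1}\in f_{\SCFA}(H'_{|N\setminus T^\star_1},F')$, and for that you assert that the run on $(H',F)$ removes \emph{exactly} $T^\star_1$ in its first layer. This is false in general. What one can show is only containment: since no undominated arc leaves $T^\star_1$ and the preferences of agents in $T^\star_1$ are untouched, the components inside $T^\star_1$ survive the whittling of lines~\ref{line:iter-begin}--\ref{line:iter-end} in the deviated run as well, so the deviated first layer \emph{contains} $T^\star_1$. But the deviation can alter the undominated-arc and component structure \emph{outside} $T^\star_1$ (that is precisely what a manipulating coalition would do), so the deviated first layer may strictly contain $T^\star_1$ and may even contain coalition members. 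Then $X'|_{N\setminus T^\star_1}$ is assembled partly in the first call and partly in deeper recursion of the deviated run, and there is no reason it should be a possible output of Algorithm~\ref{alg:SCFA} started afresh on $H'_{|N\setminus T^\star_1}$ --- the submarket has its own, generally larger, set of undominated arcs and different components. Nor can you retreat to ``$X'|_{N\setminus T^\star_1}$ is in the strong core of the submarket, hence producible'': the paper shows (Example~\ref{ex4}) that with partial orders the algorithm cannot output every strong-core allocation, so membership in $\SC$ does not imply membership in $f_{\SCFA}$. The paper's proof sidesteps this entirely: it never analyzes the deviated run, only uses that $X'\in\SC(H')$ (by Theorem~\ref{thm:alg-SCFA-correct}), and derives a contradiction by exhibiting a blocking cycle for $X'$ built from $X$-arcs inside the layers removed before any coalition member --- first proving that all such agents weakly prefer $X$ to $X'$, then using that the $X'$-arc of $c$ entering those layers forces an $X'$-arc leaving them, which is dominated by the corresponding $X$-arc. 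If you replace your recursion-matching argument with a direct blocking-cycle construction of this kind, the proof goes through.
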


\begin{proofsketch}
The proof relies on the observation that, roughly speaking, whenever an agent~$p$ is removed from the market during Algorithm~\ref{alg:SCFA}, the allocation fixed for~$p$ is the best $p$ can hope for, as long as no agent removed earlier than (or together with)~$p$ is contained in the deviating coalition~$C$.
Using this claim for the first agent~$c$ of~$C$ %
removed from the market, one can show that there is at least one agent removed before~$c$ for which the allocation obtained as a result of the deviation is disadvantageous; this gives rise to a blocking cycle, leading to a contradiction that proves the result. 
\ifshort 
\else See Appendix~\ref{app:proof-groupSP} for the details.
\fi 
\end{proofsketch}

\subsection{Finding All Allocations in the Strong Core}
\label{sec:algo-prop}

It is known that the Quint--Wako algorithm can find all allocations in the strong core of a housing market~$H=(N,\{ \succ_a\}_{a \in N})$ in which agents' preferences are described by weak orders. More precisely, if $S$ is an absorbing set of the graph~$D^H$ underlying~$H$, then by choosing an allocation (consisting of undominated arcs) for the submarket~$H_{|S}$ appropriately, and using also appropriate choices within the recursive call on~$H_{|(N \setminus S)}$, the Quint--Wako algorithm can return any fixed allocation~$X$ in the strong core of~$H$.

Interestingly, this also holds for Algorithm~\ref{alg:SCFA} if agents' preferences are weak orders; hence, Algorithm~\ref{alg:SCFA} is as powerful as the Quint--Waco algorithm when used for enumerating the strong core. By contrast, if agents' preferences are partial orders, then 
the strong core may contain allocations that can never be returned by Algorithm~\ref{alg:SCFA}. 
For details and examples, see 
\ifshort
the full version.
\else
Appendix~\ref{app:comparison_QW}
\fi

\section{Properties of the strong core
\label{sec:properties}}
In previous sections, we already mentioned that the strong core may be empty when we assume that agents' preferences are partial orders. If the strong core is not empty, then we are interested in knowing some of its most relevant characteristics.

\smallskip
\noindent
{\bf Incomparability of strong core allocations.} We start with the remarkable fact that all agents find any two allocations~$X$ and~$Y$ in the strong core incomparable. This is stated in Lemma~\ref{lem:strongcore-solutions-incomp}
whose proof relies on structural observations about the arc set obtained, roughly speaking, by letting each agent~$a$ choose between~$(a,X(a))$ and~$(a,Y(a))$ according to its 
\ifshort preferences.
\else preferences (see Appendix~\ref{app:proof-of-lemSCincomp}).
\fi
 
\begin{restatable}[$\star$]{lemma}{lemstrongcoresolutionsincomp}
    \label{lem:strongcore-solutions-incomp}
    Given a housing market $H=(N,\{\succ_a:a \in N\})$ with partial order preferences, and allocations $X$ and~$Y$ in the strong core of~$H$,
    we have $X \sim_a Y$ for each agent~$a \in N$.
\end{restatable}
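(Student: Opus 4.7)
The plan is to prove the lemma by induction on $|N|$, with the trivial base case $|N|=1$. For the inductive step, introduce the sets
\[
    N^+ = \{v \in N : Y(v) \succ_v X(v)\}, \qquad N^- = \{v \in N : X(v) \succ_v Y(v)\},
\]
and observe that the claim is equivalent to $N^+ = N^- = \emptyset$. I will assume for contradiction that $N^+ \neq \emptyset$ (the case $N^- \neq \emptyset$ being completely symmetric under the swap $X \leftrightarrow Y$) and fix some $a^* \in N^+$.

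The central structural observation---which realises the hint that each agent ``chooses between $(a,X(a))$ and $(a,Y(a))$ according to its preferences''---is the following: \emph{no agent of $N^+$ lies in any peak set of $X$}. Indeed, suppose $a \in S \cap N^+$ for some peak set $S$ of $X$. Then $(a,Y(a)) \in E_{[\succ X]} \subseteq E_{[\succeq X]}$; absorbingness of $S$ in $D_{[\succeq X]}$ forces $Y(a) \in S$, and strong connectedness of $S$ provides a directed path from $Y(a)$ back to $a$ within $D_{[\succeq X]}|_S$. Concatenating this path with the arc $(a,Y(a))$ yields a cycle in $D^H$ whose every arc $(v,w)$ satisfies $w \succeq_v X(v)$ and which is strictly improving at $a$; this is a blocking cycle for $X$, contradicting $X \in \SC(H)$.

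With this in hand, I pick any peak set $S$ of $X$; since $a^* \notin S$ by the above, $S \subsetneq N$. I then split on whether $Y$ preserves $S$. If $Y(S)=S$, then $X$ and $Y$ each restrict to allocations of $H_{|S}$ and of $H_{|N\setminus S}$; since any blocking cycle within either submarket lifts directly to one in $H$, all four restrictions belong to the strong cores of the respective submarkets. Applying the inductive hypothesis to both submarkets gives $X \sim_v Y$ for every $v \in N$, contradicting $a^* \in N^+$. If instead $Y(S) \neq S$, pick $p \in S$ with $Y(p) \notin S$; by Lemma~\ref{lem:strcore-peakset} the arc $(p,Y(p))$ is dominated by an $X$-arc, so $X(p) \succ_p Y(p)$ and hence $p \in N^-$. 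The same lemma guarantees that the $X$-cycle through $p$ lies entirely in $S$. If no agent on this cycle lay in $N^+$, then every $u$ on it would satisfy $X(u) \succeq_u Y(u)$ while $p$ would contribute a strict improvement, so the cycle would block $Y$---contradicting $Y \in \SC(H)$. Hence the cycle must meet $N^+$ at some $q \in S$, contradicting the structural observation.

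The step I expect to be the main obstacle is the structural observation itself: it is what converts the abstract definition of a peak set (an absorbing strongly connected component of $D_{[\succeq X]}$) into a concrete blocking cycle built from the ``improving'' arc $(a,Y(a))$ together with a return path inside $S$, and it is the hinge on which the whole induction turns. Once it is proved, the case split on whether $Y(S)=S$ and the symmetric treatment of $N^-$ are routine consequences of Lemma~\ref{lem:strcore-peakset} and the decomposition in Theorem~\ref{thm:peakset-char}.
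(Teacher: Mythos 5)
Your proof is correct, but it takes a genuinely different route from the paper's. The paper proves this lemma directly, without induction and without the peak-set machinery: starting from an agent $a$ with $Y \succ_a X$, it grows a walk $W$ arc by arc, letting each agent $b$ choose between $(b,X(b))$ and $(b,Y(b))$ via a two-state (``$X$-preferring''/``$Y$-preferring'') rule, and stops when the walk closes into a cycle $C$. It then examines the last agent $u$ at which a strict choice was made: if $u$ lies on $C$, then $C$ is a blocking cycle for $X$ or for $Y$; if not, all arcs after $u$ come from a single allocation, and the vertex where the tail path meets $C$ receives two arcs of that allocation, contradicting that it is an allocation. Your argument instead inducts on $|N|$ and leans on Lemma~\ref{lem:strcore-peakset}: the key step that no agent of $N^+$ lies in a peak set of $X$ is sound (the arc $(a,Y(a))$ cannot leave the absorbing set, and strong connectedness closes it into a blocking cycle for $X$), and the two cases $Y(S)=S$ versus $Y(S)\neq S$ are handled correctly --- in the latter, domination of arcs leaving $S$ forces $p \in N^-$, and the $X$-cycle through $p$ inside $S$ either blocks $Y$ or meets $N^+$ inside $S$. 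The only point you leave implicit is the existence of a peak set, which follows because the condensation of $D_{[\succeq X]}$ is acyclic and hence has a sink; the paper also uses this tacitly. What each approach buys: your decomposition is structurally cleaner and reuses the characterization already developed in Section~\ref{sec:mychar}, whereas the paper's walk construction is self-contained and, more importantly, is re-deployed almost verbatim in the proof of Theorem~\ref{thm:RI-holds}, where one compares allocations across \emph{two different markets} $H$ and $H'$ and no common peak-set structure is available --- so the paper's choice is motivated by that later reuse rather than by necessity here.
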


An important consequence of Lemma~\ref{lem:strongcore-solutions-incomp} is that even though there might be allocations in the strong core that can never be found by Algorithm~\ref{alg:SCFA}, this is not really relevant from the viewpoint of the agents, as all agents find %
any two allocations in the strong core incomparable.
Lemma~\ref{lem:strongcore-solutions-incomp} also plays a crucial role in proving that the strong core respects improvement.

\smallskip\noindent
{\bf The effect of improvements on the strong core.}
We next aim to examine the effect of a change in the market when the house of some agent~$p$ improves---is such a change necessarily (weakly) advantageous for~$p$ in terms of the strong core? Such questions are essential to understand the incentives of agents to improve their endowments. In the context of a kidney exchange program, 
does bringing a ``better'' (e.g., younger) donor always benefits the patient, assuming that the %
program offers a strong core allocation?

To formalize the concept of improvement, consider two housing markets $H=(N,\{ \succ_a : a \in N \})$ and $H'=(N,\{  \succ'_a : a \in N \})$ over the same agent set~$N$. For two agents $p,q \in N$, we say that $ H'$ is a \emph{$(p,q)$-improvement} of~$H$ if $H'$ is obtained from~$H$ by shifting the position of~$p$ ``upward'' in the preferences of agent~$q$; see also Example~\ref{ex5}. Formally, we require that
\begin{enumerate}
    \item for each $a \in N \setminus \{q\}$, $\succ_a =  \succ'_a$, that is, we only allow changes in the preferences of agent~$q$;
    \item for each $a \in N \setminus \{p \}$, $p \succ_q a$ implies $p \succ'_q a$, 
    and $p \succeq_q a$ implies $p \succeq'_q a$, 
    that is, agent~$p$ may only become more preferred for agent~$q$ in $\succ'$, and not less preferred;
    \item for each $a,b \in N \setminus \{p \}$, we have $a \succ_q b$ if and only if $a \succ'_q b$, that is, the preferences for agent~$q$ remain unchanged when comparing two agents not involving~$p$.
\end{enumerate}

Now, a \emph{$p$-improvement} of~$H$ is the result of a sequence of $(p,q_i)$-improvements from $H$ for some series of agents $q_1,\dots,q_k \in N$.

\begin{figure}[ht]
\centering
    \begin{subfigure}{0.32\columnwidth}
    \centering
    \resizebox{1\columnwidth}{!}{       
        \begin{tikzpicture}[scale=1]
        	        	\node[draw,circle,inner sep=1.5pt, minimum size=18pt] (a) at (0,1) {$a$};
          	\node[draw,circle,inner sep=1.5pt, minimum size=18pt] (b) at (0,-1) {$b$};
          	\node[draw,circle,inner sep=1.5pt, minimum size=18pt] (c) at (2.5,1) {$c$};
        	\node[draw,circle,inner sep=1.5pt, minimum size=18pt] (d) at (2.5,-1) {$d$};  	
          	\draw[line width=1.4pt,arrows={-stealth},color=domcolor] (a) to [bend left=20] (b);
          	\draw[line width=1.4pt,arrows={-stealth},color=domcolor] (b) to [bend left=20] (a);
         	\draw[line width=1.4pt,arrows={-stealth},color=domcolor] (d) to [bend left=20] (c);
          	\draw[line width=1.4pt,arrows={-stealth},color=undomcolor] (c) to [bend left=20] node[midway,sloped] {$\boldsymbol{|}$} (d);
          	\draw[line width=1.4pt,arrows={-stealth},color=domcolor] (c) to [bend right=10] node[midway,sloped] {$\boldsymbol{||}$} (a);
            
          	\draw[rectangle,line width=0.7pt,dashed,rounded corners] (-0.6,1.5) rectangle (0.6,-1.5) {};

          	\draw[rectangle,line width=0.7pt,dashed,rounded corners] (1.9,1.5) rectangle (3.1,0.3) {};

          	\draw[rectangle,line width=0.7pt,dashed,rounded corners] (1.9,-1.5) rectangle (3.1,-0.45) {};

        \end{tikzpicture}
        }
        \caption{Original housing market~$H^3$.}
        \label{fig:ex5-orig}
    \end{subfigure}
    \hfill
    \begin{subfigure}{0.32\columnwidth}
    \centering
    \resizebox{1\columnwidth}{!}{       
        \begin{tikzpicture}[scale=1]
        	\node[draw,circle,inner sep=1.5pt, minimum size=18pt] (a) at (0,1) {$a$};
          	\node[draw,circle,inner sep=1.5pt, minimum size=18pt] (b) at (0,-1) {$b$};
          	\node[draw,circle,inner sep=1.5pt, minimum size=18pt] (c) at (2.5,1) {$c$};
        	\node[draw,circle,inner sep=1.5pt, minimum size=18pt] (d) at (2.5,-1) {$d$};  	
          	\draw[line width=1.4pt,arrows={-stealth},color=domcolor] (a) to [bend left=20] (b);
          	\draw[line width=1.4pt,arrows={-stealth},color=domcolor] (b) to [bend left=20] (a);
         	\draw[line width=1.4pt,arrows={-stealth},color=domcolor] (d) to [bend left=20] (c);
          	\draw[line width=1.4pt,arrows={-stealth},color=undomcolor] (c) to [bend left=20] node[midway,sloped] {$\boldsymbol{|}$} (d);
          	\draw[line width=1.4pt,arrows={-stealth},color=domcolor] (c) to [bend right=10] node[midway,sloped] {$\boldsymbol{||}$} (a);
          	\draw[line width=1.4pt,arrows={-stealth},color=domcolor] (b) to (c);

            \node (a') at (-0.5,1.38) {};
			\node (b') at (-0.5,-1.38) {};
			\node (b'') at (0.3,-1.38) {};
			\node (c'') at (3.1,0.6) {};
			\node (c') at (3.1,1.38) {};
			\draw[rounded corners=2mm,line width=0.7pt,dashed] (a'.north) -- (b'.south) -- (b''.south)  -- (c''.north) -- (c'.north) -- cycle;

          	\draw[rectangle,line width=0.7pt,dashed,rounded corners] (1.9,-1.5) rectangle (3.1,-0.45) {};        
        \end{tikzpicture}
        }
        \caption{Market~$H^4$: a $(c,b)$-improvement of~$H^3$.}
        \label{fig:ex5-improves}
    \end{subfigure}
    \hfill
    \begin{subfigure}{0.32\columnwidth}
    \centering
    \resizebox{1\columnwidth}{!}{       
        \begin{tikzpicture}[scale=1]
        	        	\node[draw,circle,inner sep=1.5pt, minimum size=18pt] (a) at (0,1) {$a$};
          	\node[draw,circle,inner sep=1.5pt, minimum size=18pt] (b) at (0,-1) {$b$};
          	\node[draw,circle,inner sep=1.5pt, minimum size=18pt] (c) at (2.5,1) {$c$};
        	\node[draw,circle,inner sep=1.5pt, minimum size=18pt] (d) at (2.5,-1) {$d$};  	
          	\draw[line width=1.4pt,arrows={-stealth},color=domcolor] (a) to [bend left=20] (b);
          	\draw[line width=1.4pt,arrows={-stealth},color=domcolor] (b) to [bend left=20] (a);
         	\draw[line width=1.4pt,arrows={-stealth},color=domcolor] (d) to [bend left=20] (c);
          	\draw[line width=1.4pt,arrows={-stealth},color=undomcolor] (c) to [bend left=20] node[midway,sloped] {$\boldsymbol{|}$} (d);
          	\draw[line width=1.4pt,arrows={-stealth},color=domcolor] (c) to [bend right=10] node[midway,sloped] {$\boldsymbol{||}$} (a);

          	\draw[line width=1.4pt,arrows={-stealth},color=domcolor] (a) to [bend right=20] (c);

			\node (a') at (-0.5,1.38) {};
			\node (b') at (-0.5,-1.38) {};
			\node (b'') at (0.3,-1.38) {};
			\node (c'') at (3.1,0.6) {};
			\node (c') at (3.1,1.38) {};

			\draw[rounded corners=2mm,line width=0.7pt,dashed] (a'.north) -- (b'.south) -- (b''.south)  -- (c''.north) -- (c'.north) -- cycle;

          	\draw[rectangle,line width=0.7pt,dashed,rounded corners] (1.9,-1.5) rectangle (3.1,-0.45) {};
        \end{tikzpicture}
        }
        \caption{Market~$H^5$, a $(c,a)$-improvement of~$H^3$.}
        \label{fig:ex5-empty}
    \end{subfigure}
    
    \Description{The underlying graphs of housing markets $H^3,H^4$ and $H^5$ defined in Example~\ref{ex5}.}
    \caption{Illustration for Example~\ref{ex5}. 
    }
\end{figure}
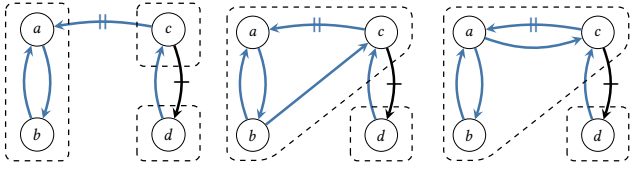

\begin{example}
\label{ex5}
Consider the following market~$H^3$ over agent set $N=\{a,b,c,d\}$; see Figure~\ref{fig:ex5-orig}. The acceptability sets are $A(a)=A(b)=\{a,b\}$, $A(c)=\{a,c,d\}$, and $A(d)=\{c,d\}$.
For each agent~$x$ and $y \in A(x) \setminus \{x\}$, we let $y \succ_x x$; 
additionally, we set $a \succ_c d$. 
Then $\SC(H^3)=\{X_{\textup{orig}}\}$ where $X_{\textup{orig}}=\{(a,b),(b,a),(c,c),(d,d)\}$. 

Let $H^4$ be the $c$-improvement  of~$H^3$ where the preferences are defined the same way as in~$H^3$ but $b$ now prefers~$c$ to its own house, so its acceptability set becomes $A(b)=N \setminus \{d\}$; see Figure~\ref{fig:ex5-improves}. Then %
$\SC(H^4)$ contains the unique allocation $\{(a,b),(b,c),(c,a),(d,d)\}$ which $c$ prefers to~$X_{\textup{orig}}$. 

Let now $H^5$ be the $c$-improvement  of~$H^3$ where the preferences are defined the same way as in~$H^3$ but $a$ now prefers~$c$ to its own house; see Figure~\ref{fig:ex5-empty}. The strong core of~$H^5$ then becomes empty, since the strongly connected component $\{a,b,c\}$ within the subgraph of undominated arcs does not contain an allocation. 
\end{example}

To study the effects of change in a housing market, we investigate how the strong core changes when a $p$-improvement occurs for some agent~$p \in N$. %
Particularly, we aim to understand whether a $p$-improvement is always beneficial (or, at least, not detrimental) to agent~$p$; hence, we aim to check the following property. 
\begin{definition}
\label{def:RI-prop-strongcore}
    The strong core of housing markets \emph{respects improvement}, if 
for all housing markets~$H$ and~$H'$ such that $H'$ is a $p$-improvement of~$H$ for some agent~$p$, we have $X' \succeq_p X$ for all allocations $X \in \SC(H,p)$ and $X' \in \SC(H',p)$.
\end{definition}
Note %
that 
Definition~\ref{def:RI-prop-strongcore} allows the strong core of~$H'$ to become empty as the result of an improvement, which may indeed happen if agents' preferences are weak or partial orders,
as we have already seen in Example~\ref{ex5}.

\begin{remark}
Notably, Lemma~\ref{lem:strongcore-solutions-incomp} does not hold for the core of a housing market with indifferences.
Hence, Bir\'o et al.~\cite{BKKV-mor} introduced the following notion: the core of housing markets \emph{respects improvement for the best available house}
 if 
for all housing markets~$H$ and~$H'$ such that $H'$ is a $p$-improvement of~$H$ for some agent~$p$, $X' \succeq_p X$ holds  whenever $X$ and $X'$ are among the most-preferred allocations by~$p$ within the core of~$H$ and of~$H'$, respectively;
the respecting improvement property for the \emph{worst} available house is defined analogously. These two notions coincide for the strong core with the property expressed in Definition~\ref{def:RI-prop-strongcore}.
Schlotter et al.~\cite{SBF2024core} proved that the core of housing markets respects improvement for the best available house, but 
not for the worst available house.
\end{remark}

Table~\ref{tab:summary} summarizes the results regarding the property of respecting improvements for the strong core and the core,  including Theorem~\ref{thm:RI-holds}, the main result of this section. Although the proof of Theorem~\ref{thm:RI-holds} is elementary in its technique, it requires certain insights and uses ideas from the proof of 
\ifshort Lemma~\ref{lem:strongcore-solutions-incomp}.
\else Lemma~\ref{lem:strongcore-solutions-incomp} (see Appendix~\ref{app:proof-of-RI} for details.
\fi 

\begin{restatable}[$\star$]{theorem}{thmRIholds}
    \label{thm:RI-holds}
    Let $H=(N,\{\succ_a:a \in N\})$ be a housing market in  which agents' preferences are partial orders, and let $H'$ be a $p$-improvement of~$H$ for some agent~$p$. 
    Then for all allocations~$X$ and~$X'$ in the strong core of~$H$ and~$H'$, respectively, 
    it holds that $X' \succeq_p X$. 
\end{restatable}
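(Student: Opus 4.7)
The plan is to argue by contradiction after reducing, via an induction on the length of the improvement sequence, to the case where $H'$ is a single $(p,q)$-improvement of~$H$ for some agent~$q$. Suppose for contradiction that $X \in \SC(H)$, $X' \in \SC(H')$, and $X(p) \succ_p X'(p)$; observe that $\succ_p$ coincides with $\succ'_p$, since the improvement does not alter $p$'s own preferences.

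First I would apply Lemma~\ref{lem:strongcore-solutions-incomp} separately in $H$ and in~$H'$ to deduce $X \notin \SC(H')$ and $X' \notin \SC(H)$. Indeed, if $X$ belonged to $\SC(H')$, the lemma applied to~$H'$ would yield $X(p) \sim'_p X'(p)$, hence $X(p) \sim_p X'(p)$, contradicting the assumption; a symmetric argument handles~$X'$ in~$\SC(H)$. Consequently there exist a cycle~$C$ blocking~$X$ in~$H'$ and a cycle~$C'$ blocking~$X'$ in~$H$, and by the strong-core memberships of~$X$ and~$X'$ neither cycle can block in the ``other'' market.

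Next I would carry out a case analysis to locate exactly why each cycle fails in the other market. Because the $(p,q)$-improvement only shifts the position of~$p$ upward in~$q$'s preferences, a routine check of the four sub-cases---according to whether $C(q)$ or $X(q)$ equals~$p$---using both condition~(2) of the improvement and its contrapositive, shows that the only way $C$ fails to block~$X$ in~$H$ is if $q \in C$, $C(q) = p$, $X(q) \neq p$, and $X(q) \succ_q p$ or $X(q) \sim_q p$ holds in~$H$ (with the strict improvement of~$C$ being lost in the latter sub-case). A symmetric case analysis for~$C'$ forces $q \in C'$, $X'(q) = p$, $C'(q) \neq p$, with $C'(q) \succeq_q p$ in~$H$ but $p \succ'_q C'(q)$ in~$H'$. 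In particular, $X(q) \neq p$, $X'(q) = p$, and the $X'$-cycle through~$q$ necessarily passes through~$p$.

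The main obstacle is to exploit this pinned-down structure to produce an explicit cycle in~$D^H$ that blocks~$X$ in~$H$ (or, dually, one that blocks~$X'$ in~$H'$), yielding the final contradiction. Borrowing the technique from the proof of Lemma~\ref{lem:strongcore-solutions-incomp}, I would introduce an auxiliary function $\phi:N \to N$ that selects, at each agent~$a$, an out-arc whose head is weakly preferred over~$X(a)$ under~$\succ_a$, chosen from the arcs appearing in $X$, $X'$, $C$, and~$C'$, with a tiebreaker engineered to avoid the problematic arc~$(q,p)$ and to route~$q$'s ``support'' through the rest of the $X'$-cycle containing~$p$. Tracing $\phi$ starting from~$p$ until a cycle is reached, one aims to produce a cycle whose arcs all lie in~$E_{[\succeq X]}$ of~$H$ and which carries at least one strict improvement; these two properties together give a blocking cycle for~$X$ in~$H$. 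The delicate part is to verify that the partial-order incomparabilities---especially those generated around~$q$ and~$p$ by the improvement---do not obstruct the propagation of strict improvement along the cycle, and that the tiebreaker indeed bypasses~$(q,p)$; this is where the structural information collected in the previous paragraph becomes essential.
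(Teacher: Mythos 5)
Your proposal has two genuine gaps. The first is the opening reduction, via induction on the length of the improvement sequence, to a single $(p,q)$-improvement: this does not work. Chaining the single-step conclusion along $H=H_0,H_1,\dots,H_k=H'$ would require both that every intermediate market~$H_i$ has a nonempty strong core (which can fail --- an improvement can empty the strong core, cf.~Example~\ref{ex5} --- and it may become nonempty again later), and that $\succeq_p$ is transitive. The latter is precisely what is lost in passing from weak to partial orders: from $X'(p)\sim_p X_{k-1}(p)$ and $X_{k-1}(p)\sim_p X(p)$ one cannot conclude $X(p)\not\succ_p X'(p)$. So the argument must handle the whole sequence of $(p,q_i)$-improvements at once. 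The second gap is that the heart of your argument --- assembling a blocking cycle from the arcs of $X$, $X'$, $C$ and~$C'$ via the selector~$\phi$ --- is only sketched, and you yourself flag the decisive verification (that a strict improvement survives on the constructed cycle and that the arc $(q,p)$ is bypassed) as unresolved. That is exactly where the difficulty of the statement lies, so as written the proposal is not a proof.

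For comparison, the paper's argument needs neither the reduction nor the auxiliary cycles $C,C'$: assuming $X \succ_p X'$, it runs the walk of Lemma~\ref{lem:strongcore-solutions-incomp} on the pair $(X,X')$ starting at~$p$ in the $X$-preferring state, under the preferences of~$H'$. If the resulting cycle avoided~$p$, the lemma's analysis would yield a cycle blocking $X$ in~$H'$ (it cannot block $X'$, as $X'\in\SC(H')$); but the only comparisons differing between $H$ and~$H'$ are those in which $p$ became more preferred for some~$q_i$, so such a cycle would have to contain an arc $(q_i,p)$ --- impossible if it avoids~$p$ --- and otherwise it would already block $X$ in~$H$. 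Hence the cycle passes through~$p$, coincides with the whole walk, gives $p$ the house $X(p)\succ_p X'(p)$, and therefore blocks $X'$ in~$H'$, a contradiction. If you want to repair your approach, the single observation to aim for is that only arcs with head~$p$ can change their blocking status between the two markets.
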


\begin{table}[th]   
\caption{Summary of results on the effect of improvements. RI stands for ``respects improvement'', while RI-best / RI-worst stands for ``respects improvement for the best / worst available house'', respectively. We use \checkmark\ to indicate that the given property  holds; \strongcheckmark\ additionally signals that the strong core (or core) cannot become empty as the result of a $p$-improvement; \xmark\  means that the given property fails to hold.\label{tab:summary}
}
    \centering
    \begin{tabular}{@{}l|c@{\hspace{6pt}}c@{\hspace{6pt}}c@{}}
Preferences & Strong core RI 
&  Core RI-best & Core RI-worst \\ 
\midrule
Strict orders	&	
\strongcheckmark~\cite{BKKV-mor}	&	
\strongcheckmark~\cite{SBF2024core} &	
\xmark~\cite{SBF2024core} \\
Weak orders	&	
\checkmark~\cite{BKKV-mor}	&	
\strongcheckmark~\cite{SBF2024core} &	
\xmark~\cite{SBF2024core} \\
Partial orders	&	
\checkmark~Thm.~\ref{thm:RI-holds} \,\,\,	&	
\strongcheckmark~\cite{SBF2024core} &	
\xmark~\cite{SBF2024core} 
    \end{tabular}
\end{table}

\balance

\smallskip\noindent
{\bf Formulation as an integer linear program.}
Quint and Wako were the first to provide an ILP for the strong core of housing markets~\cite{WakoQuint}. 
Subsequently, Bir\'o, Klijn, Klimentova, and Viana~\cite{BKKV-mor} provided an improved ILP formulation for the strong core
where the number of constraints is only linear in the size of the market. 
Bir\'o et al.~\cite{BKKV-mor} assumed weakly ordered preferences, hence it is not immediately clear whether their formulation remains correct for partial order preferences. 
We prove that the ILP formulation for the strong core by Bir\'o et al.\ remains sound even when agents' preferences are partial orders;
see 
\ifshort the full version
\else Appendix~\ref{app:ILP}
\fi for the details.

\section{Conclusion}
\label{sec:conclusion}
We investigated the strong core of housing markets in a model where agents' preferences are  partial orders, filling a gap in the literature. Our main result is a polynomial-time algorithm for obtaining an allocation in the strong core whenever the strong core is not empty; 
our algorithm is group-strategyproof and can accommodate forbidden and forced arcs.
We also discovered important properties of the strong core in housing markets with partial order preferences. 
In particular, we showed that the strong core in such housing markets respects improvements, and that 
the ILP proposed by Bir\'o et al.~\cite{BKKV-mor} for the strong core remains sound even when agents' preferences are partial orders.

As the strong core can be empty in housing markets with indifferences, it would be interesting to study allocations that are ``nearly in the strong core''. E.g., can we find allocations for which a few agents cover all blocking cycles? Such an allocation could be stabilized %
by compensating these agents, thus preventing deviations. Or can we find an allocation from which no deviation yields a strict improvement for at least $k$ agents for some constant $k \in \mathbb{N}$?

\begin{acks}
I. Schlotter is supported by the Hungarian Academy of Sciences under its Momentum Programme (LP2021-2) and its J\'anos Bolyai Research Scholarship.
L.M. Mendoza-Cadena was supported by the Lend\"ulet Programme of the Hungarian Academy of Sciences -- grant number LP2021-1/2021; and by the Ministry of Innovation and Technology of Hungary -- grant number ELTE TKP 2021-NKTA-62.
\end{acks}

\bibliographystyle{ACM-Reference-Format} 
\bibliography{core}

\ifshort

\else
\clearpage
\begin{appendices}
\section{Basics on graphs}
\label{app:prelim}
Consider a \emph{directed graph} $D=(V,E)$ where $V$ is the set of \emph{vertices} and $E \subseteq V \times V$ is the set of \emph{arcs}. 
An arc~$(u,v)$ leading from~$u$ to~$v$ has \emph{tail}~$u$ and \emph{head}~$v$. A \emph{loop} is an arc~$(v,v)$ for some~$v \in V$.
Given a set~$U \subseteq V$ of vertices, we say that an arc~$(u,v)$ \emph{leaves}~$U$ if $u \in U$ but $v \notin U$; 
similarly, $(u,v)$ \emph{enters}~$U$ if~$u \notin U$ but $v \in U$.

A set~$P \subseteq E$ is a \emph{path} in~$D$, if there exist distinct vertices $v_1,\dots,v_{\ell}$ in~$V$ such that $P=\{(v_i,v_{i+1}) :i \in [\ell-1]\}$; the vertices $v_1,\dots,v_\ell$ \emph{appear} on~$P$, and~$P$ \emph{starts} at~$v_1$ and \emph{ends} at~$v_\ell$, or equivalently, \emph{leads} from~$v_1$ to~$v_\ell$. An $(u,v)$-path is a path that starts at $u$ and ends at $v$.
A \emph{cycle} in~$D$ is the union of an arc~$(u,v)$ and a path leading from~$v$ to~$u$.
A graph~$D'=(V',E')$ is a \emph{subgraph} of~$D$ if $V'\subseteq V$ and $E' \subseteq E$; sometimes we may treat paths or cycles as subgraphs instead of arc sets.
The subgraph of~$D$ \emph{induced by} a set~$V' \subseteq V$ of vertices is obtained by deleting all vertices of~$V \setminus V'$ from~$D$ together with all incident arcs. 
The subgraph of~$D$ \emph{spanned by} a set~$E' \subseteq E$ of arcs is the graph~$(V,E')$.

A \emph{strongly connected component} of~$D$ is an inclusion-wise maximal set~$U \subseteq V$ of vertices such that for all $u,u' \in U$ there exists a path leading from~$u$ to~$u'$, and also a path leading from~$u'$ to~$u$.
By \emph{contracting} $U \subseteq V$, we mean the operation of adding a newly introduced vertex~$u^\star$, 
replacing the head of each arc entering~$U$ with~$u^\star$, 
replacing the tail of each arc leaving~$U$ with~$u^\star$, 
and then deleting the vertices of~$U$; note that this operation does not create loops. It is known that the strongly connected components of~$D$ yield a partition of~$V$, and contracting each strongly connected component results in an \emph{acyclic} digraph, i.e., a directed graph without cycles.
\section{Additional material for Section~\ref{sec:characterization}}
\subsection{Necessary Conditions for Strong Core Allocations}
\label{app:necessary}
Here we prove that conditions~(a) and~(b) stated in Theorem~\ref{thm:QW-char} are \emph{necessary} for a given allocation to be in the strong core also in the case when agents' preferences are partial orders, as stated in the following proposition.

\begin{proposition}
\label{prop:QW-necessary}
Suppose that $H=(N,\{\succ_a:a \in N\})$ is a housing market where each $\succ_a$ is a partial order. Again, let $U$ denote the set of undominated arcs in~$D^H$, and let $S$ be an absorbing set in~$D^H[U]$. 
If $X$ is an allocation in the strong core of~$H$, then  conditions~(a) and~(b) of Theorem~\ref{thm:QW-char} hold for~$X$.
\end{proposition}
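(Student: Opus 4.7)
The plan is to show both (a) and (b) by exploiting a single structural observation about undominated arcs together with the strong connectivity of $S$ in $D^H[U]$ and the fact that $S$ is absorbing. I would first prove that every arc of $X$ with tail in $S$ must stay in $S$ and be undominated; conditions (a) and (b) will then follow by standard bookkeeping.

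The key preliminary observation I would record is that if $(u, v)$ is undominated in $D^H$, then $v \succeq_u Y(u)$ for every allocation~$Y$. Indeed, undominatedness of $(u, v)$ means no arc $(u, w)$ satisfies $w \succ_u v$; in particular $Y(u) \not\succ_u v$, which is precisely $v \succeq_u Y(u)$. This observation will let weak improvement propagate along any undominated path.

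The heart of the argument is a blocking-cycle construction. Suppose for contradiction that some $a \in S$ has $(a, X(a))$ either leaving $S$ or being dominated. Since $S$ is absorbing in $D^H[U]$, every undominated outgoing arc from $a$ stays inside~$S$, so in either case $(a, X(a))$ must be dominated. Hence there exists $b$ with $b \succ_a X(a)$, and by selecting $b$ maximal with respect to~$\succ_a$ the arc $(a, b)$ becomes undominated; the absorbing property then forces $b \in S$, while the acceptability condition $X(a) \succeq_a a$ rules out $b = a$. Because $S$ is strongly connected in $D^H[U]$, there exists a simple undominated path $P$ from $b$ to $a$ whose vertices lie entirely in~$S$ (since $S$ is absorbing, $P$ cannot leave~$S$). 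Concatenating the arc $(a, b)$ with $P$ produces a simple cycle $C$ in $D^H$ that exhibits strict improvement at $a$ (since $C(a) = b \succ_a X(a)$) and, by the key observation applied to each arc of $P$, weak improvement at every other vertex; hence $C$ blocks $X$, contradicting $X \in \SC(H)$.

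Once every arc of $X$ from $S$ is known to stay in $S$ and be undominated, the fact that $X$ is a permutation of~$N$ forces $X_S = X \cap (S \times S)$ to be an allocation in $H_{|S}$ contained in~$U$, giving~(a); the complementary set $X_{N \setminus S}$ is then automatically an allocation in $H_{|(N\setminus S)}$, and any cycle blocking $X_{N \setminus S}$ in the submarket would also block $X$ in~$H$, yielding~(b). The main obstacle I anticipate is not in assembling the blocking cycle itself but in recognizing that a \emph{single} strict-improvement arc $(a,b)$ suffices: all other arcs of the cycle can be taken as arbitrary undominated arcs within~$S$, since the key observation guarantees that each such arc is at least weakly preferred to what~$X$ provides, so no delicate coordination along the path is required.
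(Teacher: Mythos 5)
Your proposal is correct and follows essentially the same route as the paper's proof: assume $(s,X(s))$ leaves~$S$ or is dominated for some $s\in S$, conclude it is dominated by an \emph{undominated} arc $(s,b)$ whose head must lie in~$S$ (since no arc of~$U$ leaves the absorbing set), close this arc into a cycle inside~$S$ using strong connectivity, and observe that a cycle of undominated arcs is weakly preferred by all its agents and strictly by~$s$, contradicting $X\in\SC(H)$; condition~(b) then follows because blocking cycles in the submarket lift to~$H$. The only cosmetic difference is that you isolate as an explicit observation (and justify via a maximality argument) two facts the paper uses inline, namely that a dominated arc is dominated by some undominated arc and that undominated arcs yield weak improvement against any allocation.
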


\begin{proof}
Define $X_S=X \cap (S \times S)$ and $X_{N \setminus S}=X \setminus (S \times S)$, as in conditions~(a) and~(b). 

Suppose for the sake of contradiction that condition~(a) fails. This means that there exists some~$s \in S$ such that either the arc~$(s,X(s))$ leaves~$S$, or it is dominated by some arc of~$D^H$. Recall that $S$ is an absorbing set in~$D^H[U]$, and thus no arc of~$U$ leaves~$S$. 
Thus, in either of these cases we know $(s,X(s)) \notin U$, and hence there exists an arc~$e=(s,a) \in \delta^{\text{out}}_U(s)$ that dominates~$(s,X(s))$, i.e., $a \succ_s X(s)$. Since $e \in U$, we know that $a \in S$. Since~$S$ is a strongly connected component in~$D^H[U]$, there exists a cycle~$C$ within~$S$ through~$e$. Since~$s$ prefers~$C$ to~$X$, and by $C \subseteq U$ no agent on~$C$ prefers~$X$ to~$C$, we get that $C$ is a blocking cycle for~$X$; a contradiction. Hence, we obtain that $X_S$ is indeed an allocation in~$H_{|S}$ contained in~$U$. 

To see that condition~(b) holds as well, first note that by condition~(a), $X \setminus X_S=X_{N \setminus S}$, and so $X_S$ and $X_{N \setminus S}$ indeed partition $X$ into two. Moreover, $X_{N \setminus S}$ is an allocation in the submarket~$H_{|(N \setminus S)}$. 
Observe now that a blocking cycle for~$X_{}$ in this submarket would be a blocking cycle for~$X$ in~$H$ as well, proving that $X_{N \setminus S}$ is indeed in the strong core of~$H_{|(N \setminus S)}$.
\end{proof}

\subsection{Proof of Lemma~\ref{lem:strcore-peakset}}
\lemstrcorepeakset*
\begin{proof}
Let $U$ denote the set of undominated arcs in~$D$.

We first show that if $S$ is an absorbing set in~$D_{[\succeq X]}$ and $s \in S$, then the arc~$(s,X(s))$ is contained in~$U$. Assume for the sake of contradiction that $(s,X(s))$ is dominated by some arc of~$D^H$. Then $(s,X(s))$ is also dominated by some arc~$(s,s') \in U$. Now, since $(s,s') \in U$, we also know that $(s,s')$ is in $E_{[\succeq X]}$ and thus cannot leave~$S$, by the definition of an absorbing set, so we get $s' \in S$.
Since $S$ is a strongly connected component of~$D_{[\succeq X]}$, there must exists a path~$Q$ from~$s'$ to~$s$ within~$D_{[\succeq X]}[S]$. Now, since $(s,s')$ dominates $(s,X(s))$, the cycle obtained by adding $(s,s')$ to~$Q$ is a blocking cycle for~$X$, a contradiction. Hence, $(s,X(s)) \in U$ for each $s \in S$.

Since $U \subseteq E_{[\succeq X]}$, and no arc of~$E_{[\succeq X]}$ leaves~$S$ by the definition of an absorbing set, we obtain that $X(s) \in S$. Hence, $X \cap (S \times S)$ is indeed an allocation in~$H_{|S}$ contained in~$U$.

The last claim of the lemma follows from the fact that no arc of~$E_{[\succeq X]}$ may leave $S$ due to the definition of an absorbing set.
\end{proof}
\section{Additional material for Section~\ref{sec:finding-strongcore}}

\subsection{Proof of Theorem~\ref{thm:alg-SCFA-correct}}
\label{app:proof-of-correctness}
\thmalgSCFAcorrect*

\begin{proof}
Let $I=(H,F)$ denote our input instance.
We use induction on~$|N|$, the number of agent in~$I$, to prove the theorem.
Notice that if $|N|=1$, then the algorithm is clearly correct, since any solution must consist of a single loop that is adjacent to the unique agent in~$N$ and is not a forbidden arc---exactly the property checked on line~\ref{line:one-agent}.
Henceforth, let us assume that the algorithm correctly solves all instances with less than~$|N|$ agents. 

\begin{claim}
\label{clm:no-wrong-rejection}
If $I$ admits a solution, then Algorithm~\ref{alg:SCFA} does not reject~$I$ by outputting~$\varnothing$.
\end{claim}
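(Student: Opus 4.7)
The plan is to prove Claim~\ref{clm:no-wrong-rejection} via the same induction on~$|N|$ that underlies the proof of Theorem~\ref{thm:alg-SCFA-correct}. The base case $|N|=1$ is immediate from line~\ref{line:one-agent}, where the unique candidate~$\{(a,a)\}$ is returned precisely when it is not forbidden. For $|N|\ge 2$, I would fix a solution~$X$ to~$I$ and examine the only two places where Algorithm~\ref{alg:SCFA} can output~$\varnothing$: line~\ref{line:reject} (the while loop terminates with $\T$ empty) and line~\ref{line:reject-recursively} (the recursive call on $(H_{|N'},F')$ returns~$\varnothing$).

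For the first exit point I would invoke Lemma~\ref{lem:peak-invariant}: since $D_{[\succeq X]}$ is a nonempty directed graph, its strongly connected component condensation has at least one sink, and any such sink pulls back to a peak set~$P$ of~$X$. Lemma~\ref{lem:peak-invariant} then guarantees $P \subseteq \bigcup_{T \in \T} T$ throughout lines~\ref{line:iter-begin}--\ref{line:iter-end}, so~$\T$ remains non-empty and the algorithm cannot fall through to line~\ref{line:reject}.

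For the second exit point I would argue by contradiction: the induction hypothesis applied to the strictly smaller instance $(H_{|N'},F')$ says that a~$\varnothing$ return from the recursive call would mean $(H_{|N'},F')$ admits no solution, and I therefore aim to exhibit one. Let $Y_{T^\star} = \bigcup_{S \in \T} X_S$ denote the $T^\star$-valid allocation stored by the algorithm just before the recursion; it is a permutation of~$T^\star$ consisting of undominated arcs, each of which dominates every arc of~$D^H$ that leaves~$T^\star$. The key step is to show that $X$ keeps~$T^\star$ invariant, i.e., $X(a) \in T^\star$ for every $a \in T^\star$. To prove this I would decompose $Y_{T^\star}$ into its constituent cycles and inspect any such cycle~$C$: for every $a$ on~$C$ the arc $(a, Y_{T^\star}(a))$ is undominated, hence $Y_{T^\star}(a) \succeq_a X(a)$. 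If some $a$ on~$C$ satisfied $X(a) \notin T^\star$, then condition~(ii) of $T^\star$-validity would upgrade this to a strict preference $Y_{T^\star}(a) \succ_a X(a)$, turning~$C$ into a blocking cycle for~$X$ and contradicting $X \in \SC(H)$.

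Once $X(a) \in T^\star$ holds for all $a \in T^\star$, the restricted arc set $X \cap (N' \times N')$ is automatically an allocation in~$H_{|N'}$; it is disjoint from~$F'$ because $F' \subseteq F$, and it lies in $\SC(H_{|N'})$ because any blocking cycle within~$N'$ would also block~$X$ in~$H$. This yields the promised solution to $(H_{|N'},F')$ and completes the contradiction. I expect the blocking-cycle argument of the preceding paragraph to be the main technical obstacle, since it is the step where the strong-core property of~$X$ must interact precisely with both the undominated-arc condition and clause~(ii) of $T^\star$-validity.
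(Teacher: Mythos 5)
Your proposal is correct and follows essentially the same route as the paper's proof: induction on $|N|$, Lemma~\ref{lem:peak-invariant} to rule out the exit at line~\ref{line:reject}, and a blocking-cycle argument showing that no arc of the solution leaves $T^\star$, so that its restriction to~$N'$ is a solution of $(H_{|N'},F')$ contradicting the induction hypothesis. The only (immaterial) difference is that you take the blocking cycle to be a cycle of the stored allocation $\bigcup_{S\in\T}X_S$ itself, whereas the paper splices the dominating arc $(a,X_S(a))$ with a path of undominated arcs inside the strongly connected component~$S$; both constructions rest on the same two facts (undominated arcs yield weak preference over~$X$, and $T^\star$-validity yields strict preference over any arc leaving~$T^\star$).
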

\begin{claimproof}
Let $Y$ be a solution for~$I$, and let $P$ be a peak set for~$Y$. By Lemma~\ref{lem:peak-invariant}, $P \subseteq \bigcup_{T \in \T}T$ holds throughout lines~\ref{line:iter-begin}--\ref{line:iter-end} of Algorithm~\ref{alg:SCFA} when executed on instance~$I$; 
therefore, the algorithm never reaches line~\ref{line:reject}, and hence cannot output~$\varnothing$ on line~\ref{line:reject}.

Assume now for the sake of contradiction, that the algorithm outputs~$\varnothing$ on line~\ref{line:reject-recursively}, which happens only if the recursive call on line~\ref{line:recurse} returns~$\varnothing$. Observe that since $|\T|>0$ for the collection~$\T$ at this point, we know $|N'|<|N|$, and thus our inductive hypothesis implies that the instance $I'=(H_{|N'},F')$ does not admit a solution. 

Consider now the collection~$\T$ and the agent set~$T^\star=\bigcup_{T \in \T}T$. Observe that since each $S \in \T$ admits a $T^\star$-valid allocation~$X_S$, no undominated arc leaves~$T^\star$. 

We claim that no arc of~$Y$ leaves~$T^\star$ either. To see this, assume that some arc $(a,b) \in Y$ leaves~$T^\star$; then $(a,b)$ is dominated by the arc~$f=(a,X_S(a)) \in X_S$. Since $X_S$ contains only arcs within $H_{|S}$ and $S$ is a strongly connected component of~$D^H[U]$, we know that there is a path~$Q \subseteq U$ from the head of~$f$ to the tail of~$f$. The cycle~$C$ formed by $Q$ and the arc~$f$ is therefore contained in $E_{[\succeq Y]}$, with agent~$a$ strictly preferring~$C$ to~$Y$, i.e., $C$ is a blocking cycle for~$Y$, a contradiction. This proves that no arc of~$Y$ leaves~$T^\star$. However, then the restriction of~$Y$ to the set $N'=N \setminus T^\star$ of remaining agents, i.e., the arc set~$Y'=Y \cap (N' \times N')$ is an allocation in $H_{|N'}$. It is also clear that $Y'$ is in the strong core of~$H_{|N'}$, as a blocking cycle in~$H_{|N'}$ for~$Y'$ would also block~$Y$ in~$H$. Since $Y' \subseteq Y$ is disjoint from~$F$, we obtain that $Y'$ is a solution for the instance~$I'$, a contradiction proving the claim.
\end{claimproof}

Next, let us show the counterpart of Claim~\ref{clm:no-wrong-rejection}.
\begin{claim}
\label{clm:only-correct-outputs}
If Algorithm~\ref{alg:SCFA} outputs a set~$X \subseteq E$, then $X$ is solution for~$I$.
\end{claim}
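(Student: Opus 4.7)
The plan is to prove this claim by induction on $|N|$, using Theorem~\ref{thm:peakset-char} as the main structural tool. The base case $|N|=1$ is immediate: the only non-empty output the algorithm can produce is $\{(a,a)\}$ on line~\ref{line:one-agent}, which is evidently an allocation in $\SC(H)$ disjoint from $F$ (the condition $(a,a)\in E\setminus F$ is checked explicitly). For the inductive step, I would assume the claim for all inputs with strictly fewer than $|N|$ agents, and consider the two remaining places where a non-empty set can be returned: line~\ref{line:return-without-recursion} (when $N=T^\star$) and line~\ref{line:outputX} (after a successful recursive call).

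Consider the iteration in which the algorithm terminates. At that point $\R=\emptyset$, so for every $S\in\T$ the algorithm has stored a $T^\star$-valid allocation~$X_S$ in~$H_{|S}$, where $T^\star=\bigcup_{S\in\T}S$. Set
\[
X_1 \;=\; \bigcup_{S\in\T} X_S,\qquad X_2 \;=\; X\setminus X_1,
\]
so that $X_2=\emptyset$ in the line~\ref{line:return-without-recursion} case and $X_2=X'$ in the line~\ref{line:outputX} case. I would then verify conditions~(a)--(c) of Theorem~\ref{thm:peakset-char} with $S:=T^\star$. Condition~(a) holds because the components of $\T$ are pairwise disjoint strongly connected components of $D^H[U]$, so $X_1$ is indeed an allocation in $H_{|T^\star}$, and by the definition of $E_S$ on line~\ref{line:def-A_S} we have $X_1\subseteq U\setminus F$. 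Condition~(c) is exactly the second clause in the definition of $T^\star$-validity: for every arc $(a,b')\in E$ with $a\in T^\star$ and $b'\notin T^\star$, the arc $(a,X_1(a))\in X_1$ satisfies $X_1(a)\succ_a b'$ and therefore dominates $(a,b')$. For condition~(b), in the case $N=T^\star$ we have $X_2=\emptyset$ which is trivially in the strong core of the empty submarket; in the recursive case, the call on line~\ref{line:recurse} runs on an instance with strictly fewer agents (since $|\T|>0$ at that point, $T^\star\neq\emptyset$), so the induction hypothesis guarantees that $X'\in\SC(H_{|N'})$ and $X'\cap F'=\emptyset$. Applying Theorem~\ref{thm:peakset-char} yields $X_1\cup X_2\in\SC(H)$.

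It remains to argue that $X\cap F=\emptyset$. For $X_1$ this follows from $X_1\subseteq U\setminus F$, and for $X'$ from the induction hypothesis together with the observation that every arc of $X'$ lies in $N'\times N'$, so $X'\cap F=X'\cap F'=\emptyset$. Combining these pieces yields $X\in\SC(H)$ with $X\cap F=\emptyset$, i.e., $X$ is a solution for~$I$, as required.

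The main obstacle, though mostly notational, is to cleanly verify condition~(c) for \emph{all} arcs leaving $T^\star$ (rather than only those whose tail lies in the particular component~$S$ contributing $X_S$): this is handled uniformly because the $T^\star$-validity test on line~\ref{line:def-A_ST} is applied relative to the common set $T^\star$ for every $S\in\T$, so for each agent $a\in T^\star$ the arc $X_1(a)$ dominates \emph{every} outgoing arc of $a$ that escapes $T^\star$. A secondary subtlety is to ensure that the output really is an allocation (one incoming and one outgoing arc per agent), which follows because arcs of $X_S$ stay within $S$ and the components $\{S:S\in\T\}\cup\{N\setminus T^\star\}$ partition~$N$, so $X_1$ and $X_2$ cover disjoint parts of the agent set both in their tails and in their heads.
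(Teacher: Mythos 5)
Your proposal is correct and follows essentially the same route as the paper: induction on $|N|$, with Theorem~\ref{thm:peakset-char} applied to $X_1=\bigcup_{S\in\T}X_S$ and $X_2$ coming from the recursive call, where condition~(c) is supplied by $T^\star$-validity and condition~(b) by the induction hypothesis. The only cosmetic difference is that the paper handles the no-recursion case (line~\ref{line:return-without-recursion}) by directly noting that an allocation contained in~$U$ admits no blocking cycle, whereas you fold it into the same Theorem~\ref{thm:peakset-char} application with $X_2=\emptyset$; both are fine.
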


\begin{claimproof}
Suppose that Algorithm~\ref{alg:SCFA} outputs a set~$X$ on line~\ref{line:return-without-recursion}. In this case, by $N=\bigcup_{S \in \T} S$ we know that $X$ is an allocation for~$H$. 
Moreover, we also have $X \subseteq \bigcup_{S \in \T} E_{S,\T} \subseteq U \setminus F$. Hence, it is clear that no blocking cycle for~$X$ exists, and therefore, $X$ is a solution for the input instance. 

Suppose now that Algorithm~\ref{alg:SCFA} outputs a set~$X$  on line~\ref{line:outputX}, and let us consider the values of~$\T$, $N'$ and $F'$ at this point. Due to our induction hypothesis, we know that $X'$ is a solution for the instance~$I'=(H_{|N'},F')$. By line~\ref{line:def-A_ST}, we also know that the allocation~$X_S$ for~$H_{|S}$ 
determined  on line~\ref{line:Tvalid-def}  
is $T^\star$-valid for each $S \in \T$ where $T^\star=\bigcup_{T \in \T}T$. 

First, it is clear that the set~$X=X' \cup \bigcup_{S \in \T} X_S$ is an allocation in~$H$: $X'$ is an allocation in~$H_{N'}$, and $\bigcup_{S \in \T} X_S$ is an allocation in~$H_{|T^\star}$. It is also obvious that $X$ does not contain any forbidden arcs, due to our induction hypothesis and the definition of $T^\star$-validity. It remains to show that $X$ is in the strong core of~$H$.

To show $X \in \SC(H)$, we apply Theorem~\ref{thm:peakset-char} with $X_1:=\bigcup_{S \in \T} X_S$ and $X_2:=X'$. 
First, since $X_1$ is an allocation in $H_{|T^\star}$ that only contains undominated arcs, it is immediate that $X_1 \in \SC(H_{|T^\star})$. 
Second, recall that $X_2 \in \SC(H_{N'})=\SC(H_{|N \setminus T^\star})$ by induction. 
Third, since the allocation~$X_S$ is $T^\star$-valid for each $S \in \T$, we get that each arc leaving $T^\star$ is dominated by an arc of~$X_1$. Hence, conditions~(a)--(c) of Theorem~\ref{thm:peakset-char} are satisfied,
proving that $X$ is indeed in the strong core of~$H$, and hence, it is a solution for~$I$.
\end{claimproof}

Due to Claims~\ref{clm:no-wrong-rejection} and~\ref{clm:only-correct-outputs}, to show the correctness of Algorithm~\ref{alg:SCFA} it suffices to prove that the algorithm always produces an output in the claimed running time. 

To see this, observe that $|\T|\leq |N|$, and thus lines~\ref{line:iter-begin}--\ref{line:iter-end} are performed at most~$|N|$ times, since Algorithm~\ref{alg:SCFA} either produces an output or decreases~$|\T|$ during each iteration. The bottleneck in each iteration---ignoring the time spent in the recursive call on line~\ref{line:recurse}---is to compute a perfect matching in a bipartite graph on~$2|N|$ vertices and at most~$|E|$ edges, which takes $O(|E|^{1+o(1)})$ time~\cite{ChenKLPPS-CACM23-linear-maxflow} 
 using the recent algorithm for maximum flow by Chen et al.~\cite{ChenKLPPS-CACM23-linear-maxflow}.
Therefore, the algorithm's running time without the recursive call on line~\ref{line:recurse} is $O(|N| \cdot |E|^{1+o(1)})$. 
The recursion introduces an additional factor of~$|N|$, because the number of agents decreases in each recursive call. This implies a total running time of~$O(|N|^2 \cdot |E|^{1+o(1)})$.
\end{proof}

\subsection{Proof of Theorem~\ref{thm:groupSP}}
\label{app:proof-groupSP}
\thmgroupSP*

\begin{proof}
Suppose for the sake of contradiction that there is an instance~$(H,F)$ of SCFA with $H=(N,\{\succ_a:a \in N\})$, a coalition~${C \subseteq N}$ of agents, a $C$-deviation $H'=(N,\{ \succ'_a:a \in N\})$ from~$H$, together with allocations $X \in f_\SCFA(H,F)$ and $X' \in f_\SCFA(H',F)$ for which $X'(c) \succ_c X(c)$ for each agent~$c \in C$ in the coalition.

Recall that Algorithm~\ref{alg:SCFA} run on input~$(H,F)$ repeatedly removes agents from~$H$, and performs a recursive call on the remaining submarket on line~\ref{line:recurse}.
Consider the run of the algorithm that produces~$X$ as its output.
Let $I_i=(H_{|N_i},F_i)$ be the input of the $i$-th call of the algorithm during this run, with $I_1$ being the original input instance, and each further call $I_i$, $i \geq 2$, initiated at line~\ref{line:recurse} of the previous call.
Let  $\NN_i$ denote the set of agents contained in~$T^\star$  (defined in line~\ref{line:def-R-Tstar}) at the point when the algorithm reaches line~\ref{line:return-without-recursion} in the $i$-th call. 
Then either the algorithm stops in the $i$-th call, or deletes the agents of~$\NN_i$ from the market on line~\ref{line:remove-agents}, right before the recursive call on~$I_{i+1}$, in which case we have
$N_{i+1}=N_i \setminus \NN_i$.

Consider the smallest~$k$ for which $\NN_k \cap C \neq \emptyset$, and let $c \in C \cap \NN_k$ be an agent in coalition~$C$ removed from the market in the $k$-th call. Note that the arc~$(c,X(c))$, which the algorithm will put into the returned allocation~$X$,
is undominated within the submarket~$H_{|N_k}$, due to the definitions of the arc sets~$E_S$ and $E_{S,\T}$ on lines~\ref{line:def-A_S} and~\ref{line:def-A_ST}.
By $X'(c) \succ_c X(c)$, we get that $X'(c)$ must have been removed earlier from the market, so $X'(c)=t$ for some agent~$t \in \allNN=\NN_1 \cup \dots \cup \NN_{k-1}$. 
By the definition of~$k$, we know $C \cap \allNN=\emptyset$. 
We prove the following.

\begin{figure}[ht!]
    \centering
    \begin{subfigure}{.5\textwidth}
    \centering
        \begin{tikzpicture}[
            myEdge/.style={draw,line width=1pt, -{Stealth[length=4mm]},  decorate,decoration={snake,post length=3mm,amplitude=1.8pt}},
            myArc/.style = { draw,line width = 1.5pt, -{Stealth[length=4mm]}}]
            \node[draw,circle,inner sep=1.5pt,minimum size=20pt] (a) at (3.6,2.8) {$a$};
        	\node[draw,circle,inner sep=1.5pt,minimum size=20pt] (a2) at (1.6,2.8) {$a'$};
        	\node[draw,circle,inner sep=4pt] (a3) at (3.6,1) {$ $};
          	\node[draw,circle,inner sep=1.5pt,minimum size=20pt] (b) at (0,2) {$b$};
          	\node[draw,circle,inner sep=4pt] (c) at (0,-0.2) {$ $};
          	\node[draw,circle,inner sep=4pt] (d) at (-2,2) {$ $};
          	\draw[myArc,Xprimecolor] (a) to node[pos=0.3] {$\boldsymbol{||}$}node[pos=0.35,above={2pt}] {$X'$} (a2);
            \draw[myArc,Xcolor] (a) to node[pos=0.3,sloped] {$\boldsymbol{|}$} node[pos=0.7, right] {$X$} (a3);
            \draw[myEdge,Xprimecolor,bend left=25] (a2) to node[pos=0.4,below right={-1pt}] {$X'$} (b);
            \draw[myArc,Xprimecolor] (b) to node[pos=0.3,sloped] {$\boldsymbol{|}$} node[pos=0.45,right] {$X'$} (c);
            \draw[myArc,Xcolor] (b) to %
            node[pos=0.3,sloped] {$\boldsymbol{||}$} (d);
            \draw[myEdge,Xcolor, out=100, in=90] (d) to node[below={3pt}] {$\widetilde{C} \subseteq X$} (b);
          	\draw[rectangle,line width=0.7pt,dashed,rounded corners,above=2pt] ($(d) - (0.5,1.5)$) rectangle ($(b) + (1.0,1.8)$) {};
          	\node (Th) at ($(d)!.5!(b) + (0.6,2.1)$) {$\NN_h$};
            \node at ($(Th) + (2.3,-0.03)$) {$\NN_{h-1} \; \cdots $};
            \draw[rectangle,line width=0.7pt,dashed,rounded corners,above=2pt] ($(a2) + (0.9,1)$) rectangle ($(a) + (0.7,-2.3)$) {};
            \node at ($(a) + (-0.2,1.25)$) {$\NN_{j}$};
        \end{tikzpicture}
        \caption{Illustration for Claim~\ref{claim:strategy-proofness}.}
        \label{fig:claim:strategy-proofness}
    \end{subfigure}
    \begin{subfigure}{.47\textwidth}
    \centering
        \begin{tikzpicture}[
            myEdge/.style={draw,line width=1pt, -{Stealth[length=4mm]},  decorate,decoration={snake,post length=3mm,amplitude=1.8pt}},
            myArc/.style = { draw,line width = 1.5pt, -{Stealth[length=4mm]}}]
            \node[draw,circle,inner sep=1.5pt,minimum size=20pt] (c) at (3.8,3.2) {$c$};
            \node[draw,circle,inner sep=4pt] (c2) at (3.8,1.4) {$ $};
            \node[draw,circle,inner sep=1.5pt,minimum size=20pt] (t) at (1.5,3.2) {$t$};
            \node[draw,circle,inner sep=1.5pt,minimum size=20pt] (tprime) at (0.5,2) {$t'$};
            \node[draw,circle,inner sep=4pt] (d) at (-1.5,2) {$ $};
            \node[draw,circle,inner sep=4pt] (f) at (0.5,0.2) {};
            \draw[myEdge,Xprimecolor,bend left=30] (t) to node[pos=0.3,below right] {$X'$}(tprime);
            \draw[myArc,Xprimecolor] (c) to node[pos=0.5,sloped] {$\boldsymbol{||}$}node[pos=0.65,above={3pt}] {$X'$} (t);
            \draw[myArc,Xcolor] (c) to node[pos = 0.55, right] {$X$} node[pos=0.3,sloped] {$\boldsymbol{|}$} (c2);
            \draw[myArc,Xprimecolor] (tprime) to node[pos=0.3,sloped] {$\boldsymbol{|}$}node[pos=0.35,right={3pt}] {$X'$} (f);
            \draw[myArc,Xcolor] (tprime) to node[pos=0.3,sloped] {$\boldsymbol{||}$}
            (d);
            \draw[myEdge,Xcolor, out=100, in=90] (d) to node[below={3pt}] {$\widehat{C} \subseteq X$} (tprime);
            \draw[rectangle,line width=0.7pt,dashed,rounded corners] ($(d) - (0.5,1.1)$) rectangle ($(t) + (1.4,1)$) {};
            \node (Ttilde) at ($(tprime) + (0.2,2.6)$) {$\widetilde{T}$};
            \draw[rectangle,line width=0.7pt,dashed,rounded corners,above=2pt] ($(t) + (1.8,1)$) rectangle ($(c) + (0.6,-2.3)$) {};
            \node at ($(c) + (0.2,1.3)$) {$\NN_{k}$};
        \end{tikzpicture}
        \caption{Illustration for Theorem~\ref{thm:groupSP}.}
        \label{fig:groupSP}
    \end{subfigure}
    \Description{Illustrations for Theorem~\ref{thm:groupSP}.}
    \caption{Illustrations for Theorem~\ref{thm:groupSP}. Paths are shown as zigzag lines. Arcs and paths belonging to allocation~$X$ (or~$X'$) are shown in \textcolor{Xcolor}{\bf green} (in \textcolor{Xprimecolor}{\bf purple}, respectively).}
    \label{fig:figures_group_SP}
\end{figure}
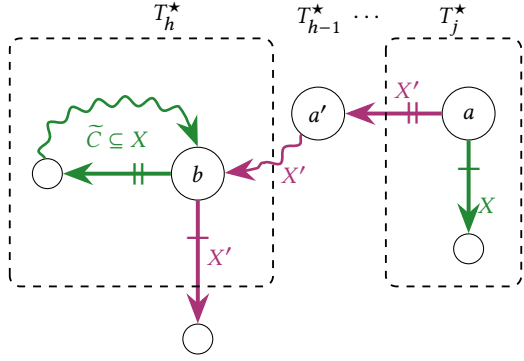
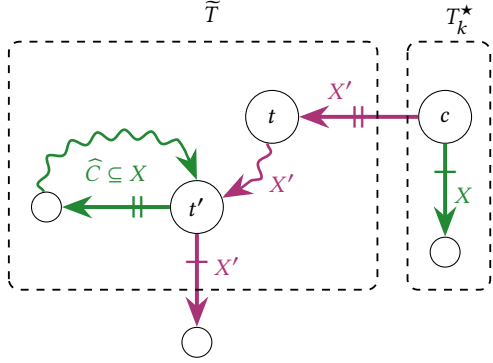

\begin{claim} \label{claim:strategy-proofness}
    For each agent~$a \in \allNN$, we have $X \succeq_a X'$.
\end{claim}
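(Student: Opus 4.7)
The plan is to prove by joint induction on the round index $h \in \{1, 2, \dots, k-1\}$ the strengthened statement that for every $b \in \NN_h$, both (A) $X'(b) \not\succ_b X(b)$ and (B) $X(b) \not\succ_b X'(b)$ hold; equivalently, $X(b)$ and $X'(b)$ are either identical or incomparable with respect to $\succ_b$. The original claim follows from (A). The strengthening is needed because partial orders permit incomparability, which decouples the two possible directions of ``weakly prefers'' that coincide in the weak-order setting.

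For the base case $h = 1$, I would first prove (A) directly: since $N_1 = N$, the $T^\star$-validity of $(b, X(b))$ forces it to be undominated in the whole of $H$, whence $b' \not\succ_b X(b)$ for every $b' \in N$, and in particular for $b' = X'(b)$. For (B), I would suppose towards contradiction that $X(b) \succ_b X'(b)$ and consider the $X$-cycle $D$ through $b$, which lies entirely in $\NN_1$ since $X \cap (\NN_1 \times \NN_1)$ is an allocation of $H_{|\NN_1}$. Because $\NN_1 \cap C = \emptyset$, all agents on $D$ have the same preferences in $H$ and $H'$; applying (A) (just established) to each of these agents shows that $D$ is a blocking cycle for $X'$ in $H'$, contradicting $X' \in \SC(H')$.

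For the inductive step at round $h$, I would first establish (A). Suppose for contradiction that some $b \in \NN_h$ has $X'(b) \succ_b X(b)$. The $T^\star$-validity of $(b, X(b))$ in round $h$ makes it undominated in $H_{|N_h}$ and strictly preferred by $b$ over every acceptable agent in $N_h \setminus \NN_h$, forcing $X'(b) \in \NN_1 \cup \dots \cup \NN_{h-1}$. I would then trace the $X'$-cycle $(a_0 = b, a_1 = X'(b), a_2, \dots, a_{r-1})$ and prove by an inner induction on $i$ that $a_i \in \NN_1 \cup \dots \cup \NN_{h-1}$ for every $1 \leq i \leq r-1$. The step of this inner induction uses the joint hypothesis on $a_i \in \NN_{j_i}$ with $j_i \leq h-1$: together, (A) and (B) force $a_{i+1} = X(a_i)$ or $a_{i+1} \sim_{a_i} X(a_i)$. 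In the first case $a_{i+1} \in \NN_{j_i}$ directly; in the second, the $T^\star$-validity of $(a_i, X(a_i))$ in round $j_i$ forbids $a_{i+1} \in N_{j_i} \setminus \NN_{j_i}$ (which would yield the strict dominance $X(a_i) \succ_{a_i} a_{i+1}$), leaving $a_{i+1} \in \NN_1 \cup \dots \cup \NN_{j_i}$. Applying the same case analysis at $i = r-1$, where $a_r = a_0 = b \in \NN_h \subseteq N_{j_{r-1}} \setminus \NN_{j_{r-1}}$, yields a contradiction: $b \neq X(a_{r-1}) \in \NN_{j_{r-1}}$, yet $T^\star$-validity forces $X(a_{r-1}) \succ_{a_{r-1}} b$, precluding $b \sim_{a_{r-1}} X(a_{r-1})$.

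To complete the inductive step, (B) follows by the same symmetric argument from the base case: if $X(b) \succ_b X'(b)$ for some $b \in \NN_h$, the $X$-cycle through $b$ lies in $\NN_h \subseteq \allNN$ (disjoint from $C$), so by (A) just proved for round $h$ it is a blocking cycle for $X'$ in $H'$, again contradicting $X' \in \SC(H')$. The most delicate part of the plan is the inner induction in the proof of (A), which must combine the joint inductive hypothesis with the $T^\star$-validity constraints from each earlier round in just the right way to trap the entire $X'$-cycle inside $\NN_1 \cup \dots \cup \NN_{h-1}$, so that its forced return to $b \in \NN_h$ yields the final contradiction.
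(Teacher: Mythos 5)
Your proof is correct, but it is organized quite differently from the paper's. The paper argues by minimal counterexample on the round index: it picks the earliest round $j$ containing an agent $a$ with $X' \succ_a X$, observes that the $X'$-cycle through $a$ enters $\NN_1 \cup \dots \cup \NN_{j-1}$ and therefore must also contain an arc $(b,X'(b))$ \emph{exiting} that set from some $\NN_h$ with $h<j$; $T^\star$-validity at round $h$ then gives $X(b) \succ_b X'(b)$, and the $X$-cycle through $b$ inside $\NN_h$ blocks $X'$ in $H'$ because minimality of $j$ already supplies $X \succeq_v X'$ for every agent in rounds before $j$. You instead run a forward induction with a strengthened invariant --- that $X(b)$ and $X'(b)$ are equal or incomparable for every $b$ in earlier rounds --- and use it to show the $X'$-cycle can never escape $\NN_1 \cup \dots \cup \NN_{h-1}$ once it enters, so it cannot close up at $b \in \NN_h$. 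Both arguments ultimately rest on the same two facts ($T^\star$-validity forces $X(b)$ to dominate every arc from $b$ into $N_h \setminus \NN_h$, and agents of $\allNN$ are outside the deviating coalition, so their preferences and blocking cycles transfer between $H$ and $H'$), but the paper gets its contradiction from the exit arc of the $X'$-cycle and needs only the one-directional statement, which makes it shorter; your route requires the two-sided strengthening (and correctly identifies why mere ``$\succeq$'' is too weak under partial orders to propagate through the cycle), and in exchange it proves the stronger structural fact that $X$ and $X'$ are agentwise incomparable on $\allNN$, a cross-market analogue of Lemma~\ref{lem:strongcore-solutions-incomp}.
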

\begin{claimproof}
    For a contradiction, assume that $X' \succ_a X$ for some $a \in \allNN$, and let us choose $a \in \NN_j$ so as to minimize the index $j \in [k-1]$. 
    Since $(a,X(a))$ is undominated in~$H_{|N_j}$, we know that $X'(a)=a'$ for some $a' \in \NN_1 \cup \dots \cup \NN_{j-1}$. 
    Since $X'$ is an allocation, it contains a cycle through the arc~$(a,X'(a))$ entering the set~${\NN_1 \cup \dots \cup \NN_{j-1}}$; 
    hence, it also contains an arc leaving the set~${\NN_1 \cup \dots \cup \NN_{j-1}}$. 
    See Figure~\ref{fig:claim:strategy-proofness} for an illustration.
    Let $(b,X'(b))$ be such an arc, and let $h \in [j-1]$ be the index for which $b \in \NN_h$. Then $\{(v,X(v)):v \in \NN_h\}$ contains a cycle~$\widetilde{C}$ through~$b$. 
    Note that $X \succeq_v X'$ for each agent~$v \in V(\widetilde{C}) \subseteq \NN_h$, due to our choice of~$a$ and $h < j$. 
    Moreover, since $(b,X'(b))$ leaves~$\NN_h$, the arc~$(b,X(b)) \in  \widetilde{C} \subseteq X$ dominates the arc~$(b,X'(b))$ due to the definition of~$E_{S,\T}$ on line~\ref{line:def-A_ST}, i.e., $X(b) \succ_b X'(b)$. 
    Therefore, $\widetilde{C}$ is a blocking cycle for~$X'$ in~$H$.
    Recall also that there is no deviating agent in~$\widetilde{T}$, i.e., $C \cap \allNN=\emptyset$, and thus $\succ_v=\succ'_v$ for each agent~$v \in V(\widetilde{C}) \subseteq \NN_h \subseteq \allNN$. This implies that $\widetilde{C}$ is a blocking cycle for~$X'$ in~$H'$ as well, which contradicts the correctness of Algorithm~\ref{alg:SCFA} established in Theorem~\ref{thm:alg-SCFA-correct}.
\end{claimproof}

We will apply again the ideas used in the proof of Claim~\ref{claim:strategy-proofness}; see Figure~\ref{fig:groupSP} for an illustration. Recall that $(c,X'(c))=(c,t)$ is an arc that enters~$\allNN$, so we know that~$X'$ must also contain an arc that leaves $\allNN$; let $(t',X'(t'))$ be such an arc for some $t' \in \allNN$.
Recall that $(t',X(t'))$ dominates all arcs leaving~$\allNN$, so we have $X \succ_{t'} X'$. By $t' \in \allNN$, we also know that the cycle~$\widehat{C}$ of~$X$ covering~$t'$ only contains agents within~$\allNN$. By Claim~\ref{claim:strategy-proofness}, this implies that $X \succeq_a X'$ for each agent~$a$ on~$\widehat{C}$, and $X \succ_{t'} X'$ for agent~$t'$ on~$\widehat{C}$. This means that $\widehat{C}$ is a blocking cycle for~$X'$ in~$H$.
Recall that since there is no deviating agent in~$\allNN$, i.e., $C \cap \allNN=\emptyset$, we further have $\succ_a=\succ'_a$ for each agent~$a \in V(\widehat{C}) \subseteq \allNN$. 
Hence, $\widehat{C}$ is a blocking cycle for~$X'$ in~$H'$ as well, which contradicts the correctness of Algorithm~\ref{alg:SCFA} established in Theorem~\ref{thm:alg-SCFA-correct}.
\end{proof}

\subsection{Material for Section~\ref{sec:algo-prop}}
\label{app:comparison_QW}
It is known that the Quint--Wako algorithm can return any fixed allocation~$X$ in the strong core of~$H$. In the following lemma, we show that the same holds for Algorithm~\ref{alg:SCFA} if agents' preferences are weak orders. 

\begin{proposition}
\label{prop:alg-on-weakorders}
    Suppose that $H$ is a housing market where agents' preferences are weak orders. 
When run on the instance~$(H,\emptyset)$, 
Algorithm~\ref{alg:SCFA} can find each allocation in the strong core of~$H$, by appropriately choosing between possible allocations on line~\ref{line:Tvalid-def}.
\end{proposition}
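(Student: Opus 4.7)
The plan is to prove the proposition by induction on $|N|$, with the base case $|N| = 1$ being immediate. For the inductive step, given $X \in \SC(H)$ in a market with weak orders, let $\T_f$ denote the final $\T$ at the end of the first while loop and set $T^\star = \bigcup_{S \in \T_f} S$. I will show that choosing $X_S := X \cap (S \times S)$ on line~\ref{line:Tvalid-def} for each $S \in \T_f$ leads the algorithm to output $X$ (with the part outside $T^\star$ handled by the recursive call and induction).

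The central claim I would establish is: for every $a \in T^\star$, the arc $(a, X(a))$ lies in $U$, and $X(a)$ is in the same SCC of $D^H[U]$ as~$a$. I would prove it in three steps. (a) $T^\star$ is closed under $U$-arcs, as a direct consequence of the termination rule of the while loop: any SCC with a $U$-arc leaving $T^\star$ would fail the $T^\star$-validity test. (b) $(a, X(a)) \in U$ for every $a \in T^\star$; otherwise, letting $S \in \T_f$ be the SCC containing $a$ and $Y_S$ be the $U$-allocation in $H_{|S}$ guaranteed by $S$ having passed the test for inclusion in the initial $\T$, the $Y_S$-cycle through $a$ would block $X$. This is where weak orders are essential: every undominated arc points to an agent in the top equivalence class, so $Y_S(s) \succeq_s X(s)$ for each $s$ on the cycle, and since $X(a)$ is dominated, the preference at $a$ is strict, contradicting $X \in \SC(H)$. (c) The $X$-cycle through $a$ thus stays inside $T^\star$ by $U$-closure and consists entirely of $U$-arcs, so it lies in one SCC of $D^H[U]$; hence $X(a)$ and $a$ share an SCC.

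Given the claim, $X_S = X \cap (S \times S)$ is indeed an allocation in $H_{|S}$ contained in $U$, and it is $T^\star$-valid: any $(a, b') \in E$ with $a \in S \subseteq T^\star$ and $b' \notin T^\star$ satisfies $(a, b') \notin U$, hence is dominated by some $(a, b'') \in U$ with $b'' \in T^\star$; since $X(a)$ and $b''$ both lie in $a$'s top class under the weak order, $X(a) \sim_a b'' \succ_a b'$ gives $X(a) \succ_a b'$. If $T^\star \neq N$, the arcs of $X$ restricted to $N \setminus T^\star$ form an allocation in $\SC(H_{|N \setminus T^\star})$---any blocking cycle there would block $X$ in $H$---which the recursive call can reproduce by the inductive hypothesis. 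The main obstacle is step (b) of the claim; it is here that transitivity of indifference in weak orders (absent in partial orders) is crucially exploited, explaining why the statement does not extend to the partial-order setting.
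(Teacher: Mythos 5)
Your proof is correct and follows essentially the same route as the paper's: both arguments hinge on $T^\star$ being closed under undominated arcs, on $X$ restricted to each surviving component being an allocation of undominated arcs (which you re-derive with a direct blocking-cycle argument exploiting that, under weak orders, undominated arcs point into the top indifference class, where the paper instead invokes the Quint--Wako characterization), and on induction for the recursive call. The only step to make explicit is that the while loop exits with $\R=\emptyset$ rather than with $\T$ becoming empty, so that $\T_f\neq\emptyset$ and your chosen $T^\star$ is indeed the set returned on line~\ref{line:return-without-recursion} or removed on line~\ref{line:remove-agents}; this is immediate from Theorem~\ref{thm:alg-SCFA-correct} (or Lemma~\ref{lem:peak-invariant}) since $X$ is a solution for $(H,\emptyset)$.
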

\begin{proof}
Suppose that $X$ is an allocation in the strong core of some housing market~$H$. Let $\S$ denote the set of strongly connected components in the subgraph~$D^H[U]$ spanned by the set~$U$ of undominated arcs. For each $S \in \S$, define the arc set $X_{|S}=X \cap (S \times S)$.
The following facts follow from Theorem~\ref{thm:QW-char}:
\begin{itemize}
    \item[(i)] $X_{|S}$ is an allocation in the submarket~$H_{|S}$ for each set~$S \in \S$;
\item[(ii)] if $H_{|S}$ admits an allocation contained in~$U$, then  $X_{|S} \subseteq U$;\item[(iii)] if $S$ is an absorbing set in~$D^H[U]$, then 
$X_{|S} \subseteq U$. 
\end{itemize}

Next, we are going to show that Algorithm~\ref{alg:SCFA} will reach line~\ref{line:return-without-recursion} at some point where the set family~$\T$ contains all absorbing sets in~$D^H[U]$ and possibly some other sets from~$\S$.

To see this, first notice that each absorbing set of~$D^H[U]$ is added to the family~$\T$ on line~\ref{line:add-S-to-T}, due to fact~(iii) above.
Second, by definition, no arc leaving an absorbing set~$S$ of~$D^H[U]$ can be undominated. 
Since agents' preferences are weak orders, all arcs leaving a given agent~$a \in S$ are dominated by all undominated arcs leaving~$a$. Hence, $E_S=E_{S,\T}$ for every possible set family~$\T$ that contains~$S$. This means that no absorbing set is ever added to~$\R$ on line~\ref{line:add-S-to-R}.
Hence, at some point the algorithm reaches line~\ref{line:return-without-recursion}, and when this happens, the set family~$\T$ contains all absorbing sets of~$D^H[U]$, and possibly some other sets from~$\S$ as well. 

Consider the value of the set family~$\T$ at this point, and the set $T^\star=\bigcup_{T \in \T}T$  of agents.
Notice that $T^\star$ must be \emph{closed} in the sense that there cannot be an undominated arc leaving~$T^\star$: indeed, the existence of such an arc~$(a,b)$ would imply that no arc leaving~$a$ would be contained in~$E_{S,\T}$ for the set $S \in \T$ containing~$a$, and hence, $S$ would have been added to~$\R$  on line~\ref{line:add-S-to-R}, resulting in the deletion of~$S$ from~$\T$. 

Using again that agents' preferences are weak orders, the observation that  $T^\star$ is closed, i.e., no undominated arc leaves~$T^\star$, 
implies that all arcs leaving~$T^\star$ are dominated by all arcs in~$U$. 
Therefore, $E_{S,\T} = E_{S} \subseteq U$ for each $S \in \T$.
Moreover, due to line~\ref{line:add-S-to-T}, 
 the submarket~$H_{|S}$ admits an allocation consisting only of undominated arcs. By facts~(i) and~(ii) above, this means that $X_{|S}$ is also an allocation consisting only of undominated arcs. Since there are no forbidden arcs, this means that $X_{|S} \subseteq E_{S,\T}$. Hence, there exists an appropriate choice for the algorithm to store the allocation~$X_{|S}$ on line~\ref{line:Tvalid-def} for each $S \in \T$; assume henceforth that this happens.

In the case $N=T^\star$, this implies that the allocation $\bigcup_{S \in \T} X_{|S}$ returned on line~\ref{line:return-without-recursion} is exactly the allocation~$X$. If $N'=N \setminus T^\star \neq \emptyset$, then $X \cap (N' \times N')$ is clearly in the strong core of~$H_{|N'}$. 
Hence, by an inductive argument we may assume that the recursive call on line~\ref{line:recurse} returns the allocation $X'=X \cap (N' \times N')$ for the submarket~$H_{|N'}$, and thus, the algorithm outputs the allocation 
$\bigcup_{S \in \T} X_{|S} \cup X'=X$ on line~\ref{line:outputX}.
\end{proof}

Contrasting Proposition~\ref{prop:alg-on-weakorders}, we now show that if agents' preferences are partial orders, then there may exist allocations in the strong core that can never be returned by Algorithm~\ref{alg:SCFA}.
Such an instance is presented in Example~\ref{ex4}.

\begin{example}
    \label{ex4}
Consider the following market~$H^6$ over agent set $N=\{a,b,c,d_1,d_2\}$; see its underlying graph without loops on Figure~\ref{fig:ex4}. Let the acceptability sets be defined as $A(a)=A(b)=A(c)=N \setminus \{d_2\}$ and $A(d_1)=A(d_2)=\{d_1,d_2\}$. 
For each agent~${x \in N}$ and $y \in A(x) \setminus \{x\}$, we let $y \succ_x x$;
additionally, we set $d_1 \succ_a b$, $d_1 \succ_b c$ and $d_1 \succ_c a$. 

Consider how Algorithm~\ref{alg:SCFA} runs with $H^6$ as its input without forbidden arcs.  
The set of undominated arcs in the market is $U=\{(b,a),(a,c),(c,a),(d_1,d_2)\} \cup \{(x,d_1):x \in N \setminus \{d_1\}\}$. 
The set of strongly connected components in the subgraph spanned by all undominated arcs is $\S=\{S_1,S_2\}$ where $S_1=\{a,b,c\}$ and  $S_2=\{d_1,d_2\}$. 
 Algorithm~\ref{alg:SCFA} computes on line~\ref{line:def-A_S} the arc sets~$E_{S_1}=\{(a,c),(c,b),(b,a)\}$ and $E_{S_2}=\{(d_1,d_2),(d_2,d_1)\}$, and finds that both~$H^6_{|S_1}$ and~$H^6_{|S_2}$ admit an allocation in~$E_{S_1}$ and in~$E_{S_2}$, respectively. Thus, 
Algorithm~\ref{alg:SCFA} initializes the family~$\T$ by setting $\T=\{S_1,S_2\}$.

Starting the iteration on lines~\ref{line:iter-begin}--\ref{line:iter-end}, the algorithm first computes the arc sets~$E_{S,\T}$ for both $S \in \T$, obtaining  ${E_{S_1,\T}=E_{S_1}}$ and ${E_{S_2,\T}=E_{S_2}}$ and storing the allocations~$X_{S_1}=\{(a,c),(c,b),(b,a)\}$ and~$X_{S_2}=\{(d_1,d_2),(d_2,d_1)\}$ that are the unique allocations for~$H^6_{|S_1}$ in~$E_{S_1,\T}$ and for~$H^6_{|S_2}$ in~$E_{S_2,\T}$, respectively; note that no other allocations can be stored at this point. The algorithm finds on line~\ref{line:return-without-recursion} that $N=T^\star=S_1 \cup S_2$, and hence returns the allocation $X_{S_1} \cup X_{S_2}$. 

Even though Algorithm~\ref{alg:SCFA} can only output the single allocation described above, the strong core of~$H^6$ contains another allocation: namely, the allocation~$\{(a,b),(b,c),(c,a)\} \cup X_{S_2}$. 
\end{example}

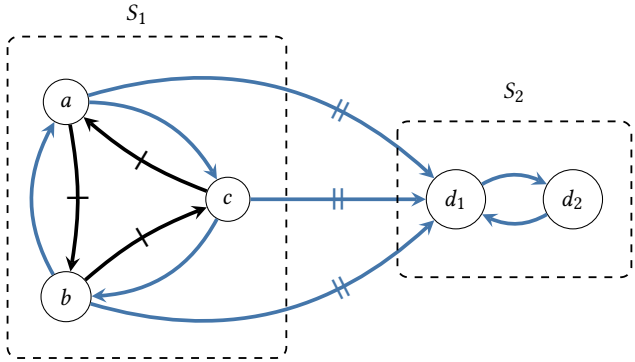
\begin{figure}[ht]
\centering
\resizebox{\columnwidth}{!}{
\begin{tikzpicture}[scale=1.3]
	\node[draw,circle,inner sep=4pt] (a) at (0,1) {$a$};
  	\node[draw,circle,inner sep=4pt] (b) at (0,-1) {$b$};
  	\node[draw,circle,inner sep=4pt] (c) at (1.66,0) {$c$};
	\node[draw,circle,inner sep=4pt] (d1) at (4,0) {$d_1$};  	
    \node[draw,circle,inner sep=4pt] (d2) at (5.2,0) {$d_2$};  	
  	\draw[line width=1.4pt,arrows={-stealth},color=undomcolor] (b) to [bend left=10] node[midway,sloped] {$\boldsymbol{|}$} (c);
  	\draw[line width=1.4pt,arrows={-stealth},color=undomcolor] (c) to [bend left=10] node[midway,sloped] {$\boldsymbol{|}$} (a);
  	\draw[line width=1.4pt,arrows={-stealth},color=undomcolor] (a) to [bend left=10] node[midway,sloped] {$\boldsymbol{|}$} (b);
    \draw[line width=1.4pt,arrows={-stealth},color=domcolor] (a) to [bend left=30] (c);
  	\draw[line width=1.4pt,arrows={-stealth},color=domcolor] (b) to [bend left=30] (a);
  	\draw[line width=1.4pt,arrows={-stealth},color=domcolor] (c) to [bend left=30] (b);
  	\draw[line width=1.4pt,arrows={-stealth},color=domcolor] (a) to [bend left=30] node[pos=0.7,sloped] {$\boldsymbol{|}\boldsymbol{|}$} (d1);
  	\draw[line width=1.4pt,arrows={-stealth},color=domcolor] (b) to [bend right=30] node[pos=0.7,sloped] {$\boldsymbol{||}$} (d1);
  	\draw[line width=1.4pt,arrows={-stealth},color=domcolor] (c) to node[pos=0.5,sloped] {$\boldsymbol{|}\boldsymbol{|}$}  (d1);
  	\draw[line width=1.4pt,arrows={-stealth},color=domcolor] (d1) to [bend left=30] (d2);
  	\draw[line width=1.4pt,arrows={-stealth},color=domcolor] (d2) to [bend left=30] (d1);
  	\draw[rectangle,line width=0.7pt,dashed,rounded corners,above=2pt] (-0.6,1.6) rectangle (2.26,-1.7) {};
  	\node[] at (0.73,1.9) {$S_1$};

    \draw[rectangle,line width=0.7pt,dashed,rounded corners,] (3.4,0.8) rectangle (5.8,-0.8) {};
  	\node[] at (4.6,1.1) {$S_2$};
\end{tikzpicture}
}
\Description{The underlying graph of housing market~$H^6$ defined in Example~\ref{ex4}.}
\caption{The underlying graph of housing market~$H^6$ defined in Example~\ref{ex4}.
}
\label{fig:ex4}
\end{figure}
\section{Additional material for Section~\ref{sec:properties}}

\subsection{Proof of Lemma~\ref{lem:strongcore-solutions-incomp}}
\label{app:proof-of-lemSCincomp}
\lemstrongcoresolutionsincomp*

\begin{proof}
Suppose for the sake of contradiction that $Y \succ_a X$ for some agent~$a \in N$. 
We build a walk~$W$ starting at agent~$a$ using an iterative procedure as follows. The procedure distinguishes between two states, \emph{$X$-preferring} and \emph{$Y$-preferring}, starting initially in a $Y$-preferring state and with $W$ containing zero arcs. 
At each step, we take the agent~$b$ at the end of~$W$ and choose the next arc of~$W$ among the arcs leaving~$b$: 
\begin{itemize}
    \item[(a)] if $Y \succ_b X$, we pick~$(b,Y(b))$ and we switch to (or remain in) the $Y$-preferring state;
    \item[(b)] if $X \succ_b Y$, we pick~$(b,X(b))$ and we switch to (or remain in) the $X$-preferring state;
    \item[(c)] if $X \sim_b Y$, and the procedure is currently in~$Z$-preferring mode for some $Z \in \{X,Y\}$, then we pick $(b,Z(b))$.
\end{itemize}
We stop this procedure at the point where $W$ ceases to be a directed path, because the last arc added to~$W$ points to some vertex already contained in~$W$, thus creating a cycle~$C$. 

Let $u$ be the last agent on~$W$ for whom $W \succ_u X$ or $W \succ_u Y$; possibly $u=a$, so $u$ is well-defined. This means that $u$ is the last agent at which step~(a) or~(b) above was applied. Observe that if $u$ is on~$C$, then $C$ blocks~$X$ or~$Y$, because $W(c) \succeq_c X(c)$ and $W(c) \succeq_c Y(c)$ holds for each $c$ on~$C$ by the definition of~$W$, and $u$ prefers~$C$ to~$X$ or to~$Y$. Hence, this contradicts $X,Y \in \SC(H)$.

Thus, we get that $u$ is not on~$C$, as shown in Figure~\ref{fig:walk_W}. 
Then the arcs put into~$W$ after and including~$(u,W(u))$ form the union of the cycle~$C$ and a path~$P$ leading from~$u$ to some agent~$c$ on~$C$, with $V(P) \cap V(C)=\{c\}$.
Suppose that $W(u)=X(u)$ (the case when $W(u)=Y(u)$ is symmetrical), so the procedure entered (or remained in) the $X$-preferring state at~$u$. By the definition of~$u$, after picking the arc~$(u,X(u))$ at~$u$, the procedure always applied step~(c) above, and thus remained in $X$-preferring mode. Consequently, $W(z)=X(z)$ for all agents $z$ that follow $u$ on~$W$. However, this means that $c$ is entered by two arcs of~$X$, one on~$P$ and one on~$C$ (the last arc put into~$W$). 
This contradicts the assumption that $X$ is an allocation, finishing our proof.
\begin{figure}[th!]
    \centering
    \begin{tikzpicture}[myNode/.style={circle, draw, inner sep=1.5pt, minimum size=24pt},
    myEdge/.style={draw,line width = 1pt, -{Stealth[length=3mm]},  decorate,decoration={snake,post length=3mm,amplitude=0.8mm,segment length=2.4mm}},
    myArc/.style = { draw,line width = 1pt, -{Stealth[length=3mm]}}]
        \node (a) at (1,0) [myNode] {$a$};
        \node (u) at (2.8,0) [myNode] {$u$};        
        \node (Wu) at (4.6,0) [myNode] {$W(u)$};
        \node (c) at (6.6,0) [myNode] {$c$}; 
        \node (CC) at (8.4,0) [myNode] {}; 
       \draw[myEdge] (a) to (u);
       \draw[myArc] (u) to %
       (Wu);
       \draw[myEdge] (Wu) to (c);
       \draw[myEdge] (c) to (CC);\draw[myArc,out=120, in=60] (CC) to node[midway,below=3pt] {$C$} (c); 
        \draw[line width = 1pt,decorate,decoration={brace,mirror,raise=14pt,amplitude=6pt}] (u.west) -- (c.east) node[midway,below=20pt] {$P$};
    \end{tikzpicture}
    \Description{Illustration of walk $W$ for Theorem~\ref{lem:strongcore-solutions-incomp}.}
    \caption{Illustration of walk $W$ for Theorem~\ref{lem:strongcore-solutions-incomp}. Paths are shown as zigzag lines.}
    \label{fig:walk_W}
\end{figure}
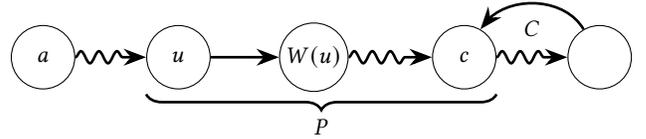
\end{proof}

\subsection{Proof of Theorem~\ref{thm:RI-holds}}
\label{app:proof-of-RI}
\thmRIholds*

\begin{proof}
Let $H'=(N,\{\succ'_a:a \in N\})$ be obtained from~$H$ as the result of a series of $(p,q_i)$-improvements for some agents $q_1,\dots,q_k$; let $Q=\{q_1,\dots,q_k\}$. 

Assume for the sake of contradiction that $X \succ_p X'$.

We will re-use the procedure described in the proof of Lemma~\ref{lem:strongcore-solutions-incomp}, 
applied to allocations~$X$ and~$X'$ with starting point~$p$ and starting in the $X$-preferring state; we use the preferences~$\succ'$ as given for~$H'$.
Let $W$ be the resulting walk. Note that if $W$ does not include an arc pointing to~$p$, then the arguments presented in the proof of Lemma~\ref{lem:strongcore-solutions-incomp} either yield a contradiction to the assumption that $X$ or~$X'$ is an allocation, or prove the presence of a cycle~$C$ that blocks~$X$ or~$X'$ in~$H'$. By $X' \in \SC(H')$, the only possibility is that $C$ blocks~$X$ in~$H'$. Given that $X \in \SC(H)$, we also know that $C$ does not block~$X$ in~$H$. 
By the definition of a $(p,q)$-improvement, we know that this can happen only if $C$ contains some of the arcs~$\{(q_i,p):q_i \in Q\}$, because  the only agents whose preferences differ in~$H$ and~$H'$ are the agents in~$Q$, and moreover, the only agent that can become more preferred by some $q_i \in Q$ when compared to some other agent is agent~$p$. %

Notice now that $(q_i,p) \in C$ for some $q_i \in Q$
implies $C=W$. Recall that $X \succ_p X'$; hence $C(p)=W(p)=X(p)$, and therefore $C$ is a cycle that blocks~$X'$ in~$H'$: all agents on~$C$ weakly prefer~$C$ to~$X'$ by the definition of~$W$, and $p$ strictly prefers~$C$ to~$X$.
This contradiction proves the theorem.
\end{proof}

\subsection{Integer Linear Program for the Strong Core}
\label{app:ILP}
In their paper, Bir\'o et al.~\cite{BKKV-mor} gave an incremental ILP formulation: they first gave an ILP for the core, then added a set of constraints to obtain an ILP for the set of so-called competitive allocations, and finally added another set of constraints for an ILP describing the strong core. 
Since our focus here is solely on the strong core, we show that we can omit certain constraints from this incremental description (namely, those that ensure that the allocation is competitive) and still obtain an ILP that defines the strong core, even when agents' preferences are partial orders. 

Recall our notation for the arc sets $E_{[\succ X]}$ and $E_{[\succeq X]}$ defined in Section~\ref{sec:mychar}.

Let $H=(N,\{\succ_a:a \in N\})$ be a housing market with partial orders with underlying graph $D^H=(N,E)$. Let~$|N|=n$. The ILP defining the strong core of~$H$ is shown in~\ref{ILP:strong-core}. 
For each $(i,j) \in E$, it contains a variable $y_{ij}$ with the following interpretation:
\begin{equation*}
    y_{ij} = \begin{cases}
        1 & \text{if $(i,j)$ is an arc in the allocation,} \\
        0 & \text{otherwise.}
    \end{cases}
\end{equation*}
A further set of variables, $p_i$ for each $i \in N$, 
can be thought of as the \emph{price} of~$i$.
\begin{equation}
\leqnomode
\tag{ILP$_{\texttt{sc}}$} 
\label{ILP:strong-core}
\end{equation}
\begin{align}
    \sum_{j: (i,j)\in E} y_{ij} &=1 && \forall i \in N, \label{eq:ILP_agent} \\
    \sum_{j: (j,i)\in E} y_{ji} &=1 &&  \forall i \in N, \label{eq:ILP_house}\\
   p_i + 1 &\leq p_j + n\cdot \sum_{k: j \preceq_i k} y_{ik} && \forall (i,j)\in E,\label{eq:ILP_price_vector_ineq} \\
     p_i &\leq p_j + n \cdot \sum_{k: j \prec_i k} y_{ik} && \forall (i,j) \in E,\label{eq:ILP_strong_core}\\
   y_{ij} &\in \{ 0, 1\} && \forall i \in N, \label{eq:ILP_binary} \\
    p_i &\in [n] && \forall i \in N.\label{eq:ILP_price_vector_def} 
\end{align}
Constraints (\ref{eq:ILP_agent}) and~(\ref{eq:ILP_house}) together with (\ref{eq:ILP_binary}) determine an allocation~$X$: defining $X(i)=j$ if and only if $y_{ij} = 1$, constraints~(\ref{eq:ILP_agent}) and (\ref{eq:ILP_house}) guarantee that~$X$ contains exactly one arc leaving and entering each agent~$i \in N$, respectively.
Constraints~(\ref{eq:ILP_price_vector_ineq}) and (\ref{eq:ILP_price_vector_def}) ensure that there is no strictly blocking cycle for~$X$, that is, $X$ is in the core.
To see this, recall that $X$ is in the core of~$H$ if and only if arcs of~$E_{[\succ X]}$ form an acyclic subgraph of $D^H$, which in turn happens if and only if there is a topological order of the agents in the subgraph of~$D^H$ spanned by~$E_{[\succ X]}$.
The existence of such a topological order is equivalent to the existence of values~$p_i$, $i \in N$, such that $p_i < p_j$ for each arc $(i,j) \in E_{[\succ X]}$, which is in turn guaranteed by constraints~(\ref{eq:ILP_price_vector_ineq}) and~(\ref{eq:ILP_price_vector_def}).
Finally, adding constraints~(\ref{eq:ILP_strong_core}) narrows down the set of allocations that correspond to the feasible solutions of our ILP to the strong core.

The correctness of~\ref{ILP:strong-core} is expressed by Proposition~\ref{prop:ILP_correct} whose proof relies on Lemma~\ref{lem:core-vs-prices} and 
follows ideas from~\cite{BKKV-mor}. We remark that the ILP formulations of Bir\'o et al.~\cite{BKKV-mor} that describe the core and the set of competitive allocations also remain sound under the generalization that agents' preferences are partial orders.%

To show the correctness of~\ref{ILP:strong-core}, we need the following lemma. 
\begin{lemma}
\label{lem:core-vs-prices}
An allocation $X$ is in the strong core of~$H$ if and only if there exist prices~$p_i \in [n]$ for each $i \in N$ such that
\begin{itemize}
\item $p_i < p_j$ for each $(i,j) \in E_{[\succ X]}$, and 
\item $p_i \leq p_j$ for each $(i,j) \in E_{[\succeq X]}$.
\end{itemize}
\end{lemma}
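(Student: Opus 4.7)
The plan is to identify that blocking cycles for $X$ correspond exactly to cycles in the digraph $D_{[\succeq X]}$ that contain at least one arc from $E_{[\succ X]}$, and then reduce the existence of the claimed prices to the absence of such cycles. Indeed, a cycle $C$ blocks~$X$ iff every arc on~$C$ lies in~$E_{[\succeq X]}$ (so every agent on~$C$ weakly prefers~$C$ to~$X$) and at least one arc lies in~$E_{[\succ X]}$ (the agent witnessing strict improvement). Thus $X \in \SC(H)$ iff no such cycle exists in~$D_{[\succeq X]}$.

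For the easy direction $(\Leftarrow)$, I would suppose prices $\{p_i\}_{i \in N}$ satisfying the two conditions exist, take any candidate blocking cycle $v_1 \to v_2 \to \cdots \to v_k \to v_1$, and sum the inequalities $p_{v_j} \leq p_{v_{j+1}}$ around the cycle. This forces all prices equal, contradicting the requirement $p_{v_j} < p_{v_{j+1}}$ coming from the arc in $E_{[\succ X]}$. Hence no blocking cycle exists and $X \in \SC(H)$.

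For the harder direction $(\Rightarrow)$, I would assume $X \in \SC(H)$ and construct the prices explicitly via strongly connected components. The first observation is that if any strongly connected component~$S$ of~$D_{[\succeq X]}$ contained an internal arc from~$E_{[\succ X]}$, say $(i,j)$, then combining it with a $j$-to-$i$ path inside~$S$ (which exists by strong connectivity and uses only arcs of~$E_{[\succeq X]}$) yields a blocking cycle for~$X$, contradicting $X \in \SC(H)$. Consequently, every arc of~$E_{[\succ X]}$ goes between distinct SCCs. Contracting SCCs produces a DAG on at most $n$ nodes; I would take any topological order and define $p_i$ as the position (in $[n]$) of the SCC containing~$i$. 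Then arcs in~$E_{[\succeq X]}$ either stay inside an SCC (giving $p_i = p_j$) or go between SCCs consistently with the topological order (giving $p_i < p_j$), while arcs in~$E_{[\succ X]}$ always cross SCCs and thus satisfy $p_i < p_j$ strictly.

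The only mildly subtle point is confirming that inside an SCC \emph{no} arc of~$E_{[\succ X]}$ may appear, which is exactly the place where the strong-core assumption enters; this is the step I would write most carefully. Verifying that the prices lie in $[n]$ is immediate since the number of SCCs is at most~$n$, and the two displayed inequalities follow directly from the topological ordering.
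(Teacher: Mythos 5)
Your proposal is correct and follows essentially the same route as the paper: the backward direction sums the weak inequalities around a putative blocking cycle to force equality and contradict the strict arc, and the forward direction contracts the strongly connected components of $D_{[\succeq X]}$, argues that no arc of $E_{[\succ X]}$ can lie inside a component (else it would close into a blocking cycle), and reads off the prices from a topological order of the resulting DAG. The one step you flag as needing care is exactly the step the paper also singles out, so no gap remains.
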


\begin{proof}
    First, if $X \in \SC(H)$, then there is no cycle consisting of arcs in~$E_{[\succeq X]}$ that also contains an arc in~$E_{[\succ X]}$, as such a cycle would be a blocking cycle.
    Hence, no strongly connected component of~$D_{[\succeq X]}=(N,E_{[\succeq X]})$ may contain a cycle in~$E_{[\succ X]}$, as that would give rise to a blocking cycle for~$X$. 
    
    Let $D'$ be the digraph obtained by contracting each strongly connected component of~$D_{[\succeq X]}$. 
    Let $\varphi(i)$ denote the vertex of~$D'$ corresponding to some $i \in N$.
    On the one hand, we know that $D'$ is acyclic due to its definition. 
    On the other hand, we know that for each arc~$(i,j) \in E_{[\succ X]}$ we have $\varphi(i) \neq \varphi(j)$, and so $(\varphi(i),\varphi(j))$ is an arc in~$D'$. 
    Notice also that $D'$ may further contain arcs that correspond to arcs  in~$E_{[\succeq X]}$ that run between different strongly connected components of~$D_{[\succeq X]}$; such arcs may exist because agents' preferences are partial orders.
    
    Since $D'$ is acyclic, it admits a topological ordering, so we can assign a value~$p_v \in [n]$ for each vertex~$v$ of~$D'$ such that $p_v <p_{v'}$ holds for each arc~$(v,v')$ of~$D'$. This gives rise to prices~$p_i$ for $i \in N$ by setting $p_i=p_{\varphi(i)}$. In this way, we have $p_i <p_j$ for each $(i,j) \in E_{[\succ X]}$, because for such arcs, $(\varphi(i),\varphi(j))$ is an arc of~$D'$. 
    We also know that $p_i \leq p_j$ for each $(i,j) \in E_{[\succeq X]}$, because either $\varphi(i)=\varphi(j)$, which implies $p_i=p_j$, or again $(\varphi(i),\varphi(j))$ is an arc of~$D'$, implying $p_i<p_j$.

    For the other direction of the proof, assume that there are prices~$p_i$ for each $i \in N$ fulfilling the conditions of the lemma. We show that $X$ cannot admit a blocking cycle. Assume for the sake of contradiction that $C$ is a cycle that blocks~$X$. Then $C \subseteq E_{[\succeq X]}$ implies that $p_i \leq p_j$ for each $(i,j) \in C$, which can only happen if $p_i=p_j$ for each $(i,j) \in C$, since $C$ is a cycle.
    However, by the definition of a blocking cycle, $C$ contains an arc~$(i',j')$ of~$E_{[\succ X]}$, for which $p_{i'}<p_{j'}$ follows from the condition on the prices as stated by the lemma, yielding a contradiction. Thus, $X$ is indeed in the strong core of~$H$.
\end{proof}

\begin{proposition}
\label{prop:ILP_correct}
There exists a feasible solution for \textup{\ref{ILP:strong-core}} if and only if the housing market~$H$ admits an allocation in the strong core.
    
\end{proposition}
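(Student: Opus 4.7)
The plan is to derive Proposition~\ref{prop:ILP_correct} as a direct translation of Lemma~\ref{lem:core-vs-prices}, showing that the constraints of \ref{ILP:strong-core} encode precisely an allocation together with a valid price vector in the sense of that lemma.

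First, I would observe that constraints~(\ref{eq:ILP_agent}), (\ref{eq:ILP_house}), and~(\ref{eq:ILP_binary}) specify an allocation. Setting $X = \{(i,j) \in E : y_{ij}=1\}$, the binary variables together with the two degree-constraint families guarantee that each agent has exactly one outgoing and one incoming arc in $X$; write $X(i)$ for the unique $k_0$ with $y_{i k_0}=1$.

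The heart of the argument is to analyze constraints~(\ref{eq:ILP_price_vector_ineq}) and~(\ref{eq:ILP_strong_core}) on a per-arc basis. For each $(i,j) \in E$, the sum $\sum_{k:\, j \preceq_i k} y_{ik}$ appearing in~(\ref{eq:ILP_price_vector_ineq}) equals $1$ exactly when $X(i) \succeq_i j$, in which case $p_i + 1 \leq p_j + n$ is vacuous because $p_i, p_j \in [n]$; it equals $0$ exactly when $j \succ_i X(i)$, i.e., $(i,j) \in E_{[\succ X]}$, in which case~(\ref{eq:ILP_price_vector_ineq}) forces $p_i < p_j$. Analogously, $\sum_{k:\, j \prec_i k} y_{ik}$ in~(\ref{eq:ILP_strong_core}) equals $1$ when $X(i) \succ_i j$ and equals $0$ when $X(i) \not\succ_i j$, i.e., $(i,j) \in E_{[\succeq X]}$, in which case~(\ref{eq:ILP_strong_core}) becomes $p_i \leq p_j$. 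Modulo the trivial cases, the big-$n$ constraints are therefore exactly the two inequalities of Lemma~\ref{lem:core-vs-prices}.

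Both directions of the equivalence follow immediately. Given a feasible solution $(y, p)$ of \ref{ILP:strong-core}, the allocation $X$ together with prices $p_i \in [n]$ satisfies the hypothesis of Lemma~\ref{lem:core-vs-prices}, and so $X \in \SC(H)$. Conversely, if $X \in \SC(H)$, Lemma~\ref{lem:core-vs-prices} supplies suitable prices $p_i$, and setting $y_{ij}=1$ iff $(i,j) \in X$ yields a feasible solution. There is no genuinely hard step in the proof; the only care required is to verify the case analysis in the big-$n$ constraints using only the definitions of $\succ$ and $\succeq$ valid for arbitrary partial orders, without implicitly relying on the transitivity of incomparability, which is the precise property distinguishing partial from weak orders and the point where the original argument of Bir\'o et al.\ would most plausibly break down.
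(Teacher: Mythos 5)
Your proposal is correct and follows essentially the same route as the paper's own proof: both directions are obtained by translating feasibility of \textup{\ref{ILP:strong-core}} into the existence of an allocation plus a price vector satisfying the two conditions of Lemma~\ref{lem:core-vs-prices}, via the observation that the sums in constraints~(\ref{eq:ILP_price_vector_ineq}) and~(\ref{eq:ILP_strong_core}) are indicators of whether $X(i) \succeq_i j$ and $X(i) \succ_i j$, respectively, with the big-$n$ term making the constraint vacuous when the indicator is~$1$. Your per-arc case analysis matches the paper's verification step for step, so no further comparison is needed.
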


\begin{proof}
    Let $X$ be an allocation in the strong core of~$H$. Set~$y_{ij}$ as~$1$ whenever each~$(i,j) \in X$, and~$0$ otherwise.
    Then (\ref{eq:ILP_agent}), (\ref{eq:ILP_house}), and~(\ref{eq:ILP_binary}) clearly hold.
    Define prices $p_i$ for each $i \in N$ so that they satisfy the conditions of
    Lemma~\ref{lem:core-vs-prices}. Since $ p_i \in [n]$ for each $i \in N$, constraint~(\ref{eq:ILP_price_vector_def}) is satisfied.

    We show that inequality (\ref{eq:ILP_price_vector_ineq}) holds for each $(i,j) \in E$. 
     On the one hand, if $(i,j) \in E_{[\succ X]}$, then this is clear by our assumption on the prices, as we have $p_i+1 \leq p_j$.
     On the other hand, if $(i,j) \notin E_{[\succ X]}$, that is, $X(i) \succeq_i j$, then $X(i)$ is contained in the set~$\{k: k \succeq_i j \}$. 
     Therefore, we get $\sum_{k: k \succeq_i j } y_{ik} \geq 1$. Since both $p_i$ and~$p_j$ are in~$[n]$, this implies 
    \begin{align*}
        p_i + 1 \leq n + 1 \leq p_j + n \leq p_j + n\cdot \sum_{k: k \succeq_i j} y_{ik}
    \end{align*}
    as desired, proving~(\ref{eq:ILP_price_vector_ineq}).

   Next, we show that (\ref{eq:ILP_strong_core}) holds for each $(i,j) \in E$. 
   On the one hand, if $(i,j) \in E_{[\succeq X]}$, then $p_i \leq p_j$ holds by our choice  of the prices, which implies (\ref{eq:ILP_strong_core}). 
   On the other hand, if $(i,j) \notin E_{[\succeq X]}$, that is, $X(i) \succ_i j$, then $X(i)$ is contained in the set~$\{k:k \succ_i j\}$, and therefore we get 
   $\sum_{k: k \succ_i j} y_{ik} \geq 1$, which implies 
   \[p_i \leq n \leq n \cdot \sum_{k: k \succ_i j} y_{ik} \leq p_j+ n \cdot \sum_{k: k \succ_i j} y_{ik}\]
   proving~(\ref{eq:ILP_strong_core}). 

    Suppose now that values~$y_{ij}$ for each $(i,j) \in E$ together with values $p_i$ for each $i \in N$ yield a feasible solution to~\ref{ILP:strong-core}. Constraints 
    (\ref{eq:ILP_agent}), 
    (\ref{eq:ILP_house}), and
    (\ref{eq:ILP_binary}) determine an allocation~$X$ where an arc $(i,j) \in E$ is contained in~$X$ if and only if $y_{ij}=1$.

    We show that the values~$p_i$, $i \in N$, fulfill the conditions of Lemma~\ref{lem:core-vs-prices}. By constraint~(\ref{eq:ILP_price_vector_def}), we know $p_i \in [n]$ for each $i \in N$.
    
    To show the first condition, assume that $(i,j) \in E_{[\succ X]}$. 
    Then $j \succ_i X(i)$, or in other words, $X(i) \not \succeq_i j$. Thus, 
    $X(i)$ is not contained in $\{k: k \succeq_i j\}$.
    Therefore, $\sum_{k:k \succeq_i j} y_{ik}=0$. Hence by (\ref{eq:ILP_price_vector_ineq}) we know \[p_i+1 \leq p_j + n \cdot \sum_{k:k \succeq_i j} y_{ik}=p_j,\]
    that is, $p_i<p_j$. This proves that the first condition of Lemma~\ref{lem:core-vs-prices} is satisfied.

    Second, assume that $(i,j) \in E_{[\succeq X]}$. 
    Then $j \succeq_i X(i)$, or in other words, $X(i) \not \succ_i j$. Thus, 
    $X(i)$ is not contained in $\{k: k \succ_i j\}$.
    Therefore $\sum_{k:k \succ_i j} y_{ik}=0$. Hence by (\ref{eq:ILP_strong_core}) we know \[p_i \leq p_j + n \cdot \sum_{k:k \succ_i j} y_{ik}=p_j,\]
    that is, $p_i \leq p_j$. This proves that the second condition of Lemma~\ref{lem:core-vs-prices} is satisfied.
    Hence, by Lemma~\ref{lem:core-vs-prices}, $X$ is indeed in the strong core of~$H$.
\end{proof}

\end{appendices}

\fi

\end{document}